  \providecommand\BibTeX{{%
    \normalfont B\kern-0.5em{\scshape i\kern-0.25em b}\kern-0.8em\TeX}}}
\newtheorem{definition}{Definition}
\newtheorem{example}{Example}
\newtheorem{proposition}{Proposition}
\newtheorem{lemma}{Lemma}
\newtheorem{corollary}{Corollary}
\newtheorem{theorem}{Theorem}
\def\ket #1{\vert #1\rangle}
\def\bra #1{\langle #1\vert}
\newcommand{\ketbra}[2]{\ensuremath{\ket{#1}\!\bra{#2}}}
\newcommand{\braket}[2]{\ensuremath{\langle {#1} \vert {#2} \rangle}}
\newcommand{\tr}[0]{\textup{tr}}
\newcommand{\SPD}[0]{\textup{SPD}}
\newcommand{\CUT}[0]{U_{\text{\tiny{CUT}}}}
\begin{document}

\title{Automatic Test Pattern Generation for Robust Quantum Circuit Testing}

\author{Kean Chen}
\affiliation{
  \institution{Institute of Software, Chinese Academy of Sciences and University of Chinese Academy of Sciences, Beijing, China, and Department of Computer and Information Science, University of Pennsylvania, Philadelphia}
  \country{USA}
}
\email{keanchen.gan@gmail.com}

\author{Mingsheng Ying}
\affiliation{
  \institution{Centre for Quantum Software and Information, University of Technology Sydney}
  \country{Australia}
}
\email{Mingsheng.Ying@uts.edu.au}


\begin{abstract}
Quantum circuit testing is essential for detecting potential faults in realistic quantum devices, while the testing process itself also suffers from the inexactness and unreliability of quantum operations. This paper alleviates the issue by proposing a novel framework of automatic test pattern generation (ATPG) for robust testing of logical quantum circuits. We introduce the stabilizer projector decomposition (SPD) for representing the quantum test pattern, and construct the test application (i.e., state preparation and measurement) using Clifford-only circuits, which are rather robust and efficient as evidenced in the fault-tolerant quantum computation. However, it is generally hard to generate SPDs due to the exponentially growing number of the stabilizer projectors. To circumvent this difficulty, we develop an SPD generation algorithm, as well as several acceleration techniques which can exploit both locality and sparsity in generating SPDs. The effectiveness of our algorithms are validated by 1) theoretical guarantees under reasonable  conditions, 2) experimental results on commonly used benchmark circuits, such as Quantum Fourier Transform (QFT), Quantum Volume (QV) and Bernstein-Vazirani (BV) in IBM Qiskit. 
\end{abstract}

\begin{CCSXML}
<ccs2012>
<concept>
<concept_id>10010583.10010737.10010748</concept_id>
<concept_desc>Hardware~Test-pattern generation and fault simulation</concept_desc>
<concept_significance>500</concept_significance>
</concept>
<concept>
<concept_id>10010583.10010786.10010813.10011726</concept_id>
<concept_desc>Hardware~Quantum computation</concept_desc>
<concept_significance>500</concept_significance>
</concept>
</ccs2012>
\end{CCSXML}

\ccsdesc[500]{Hardware~Test-pattern generation and fault simulation}
\ccsdesc[500]{Hardware~Quantum computation}

\keywords{Quantum circuit, circuit testing, ATPG}


\maketitle

\section{Introduction}\label{sec-intro}

Most quantum algorithms, such as Shor’s algorithm~\cite{shor1994algorithms}, Grover’s algorithm~\cite{grover1996fast} and HHL algorithm~\cite{harrow2009quantum}, require execution on logical quantum circuits.
While the quantum error-correcting codes provide some degree of fault tolerance for logical quantum circuits,
the overall error rate remains non-negligible, especially at large scales. Therefore, developing effective testing methods for logical quantum circuits is essential for advancing large-scale quantum computing.

\begin{figure}[t]
\centering
\includegraphics[width=0.7\linewidth]{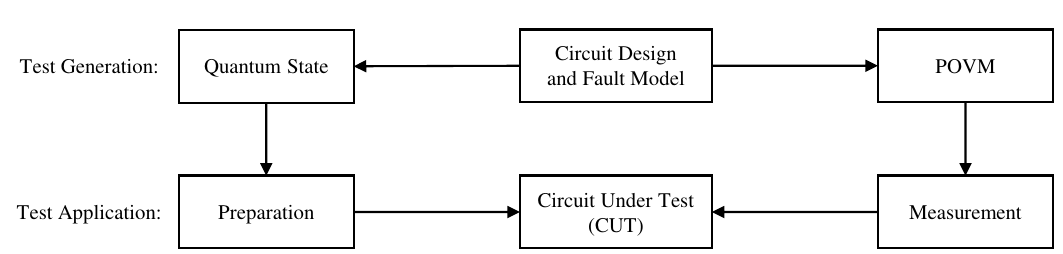}
\vspace{-0mm}
\caption{Components of quantum ATPG.}
\label{fig-fw}
\vspace{-3mm}
\end{figure}

\subsection{Challenges in Quantum ATPG}
For classical circuit testing, an automatic test pattern generation (ATPG) algorithm called the D-algorithm, was first proposed by Roth~\cite{roth1966diagnosis}, where a new logical value \(D\) is introduced to represent both the good and faulty circuit values. After that, many ATPG algorithms have been proposed and employed in industry. 
However, it is hard to directly adapt classical ATPG algorithms for quantum circuits. There are several challenges that impede quantum ATPG algorithms:
\begin{enumerate}
\item \textbf{Computational complexity} --- computation time required to generate test patterns. At first glance, one may assume that the reversibility of quantum circuits makes the fault-excitation and fault-propagation always possible and straightforward. After all, the reversibility property ensures that fault signals can traverse the circuit bidirectionally, reaching both the primary input and output of the circuit. However, this approach requires matrix multiplication with sizes exponential in the number of qubits, leading to significant inefficiency in computational complexity.
\item \textbf{Test time complexity} --- number of experiments required to obtain valid and reliable testing results. The probabilistic nature and inherent uncertainty of quantum mechanisms pose challenges in efficiently extracting information from a given quantum device. For instance, fault effects may manifest as small rotations in specific directions or subtle perturbations in the phase, resulting in low distinguishability between faulty and fault-free gates. Detecting such faults requires conducting numerous repeated experiments to obtain valid results with a sufficiently high probability (see Section \ref{discriminate-gates} and Example~\ref{example-427322} for more details). Consequently, the associated test time complexity can become prohibitive.
\item \textbf{Test equipment complexity} --- complexity of the quantum circuit required to implement the test pattern (i.e., complexity of the quantum equipment for test application, including state preparation and measurement, see Fig. \ref{fig-fw}). Examining the impact of test equipment complexity is crucial. Consider a straightforward scenario: suppose the fault-site resides at the end of a circuit, and the state capable of exciting the fault is simply \(\ket{0}\). In this case, the primary input state should be \(U^{-1}\ket{0}\), ensuring that the state-preparation circuit can trivially be \(U^{-1}\), where \(U\) denotes the unitary corresponding to the Circuit Under Test (CUT). Interestingly, both \(U^{-1}\) and \(U\) share the same circuit complexity (up to a constant scalar). This seemingly innocuous detail has far-reaching implications. The complexity of the test equipment can be remarkably high, potentially rivaling that of the CUT itself. Furthermore, due to this elevated complexity, the test equipment is susceptible to non-negligible errors --- often referred to as State Preparation and Measurement (SPAM) errors~\cite{harper2020efficient}. These errors render the test equipment itself faulty. Consequently, the complexity of the test equipment not only impacts efficiency but also significantly influences the robustness and reliability of the testing process. \label{item-5261712}
\end{enumerate}

\subsection{Related Works}
These challenges have been partially addressed or alleviated in several seminal works on quantum ATPG. Paler et al.~\cite{paler2012detection} proposed the Binary Tomographic Test (BTT) which meets the probabilistic nature of quantum circuits. A BTT is a pair of test input state and test measurement. They first simulate both fault-free and faulty quantum circuits on the BTTs. Then the CUT is executed multiple times with the same BTTs. By analyzing the output distributions and comparing them to the simulation results, faults can be detected. However, in \cite{paler2012detection}, the choice of test input states and test measurements is limited to the computational basis. While this approach simplifies test equipment complexity, it suffers from inefficiencies in test time complexity. Bera~\cite{bera2017detection} improved the BTT algorithm to account for the variety of quantum fault models. He adopts the unitary discrimination protocol~\cite{acin2001statistical} with Helstrom measurement~\cite{helstrom1969quantum}, where the input state and measurement are adaptively selected based on the specific fault being targeted. As a result, an arbitrary single fault can be detected with high probability, significantly improving test time complexity. However, it’s worth noting that Bera’s approach~\cite{bera2017detection} sacrifices test equipment complexity, making it less robust in the context of quantum circuit testing.

On the other hand, Randomized Benchmarking (RB)~\cite{emerson2005scalable,magesan2011scalable}, a widely used technique in quantum information community, has achieved great successes in assessing quantum gate. One crucial advantage of randomized benchmarking lies in its robustness against State Preparation and Measurement (SPAM) errors. However, RB relies on the gate-independent error assumption (or its weaker variant). Unfortunately, this assumption does not hold universally for logical (fault-tolerant) quantum circuits, since the implementations, and thus the fault models, of different logical gates vary greatly~\cite{fowler2012surface,zhou2000methodology}.
Also, RB requires the ability to dynamically apply different gate sequences. This approach is well-suited for certain physical implementations, such as superconducting quantum circuits, where gates are dynamic pulses sequentially applied to reused physical qubits. However, photonic devices operate differently. Their gates are fixed hardware components integrated onto a chip~\cite{pelucchi2022potential,wang2020integrated}, resembling classical circuits. Due to this fixed gate structure, the assumption of a dynamic gate sequence may not hold for photonic devices.
Moreover, faults can arise due to the integration of specific gates within a quantum circuit. These faults may not be solely dependent on an individual gate but can emerge when the gate is integrated into a circuit and interacts with other gates~\cite{harper2020efficient,harris2014efficient}. In such cases, RB is not sufficient because it does not preserve or consider the overall circuit structure that we are concerned with. Therefore, unlike RB, which primarily focuses on testing individual quantum gates, there is a need for effective testing of an entire logical quantum circuit as a whole.

\begin{figure}[t]
\centering
\subfloat[Stuck-at-0 fault]{
  \includegraphics[width=0.3\linewidth]{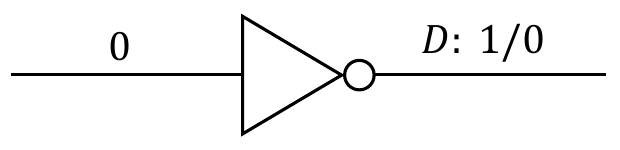}
  }\quad\quad
\subfloat[Missing gate fault]{
  \includegraphics[width=0.32\linewidth]{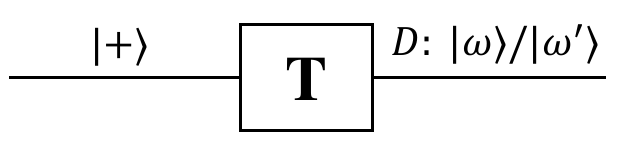}
  }
\caption{An example of the quantum analogue of classical \(D\)-value, where \(\ket{\omega}/\ket{\omega'}\) corresponds to the Helstrom measurement for distinguishing quantum states \(T\ket{+}\) and \(\ket{+}\)}
\label{fig-D}
\vspace{-3mm}
\end{figure}

\subsection{Main Idea}\label{sec-5271640}
In this paper, we propose a novel framework of quantum ATPG.
We begin by considering an quantum analogue of the classical D-algorithm. Suppose at the fault-site, say quantum gate \(U_i\), we have a fault model \(U'_i\). By the unitary discrimination protocol~\cite{acin2001statistical} and Helstrom measurement~\cite{helstrom1969quantum}, we can obtain the input state \(\ket{\psi}\) and projective measurement \(\{\ketbra{\omega}{\omega},\ketbra{\omega'}{\omega'}\}\) for optimal distinguishability between \(U_i\) and \(U'_i\) (i.e., ensuring optimal test time complexity). This measurement is analogue to the  \(D\)-value in the classical D-algorithm. Indeed, it can be seen as a composite ``logical value'' \(\ket{\omega}/\ket{\omega'}\) where the first entry \(\ket{\omega}\) represents the ``value'' of fault-free gate and the second entry \(\ket{\omega'}\) represents the ``value'' of faulty gate. 
A simple example is shown in Fig. \ref{fig-D}. 
Then, similar to the D-algorithm, we propagate \(\ket{\psi}\) and \(\ket{\omega}/\ket{\omega'}\) to the primary input and output of the circuit to obtain the test pattern. 

\begin{figure}[t]
    \centering
    \includegraphics[width=0.7\linewidth]{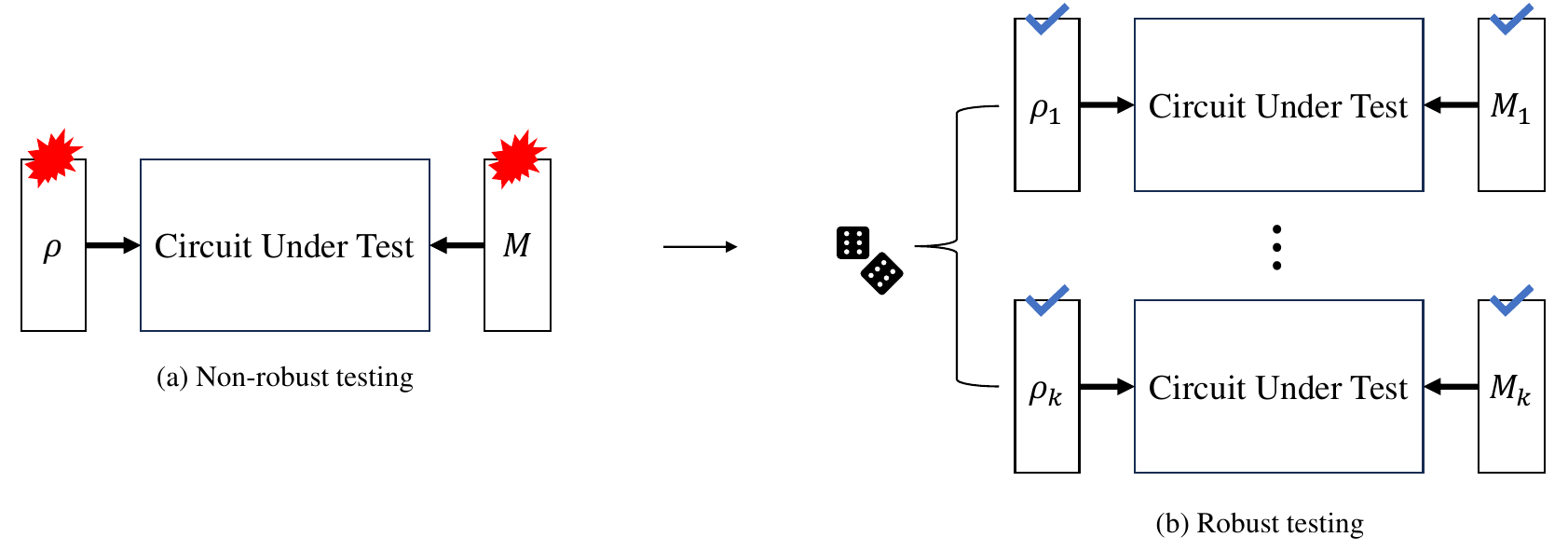}
    \caption{(a) Non-robust testing: the test pattern \((\rho,M)\) is directly implemented, which is susceptible to SPAM errors (see the challenge (\ref{item-5261712})). (b) Robust testing: the test pattern is decomposed into simpler ones (i.e., \((\rho_i,M_i)\)), which are implemented using Clifford-only circuits, and are then combined through classical randomness.}
    \label{fig-5261603}
\end{figure}

However, as previously mentioned, a critical challenge lies in the test equipment complexity and the necessity for robustness in test application (see the above discussion on challenge (3)). To address this issue, we decompose the high-complexity test pattern into a set of simpler test patterns ------ each simpler test pattern is implemented via Clifford-only circuits, and these test patterns are combined through classical randomness, as shown in Fig.~\ref{fig-5261603}. For example, the direct implementation of the test patterns on a QFT\_10 circuit leads to an average circuit size (i.e., the test equipment complexity) of 247, while our method leads to an average circuit size of only 29 (more details can be found in Table~\ref{table-526039}). Moreover, our method only uses Clifford circuits, which exhibit significant advantages in terms of both efficiency and robustness, as shown below:
\begin{enumerate} 
\item[(i)] The Clifford gates are nearly ideal~\cite{bravyi2005universal} in the logical (fault-tolerant) quantum circuits based on stabilizer codes, since the Clifford gates have rather robust and cheap implementation~\cite{fowler2009high,zhou2000methodology}.
\item[(ii)] Any Clifford operation can be efficiently synthesized to a relatively small Clifford circuit with size no more than \(O(n^2/\log n)\)~\cite{aaronson2004improved,bravyi2021clifford}, where \(n\) is the number of qubits. 
\item[(iii)] The Clifford-universal gate set (\textit{e.g.} Hadamard, Phase and controlled-NOT) is finite and discrete. Thus the Clifford gates can be carefully calibrated on specific quantum devices. This contrasts with the continuous rotation gates such as those in the QFT circuit.
\end{enumerate}
The above advantages ensure that using the Clifford circuits for test application is simpler and more robust against SPAM error than using general quantum circuits. It should be noted that our robust testing method introduces an additional overhead on the number of experiments. This can be viewed as a trade off between quantum resources and classical resources. Specifically, we transform the cost of quantum resources (i.e., complexity of quantum test equipment) to the cost of classical resources (i.e., repeated experiments sampled from classical randomness). This trade off is valuable in quantum circuit testing as the classical resources are more available and reliable for the near-term devices.

Based on this observation, we introduce the stabilizer projector decomposition (SPD) to represent the test pattern. By applying a sampling algorithm on the SPD, the test application can be constructed using Clifford-only circuits with classical randomness. 
The number of required experiments in our sampling algorithm is explicitly related to the 1-norm metric of SPD.
The optimal SPD (\textit{i.e.}, that with  minimal 1-norm) for given test pattern can be obtained by solving a convex optimization problem. But in general, it is practically intractable to solve this optimization problem due to the exponentially growing of the problem size. To mitigate this issue, we propose an SPD generation algorithm with several acceleration techniques, where both locality and sparsity are exploited in the SPD calculation. It is proved that the locality exploiting algorithm is optimal in the sense that it reveals the most locality of given SPD, under some mild conditions. To demonstrate the effectiveness and efficiency of our approach, the overall quantum ATPG framework and the proposed SPD generation algorithm are implemented and evaluated on several commonly used benchmark circuits in IBM Qiskit. The code is available at Github\footnote{\url{https://github.com/cccorn/Q-ATPG}}.

\subsection{Organization}
\begin{figure}
    \centering
    \includegraphics[width=0.7\linewidth]{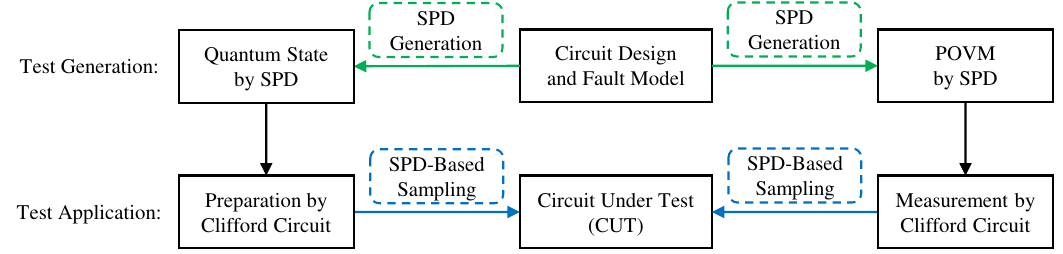}
    \caption{Our quantum ATPG framework. Linked to Fig.~\ref{fig-fw}, the quantum state and POVM are represented by SPDs, and the state preparation and measurement are implemented using Clifford circuits. The test application and test generation are conducted by the SPD-based sampling algorithm (see Section~\ref{sec-testapp}) and the SPD generation algorithm (see Section~\ref{sec-spdgen}), respectively.}
    \label{fig-5281343}
\end{figure}
Our quantum ATPG framework is visualized in Fig.~\ref{fig-5281343}. 
The paper is then organized as follows.
In Section~\ref{sec-intro} (i.e. this section), we discussed 3 main challenges in quantum circuit testing, i.e., computational complexity, test time complexity and test equipment complexity, followed by our main idea to handle these challenges. 
In Section~\ref{Preliminaries}, we
provide the necessary background knowledge, and 
further explain the challenges with a series of detailed examples, i.e., Example~\ref{example-151418}~--~\ref{example-527336}.
In Section~\ref{sec-testapp}, we present an SPD-based sampling algorithm (cf. the blue dashed box in Fig.~\ref{fig-5281343}), targeting on test equipment complexity. 
In Section~\ref{sec-spdgen}, we present an efficient SPD generation algorithm (cf. the green dashed box in Fig.~\ref{fig-5281343}), targeting on both test time complexity and computational complexity.
In Section~\ref{sec-eva}, we experimentally evaluate the proposed method on different benchmark circuits.


\section{Preliminaries}\label{Preliminaries}
\subsection{Quantum Circuits}
Quantum circuits are made up of qubits (quantum bits), quantum gates and measurements.
Using the Dirac notation, a (pure) state of a single qubit is represented by \(\ket{\psi}=\alpha_0\ket{0}+\alpha_1\ket{1}\) with complex numbers \(\alpha_0\) and \(\alpha_1\) satisfying \(|\alpha_0|^2+|\alpha_1|^2=1\). 
More generally, a state of $n$ qubits can be written as a superposition of $n$-bit strings: 
\begin{equation}
\ket{\psi}=\sum_{x\in \{0,1\}^n}\alpha_x\ket{x}
\end{equation}
where complex numbers \(\alpha_x\) satisfies the normalization condition \(\sum_{x} |\alpha_x|^2=1\). 
If we identify a string \(x=x_1... x_n\in \{0,1\}^n\) with the integer   \(x=\sum_{i=1}^nx_i\cdot 2^{i-1}\), then this state can also be represented by the \(2^n\)-dimensional column vector \(\ket{\psi}=(\alpha_0,\ldots,\alpha_{2^n-1})^T\). In addition, we use \(\bra{\psi}\) to denote the row vector \((\alpha_0^*,\ldots,\alpha_{2^n-1}^*)\), where the symbol \(*\) represents the complex conjugate. Given an ensemble of pure states \(\{(p_i,\ket{\psi_i})\}\) where \(\sum_i p_i=1\), and suppose that the quantum system is in one of the \(\ket{\psi_i}\) with probability \(p_i\). We then refer to this system as being in a mixed quantum state. Such a state can be conveniently represented using a mathematical tool called the density operator \(\rho=\sum_i p_i\ketbra{\psi_i}{\psi_i}\). When the context is clear, the term ``quantum state'' will be used to mean either a pure state or a mixed state.

A quantum gate on $n$ qubits is mathematically described by a \(2^n\times 2^n\) unitary matrix \(U=\{u_{ij}\}\) satisfying \(U^\dag U=I_n\), where \(U^\dag\) stand for the conjugate transpose of \(U\) and \(I_n\) the identity matrix of dimension \(2^n\) (we may omit the subscript of \(I_n\) when it does not cause confusion). If a pure state \(\ket{\psi}\) (or mixed state \(\rho\)) is input to this gate, then the output from the gate is represented by the pure state \(U\ket{\psi}\) (or mixed state \(U\rho U^\dag\)). 
Some commonly used quantum gates are shown in Fig. \ref{fig-ex-qgate}.
Then, a quantum circuit is a sequence of quantum gates: \((U_1,\ldots ,U_d)\), where \(d\geq 1\), and \(U_i\,\, (1\leq i\leq d)\) are quantum gates. 
We will use \(U_{i:j}\,\, (i\leq j)\) to denote the unitary matrix  corresponding to the circuit slice from \(i\)-th gate to \(j\)-th gate; that is, \(U_{i:j}=U_j \ldots U_{i+1} U_{i}\). In this work, we assume that the elementary gate set under consideration is composed of Clifford gates and Pauli rotation gates (i.e., \(e^{i\theta P}\) where \(P\) is a Pauli operator and \(\theta \in [-\pi,\pi]\), more details can be found in Section~\ref{sec-stabilizer}). Detailed parameters of the benchmark circuits used in this paper are summarized in Table~\ref{table-bench}.

To extract classical information from a quantum state, we need to apply measurement on it. Mathematically, a quantum measurement can be described by a set of positive operators \(M=\{M_i\}\), where \(\sum_i M_i=I\). This set \(M\) is known as the Positive Operator-Valued Measure (POVM). When we perform a measurement \(M\) on a quantum state \(\rho\), we observe an outcome \(i\) with probability \(\tr(M_i\rho)\). 
Unlike the classical systems, the quantum states are generally disturbed after measurement, making it challenging to reuse the same quantum state for multiple measurements.
\begin{example}\label{example-151418}
Consider the measurement on a qubit in the computational basis: \(\{M_0,M_1\}\), where \(M_0=\ketbra{\omega}{\omega}\), \(M_1=\ketbra{\omega'}{\omega'}\), in which \(\ket{\omega}=\frac{1}{\sqrt{2}}e^{-i5\pi/16}\ket{0}+\frac{1}{\sqrt{2}}e^{i5\pi/16}\ket{1}\) and \(\ket{\omega'}=\frac{1}{\sqrt{2}}e^{i3\pi/16}\ket{0}+\frac{1}{\sqrt{2}}e^{-i3\pi/16}\ket{1}\). If we perform the measurement \(\{M_0,M_1\}\) on a qubit in state \(\ketbra{+}{+}\), where \(\ket{+}=\frac{1}{\sqrt{2}}\ket{0}+\frac{1}{\sqrt{2}}\ket{1}\), then we get outcome \(0\) with probability \(\tr(M_0\ketbra{+}{+})=\cos^2(5\pi/16)\), and get outcome \(1\) with probability \(\tr(M_1\ketbra{+}{+})=\sin^2(5\pi/16)\).
\end{example}

\begin{figure}
\centering
\hspace{-2.8mm}\subfloat{
\raisebox{-1mm}{Hadamard:\quad\,\,}
\scalebox{1.0}{
\Qcircuit @C=1.0em @R=0.65em @!R {
	 	\nghost{} & \qw & \gate{\mathrm{H}} &\qw & \qw& \nghost{}
}}
\raisebox{-1mm}{\(\frac{1}{\sqrt{2}}\begin{bmatrix}1&1\\1&-1\end{bmatrix}\)}
}
\par
\vspace{3mm}
\hspace{-8.1mm}\subfloat{
\raisebox{-1mm}{Pauli-X (NOT):\,\,\,}
\scalebox{1.0}{
\Qcircuit @C=1.0em @R=0.65em @!R {
	 	\nghost{} & \qw & \gate{\mathrm{X}} &\qw & \qw& \nghost{}
}}
\raisebox{-1mm}{\(\,\quad\begin{bmatrix}0&1\\1&0\end{bmatrix}\)}
}
\par
\vspace{3mm}
\hspace{-3.9mm}\subfloat{
\raisebox{-3mm}{Controlled-NOT:\,\,}
\scalebox{1.0}{
\Qcircuit @C=1.0em @R=0.65em @!R {
	 	\nghost{} & \qw & \ctrl{1} &\qw & \qw& \nghost{}\\
	 	\nghost{} & \qw & \targ &\qw & \qw& \nghost{}
}}
\raisebox{-3mm}{\(\begin{bmatrix}1&0&0&0\\0&1&0&0\\0&0&0&1\\0&0&1&0\end{bmatrix}\)}
}
\caption{Examples of single and multi-qubit quantum gates.}
\label{fig-ex-qgate}
\end{figure}


\subsection{Fault Models}\label{sec-5231621}
Let \(\CUT:=(\tilde{U}_1,\ldots ,\tilde{U}_d)\) be the circuit under test (CUT). Each gate \(\tilde{U}_i\) is specified by its fault-free version \(U_i\) and has the potential to be faulty. We assume that its fault model \(U'_i\) is known in advance. Our goal is to determine whether the given CUT is fault-free or faulty. 
In this work, we adopt \textit{the single-fault assumption} which is typically used in test generation and evaluation for both classical circuits~\cite{wang2006vlsi} and quantum circuits \cite{paler2012detection,bera2017detection}: the cause of a circuit failure is attributed to only one faulty gate. This assumption is reasonable for logical (fault-tolerant) quantum circuits, since the effect of noise on single logical gate is fairly small such that the probability of single fault occurring dominates the probability of multiple faults occurring~\cite{harper2019fault,postler2022demonstration}.
\begin{example}\label{example-151419}
Consider a simple example of the fault models. In the Quantum Fourier Transform (QFT) circuit (see Fig.~\ref{fig-152206}), the middle \(Z\)-axis-\(\pi/4\)-rotation gate has the potential to suffer from a missing gate fault. That is, the fault-free gate is \(U=R_Z(\frac{\pi}{4})=\begin{bmatrix}e^{-i\pi/8}&0\\0&e^{i\pi/8}\end{bmatrix}\), and its fault model is \(U'=I=\begin{bmatrix}1&0\\0&1\end{bmatrix}\).
\end{example}

\subsection{Discrimination of Quantum Gates}\label{discriminate-gates}

A basic step in quantum ATPG is to discriminate a gate $U_i$ from its faulty version $U_i^\prime$. This problem has been well studied by the quantum information community \cite{acin2001statistical,helstrom1969quantum,holevo1973statistical}.
A general manner is by inputting a mixed state \(\rho\) to this unknown gate and then performing the measurement to obtain the result. Mathematically, the distinguishability between \(U_i\rho U_i^\dag, U'_i\rho {U'}_i^{\dag}\), \textit{i.e.}, the trace distance \(tr(|U_i\rho U_i^\dag-U'_i\rho {U'}_i^{\dag}|)\) by the Holevo-Helstrom theorem~\cite{helstrom1969quantum,holevo1973statistical}, must be maximized. Due to the convexity of trace distance, the optimal input can always be achieved by a pure state \(\rho=\ketbra{\psi}{\psi}\). Thus the problem is equivalent to minimizing the inner product of \(U_i\ket{\psi}\) and \(U'_i\ket{\psi}\):
\begin{equation}
\min_{\ket{\psi}} \big|\bra{\psi}U_i^\dag U'_i\ket{\psi}\big|.
\end{equation}
Let the spectral decomposition of \(U_i^\dag U'_i\) be \(\sum_i\lambda_i\ketbra{v_i}{v_i}\), and \(\ket{\psi}=\sum_i \alpha_i \ket{v_i}\). Then the optimization problem becomes:
\begin{equation}\label{opt_input}
\min_{\ket{\psi}} \bigg|\sum_i \lambda_i \braket{\psi}{v_i} \braket{v_i}{\psi}\bigg|=\min_{\sum |\alpha_i|^2=1}\bigg|\sum_i \lambda_i |\alpha_i|^2\bigg|,
\end{equation}
which can be solved by quadratic programming.

After obtaining the optimal input state \(\ket{\psi}\) from Equation (\ref{opt_input}), the optimal measurement for distinguishing between the pair \(\ket{\phi}\coloneqq U_i\ket{\psi},\ \ket{\phi'}\coloneqq U_i'\ket{\psi}\) of output states of the two gates can be achieved by their Helstrom measurement. More specifically, assume:
\begin{equation}\label{eq-522315}
\braket{\phi}{\phi'}=re^{i\theta},
\end{equation} 
and define:
\begin{equation}\label{opt_meas}
\begin{split}
&\ket{\omega}=\frac{1}{\sqrt{2}}\bigg(\frac{\ket{\phi}+e^{-i\theta}\ket{\phi'}}{\|\ket{\phi}+e^{-i\theta}\ket{\phi'}\|}+\frac{\ket{\phi}-e^{-i\theta}\ket{\phi'}}{\|\ket{\phi}-e^{-i\theta}\ket{\phi'}\|}\bigg),\\
&\ket{\omega'}=\frac{1}{\sqrt{2}}\bigg(\frac{\ket{\phi}+e^{-i\theta}\ket{\phi'}}{\|\ket{\phi}+e^{-i\theta}\ket{\phi'}\|}-\frac{\ket{\phi}-e^{-i\theta}\ket{\phi'}}{\|\ket{\phi}-e^{-i\theta}\ket{\phi'}\|}\bigg).
\end{split}
\end{equation}
Then the Helstrom measurement is the POVM with measurement operators \(\ketbra{\omega}{\omega}, \ketbra{\omega'}{\omega'}\), corresponding to the predictions ``fault-free'' and ``faulty'' respectively. The optimal success probability for distinguishing between faulty and fault-free gates in a single experiment is given by 
\begin{equation}\label{eq-4272207}
\frac{1}{2}+\frac{1}{2}\sqrt{1-r^2}\in (\frac{1}{2},1],
\end{equation} 
where \(r\) is defined in Equation \eqref{eq-522315}.
Note that when the fault effect is small (e.g., subtle perturbations on the phase), \(r\) will be close to \(1\) so that the optimal strategy will not be much better than a random guess. In such cases, conducting a significantly larger number of repeated experiments becomes necessary to reliably detect the fault.
\begin{example}\label{example-160307}
Consider the fault model given in Example~\ref{example-151419}. The optimal input state distinguishing between the faulty and fault-free gates, by solving optimization problem (\ref{opt_input}), is \(\ket{\psi}=\ket{+}=\frac{1}{\sqrt{2}}\ket{0}+\frac{1}{\sqrt{2}}\ket{1}\). The outputs of fault-free and faulty gates on \(\ket{\psi}\) are \(\frac{1}{\sqrt{2}}(e^{-i\pi/8}\ket{0}+e^{i\pi/8}\ket{1})\) and \(\ket{+}\), respectively. Then, the Helstrom measurement, given by Equation (\ref{opt_meas}), is exactly the measurement presented in Example~\ref{example-151418}, where \(M_0,M_1\) correspond to ``fault-free'' and ``faulty'', respectively. Therefore, when the gate is faulty, our strategy will output ``faulty'' with probability \(\tr(M_1\ketbra{+}{+})=\sin^2(5\pi/16)\) in a single experiment, as shown in Example~\ref{example-151418}. In addition, if the gate is fault-free, simple calculation shows that our strategy will output ``fault-free'' also with probability \(\sin^2(5\pi/16)\). Indeed, \(\sin^2(5\pi/16)\) is the optimal probability of successfully distinguishing these gates in a single experiment.
\end{example}

\begin{example}\label{example-427322}
In Example~\ref{example-160307}, we have shown that the success probability for detecting the missing gate fault on \(R_Z(\frac{\pi}{4})\) gate in a single experiment is \(\sin^2(5\pi/16)\). To obtain an overall success probability \(\geq 0.9\), we need to repeat the experiment for \(11\) times (this can be seen by calculating the cumulative binomial probability). In fact, the test time complexity (number of repeated experiments) varies a lot on different gates, for example:
\begin{itemize}
\item Detecting the missing gate fault on \(H\) gate with success probability \(\geq 0.9\) requires only \(1\) experiment,
\item Detecting the missing gate fault on \(R_Z(\frac{\pi}{16})\) gate with success probability \(\geq 0.9\) requires \(171\) repeated experiments.
\end{itemize}
In general, to amplify the success probability from \(\frac{1}{2}+\frac{1}{2}\sqrt{1-r^2}\) (see Equation (\ref{eq-4272207})) to \(0.9\), the test time complexity (number of repeated experiments) is \(\Theta(\frac{1}{1-r^2})\).
\end{example}

\subsection{Test Patterns for Single-Fault Quantum Circuits}\label{sec-pattern}
The above discussion considers gate-level discrimination. Now we move on to consider the circuit-level discrimination between the faulty and fault-free circuits. It is impractical to solve the optimization problem (\ref{opt_input}) at the full-circuit level due to the high computational complexity. Specifically, it requires \(O(d\cdot N^3)\) arithmetic operations to first compute the matrix representing the whole circuit, and then \(O(N^3)\) arithmetic operations to solve the optimization problem (\ref{opt_input}), where \(d\) is the number of gates and \(N=2^n\) is the dimension of the whole quantum system.
Nevertheless, since there is at most one faulty gate in the circuit (single-fault assumption, see Section \ref{sec-5231621}), we can locally solve the discrimination problem for the faulty and fault-free gates at the fault site, which generally act on a small sub-system (say, a single qubit or two qubits), obtaining the results \(\ket{\psi},\ket{\omega},\ket{\omega'}\) as in Equations (\ref{opt_input}) and (\ref{opt_meas}). Then we propagate the fault excitation signal \(\ket{\psi}\) and fault effect \(\ket{\omega}/\ket{\omega'}\) to the beginning and end of the circuit respectively. Formally, we have: 
\begin{proposition}\label{prop-propagate}
If quantum state \(\rho\) and measurement \(\{M,I-M\}\) satisfy the following conditions: 
\begin{equation}\label{opt-cio}
\begin{gathered}
\rho \sqsubseteq U_{1:i-1}^\dag\,(Q \otimes \ketbra{\psi}{\psi}) \, U_{1:i-1},\\
U_{i+1:d} \,(Q \otimes \ketbra{\omega}{\omega}) \,U^\dag_{i+1:d} \sqsubseteq M,\\
U_{i+1:d} \,(Q \otimes \ketbra{\omega'}{\omega'}) \,U^\dag_{i+1:d} \sqsubseteq I-M,
\end{gathered}
\end{equation}
where \(\sqsubseteq\) is the Loewner order and $Q$ is some projection on the state space of the qubits that are not in gates $U_i$ and $U'_i$, 
then \(\rho\) and \(\{M,I-M\}\) are optimal state and measurement at the primary input and output for distinguishing between faulty and fault-free circuits, i.e., optimal in test time complexity.
\end{proposition}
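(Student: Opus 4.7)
My plan is to propagate the conditions through the surrounding unitary layers to the fault site, deduce a product structure for the state there, and then match the resulting success probability against the Holevo-Helstrom bound for the local discrimination problem. First, introduce $\rho' := U_{1:i-1} \rho U_{1:i-1}^\dag$ and $M' := U_{i+1:d}^\dag M U_{i+1:d}$, the state just before $U_i$ and the measurement pulled back to just after $U_i$. Since the gates outside the fault site act identically in both scenarios, the average fault-detection probability equals
\[ P(\rho, M) = \tfrac{1}{2}\tr(M' U_i \rho' U_i^\dag) + \tfrac{1}{2}\tr\big((I - M')\, U'_i \rho' (U'_i)^\dag\big). \]
Conjugating the inclusions in (\ref{opt-cio}) by $U_{1:i-1}$ and $U_{i+1:d}^\dag$ yields the equivalent local statements $\rho' \sqsubseteq Q \otimes \ketbra{\psi}{\psi}$, $Q \otimes \ketbra{\omega}{\omega} \sqsubseteq M'$, and $Q \otimes \ketbra{\omega'}{\omega'} \sqsubseteq I - M'$.

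The first local inclusion forces a product form on $\rho'$: being PSD and dominated by a projector with range $\mathrm{range}(Q) \otimes \mathrm{span}(\ket{\psi})$, the support of $\rho'$ sits inside this product subspace, so decomposing $\rho'$ into pure states and collecting the common $\ket{\psi}$ factor gives $\rho' = \sigma \otimes \ketbra{\psi}{\psi}$ for some density matrix $\sigma$ satisfying $Q\sigma Q = \sigma$ (equivalently $\tr(Q\sigma) = 1$). Plugging this product form into the fault-free branch and using $M' \sqsupseteq Q \otimes \ketbra{\omega}{\omega}$ separates the two factors in the trace,
\[ \tr(M' U_i \rho' U_i^\dag) \geq \tr\big((Q \otimes \ketbra{\omega}{\omega})(\sigma \otimes U_i\ketbra{\psi}{\psi}U_i^\dag)\big) = |\braket{\omega}{U_i|\psi}|^2, \]
and the faulty branch is handled symmetrically using $I - M' \sqsupseteq Q \otimes \ketbra{\omega'}{\omega'}$. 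Hence $P(\rho, M) \geq \tfrac{1}{2}\big(|\braket{\omega}{U_i|\psi}|^2 + |\braket{\omega'}{U'_i|\psi}|^2\big)$, which is the Helstrom success probability for discriminating $U_i\ket{\psi}$ from $U'_i\ket{\psi}$.

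To close the argument I would note this lower bound is also the global maximum. For any $\rho'$ on the full Hilbert space, $\tr((U_i^\dag U'_i \otimes I)\rho')$ lies in the numerical range of $U_i^\dag U'_i$ (by the Toeplitz-Hausdorff theorem applied to the ancilla-extended operator), so entangling the fault qubits with the rest cannot drive $|\tr((U_i^\dag U'_i \otimes I)\rho')|$ below $\min_{\ket{\psi}}|\bra{\psi}U_i^\dag U'_i|\psi\rangle|$; the Holevo-Helstrom bound then caps $P(\rho, M)$ at exactly the value we achieved, so the proposed $(\rho, M)$ is optimal. The main obstacle I anticipate is the product-form reduction, which relies essentially on the dominating operator being a \emph{tensor-product} projector — without this factorization the support condition on $\rho'$ would live in a generic subspace, and the operator inequalities for $M'$ would no longer cleanly collapse to the Helstrom form.
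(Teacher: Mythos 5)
Your proof is correct, and it is the argument the paper's own setup (Subsections II-B and II-C) clearly intends: pull $\rho$ and $M$ through the fault-free layers by unitary invariance, use the one-dimensionality of $\ketbra{\psi}{\psi}$ to force the product form $\rho'=\sigma\otimes\ketbra{\psi}{\psi}$, lower-bound the two branches by the dominated Helstrom projectors, and cap everything by the Holevo--Helstrom bound together with the fact that the numerical range of the normal operator $U_i^\dag U_i'\otimes I$ coincides with that of $U_i^\dag U_i'$, so ancilla entanglement cannot improve the discrimination. For what it is worth, the paper never actually supplies a proof of this proposition in its appendix (only Proposition 2 and Theorem 2 are proved there), so your write-up fills a genuine gap; the only step I would make fully explicit is the convexity argument reducing mixed $\rho'$ to pure states before invoking the numerical-range bound.
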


We call the pair $(\rho,M)$ satisfying constraints (\ref{opt-cio}) a test pattern of CUT $C$ with respect to its faulty version $C_i$. Note that unlike the optimization problem in previous section, there is flexibility in defining the optimal test pattern. This is because the parts of system other than the fault site can be arbitrary without affecting optimality.
However, for simplicity, we can choose the test pattern \((\rho,M)\) to be  
\begin{equation}\label{exp-rhoM}
\begin{gathered}
\rho=U^{\dag}_{1:i-1}\big(\frac{Q}{\tr(Q)}\otimes\ketbra{\psi}{\psi}\big)U_{1:i-1}\\
M=U_{i+1:d}\big(Q\otimes \ketbra{\omega}{\omega}\big) U^\dag_{i+1:d}.
\end{gathered}
\end{equation}
Note that Equation \eqref{exp-rhoM} also involves matrices of exponential sizes, which is hard to directly evaluate. More specifically, if we choose \(Q=\ketbra{\phi}{\phi}\), the computational complexity is \(O(d\cdot N^2)\) (since it only requires matrix-vector multiplications), which is better than the full-circuit level optimization but is still inefficient. This issue is further alleviated in Sections \ref{sec-213336} and \ref{sec-131311}, where several acceleration techniques are proposed. There, we use stabilizer projector decomposition (SPD) to represent the the test pattern \((\ketbra{\psi}{\psi},\ketbra{\omega}{\omega})\), and sequentially propagate the SPD to the primary input and output of the circuit.
\begin{example}\label{example-527336}
Consider the fault model given in Example~\ref{example-151419}. The optimal state \(\ket{\psi}\) and measurement \(\{\ketbra{\omega}{\omega},\ketbra{\omega'}{\omega'}\}\) for distinguishing the fault-free and faulty gates are given in Example~\ref{example-160307} and Example~\ref{example-151418}. Then, after applying Equation (\ref{exp-rhoM}), we obtain a test pattern \((\rho,M)\), where \(\rho=\frac{1}{4}\cdot I\otimes \ketbra{0}{0}\otimes I\) and \(M=I\otimes \begin{bmatrix}0.5&0&a&b\\0&0.5&b&a\\a^*&b^*&0.5&0\\b^*&a^*&0&0.5\end{bmatrix}\), in which \(a\approx-0.135-0.326i,b\approx 0.326-0.135i\). Note that \(M\) is a intricate measurement, which lacks efficient and robust implementations. Our proposed method can decompose this challenging test pattern into several simpler test patterns, which can be efficiently implemented using the robust Clifford-only circuits. The produced results are shown in Fig.~\ref{fig-exp1}.
\end{example}

\subsection{Stabilizer Formalism and Clifford Circuits}\label{sec-stabilizer}
As discussed in Section \ref{sec-intro}, one major challenge lies in the test equipment complexity and the robustness of test application. More specifically, the difficulty arises from the hardness of implementing the state \(\rho\) and measurement \(M\) in Equation (\ref{exp-rhoM}). Our solution to this problem leverages the stabilizer formalism and the associated properties of Clifford circuits, which are commonly employed in fault-tolerant quantum computing. For the reader’s convenience, we provide a brief review of these concepts in the following subsection.

\subsubsection{Pauli Operators} Pauli operators are widely used in quantum computation and quantum information~\cite{nielsen2002quantum}. Single-qubit Pauli operators consist of \(I,X,Y,Z\) where
\begin{equation}
X=\begin{bmatrix}0&1\\1&0\end{bmatrix},\, Y=\begin{bmatrix}0&-i\\i&0\end{bmatrix},\, Z=\begin{bmatrix}1&0\\0&-1\end{bmatrix}
\end{equation}
Any \(n\)-qubit Pauli operator \(P\) is of the form:
\(P_1\otimes P_2 \otimes \cdots \otimes P_n\)
where each \(P_i\) is chosen from \(I,X,Y,Z\).
We will use \(X_i\) (or \(Y_i,Z_i\)) to denote the Pauli operator \(P_1\otimes \cdots \otimes P_n\) with \(P_i=X\) (or \(Y,Z\)) and \(P_j=I\) for all \(j\neq i\).
Suppose \(P\) is a Pauli operator, we will use \(\underline{e^{i\theta}P}\) to denote the Pauli operator by simply discarding the phase \(e^{i\theta}\), (\textit{e.g.} \(\underline{ZX}=\underline{iY}=Y\)).
We call \(e^{i\theta}P\) a \textit{signed} Pauli operator, if \(\theta\in \{0,\pi\}\) and \(P\) is a Pauli operator. Without causing confusion, the same letters (\textit{e.g.} \(P,Q\)) may be used to denote either Pauli or signed Pauli operators.
Note that any two (signed) Pauli operators \(P,Q\) either commute (\textit{i.e.}, \(PQ=QP\)) or anti-commute (\textit{i.e.}, \(PQ=-QP\)).

There is a very useful representation for the Pauli operators, using the vectors over \(\textup{GF}(2)\). In this paper, we use \(00,01,10,11\) to represent \(I,Z,X,Y\) respectively. The Pauli operator \(P=P_1\otimes \ldots\otimes P_n\) is then represented by the vector \(v(P)=(v_1,\ldots,v_{2n})\), where \(v_{2i-1},v_{2i}\) corresponds to the representation of \(P_i\). Note that the vector addition coincides with the Pauli operator multiplication up to a global phase. That is, if \(\underline{PQ}=R\) where \(P,Q,R\) are Pauli operators, then \(v(P)+v(Q)=v(R)\). 
For a signed Pauli operator \(P\), we simply define its vector representation as \(v(\underline{P})\).
We say that a set of (signed) Pauli operators are \textit{independent} if their vector representations are independent.

\subsubsection{Stabilizer Formalism} We say that a  state \(\ket{\psi}\) is \textit{stabilized} by a  signed Pauli operator \(P\) if \(P\ket{\psi}=\ket{\psi}\). In this case, \(P\) is called the \textit{stabilizer} of \(\ket{\psi}\). The basic idea of the stabilizer formalism is that many quantum states can be more easily described by their stabilizers~\cite{nielsen2002quantum}.
In general, suppose \(G\) is a group of \(n\)-qubit signed Pauli operators and \(-I \notin G\). We call \(G\) the \textit{stabilizer group}. Let \(V\) be the set of all \(n\)-qubit states which are stabilized by every element of \(G\). Then \(V\) forms a vector space. The projector \(S\) onto \(V\) is called the \textit{stabilizer projector} corresponding to group \(G\). When \(V\) is one-dimensional, its projector \(S\) is of the form \(S=\ketbra{\phi}{\phi}\) and \(\ket{\phi}\) is called the \textit{stabilizer state} corresponding to \(G\). 
A set of elements \(P_1,\ldots,P_l\) is said to be the \textit{generators} of \(G\) if every element of \(G\) can be written as a product of elements from the list \(P_1,\ldots,P_l\), and we write \(G=\langle P_1,\ldots,P_l\rangle\).
There are some useful properties of the stabilizer formalism. First, the elements of \(G\) commute, \textit{i.e.}, for any \(P,Q\in G\), \(PQ=QP\). 
Second, the stabilizer projector \(S\) can be represented by 
\begin{equation}
S=\prod_{i=1}^l \frac{I+P_i}{2}=\frac{1}{|G|}\sum_{P_i\in G} P_i
\end{equation}
where \(P_1,\ldots,P_l\) are the generators of \(G\).

\subsubsection{Clifford Operation}
Suppose we apply a unitary \(U\) to the stabilizer space \(V\). Let \(\ket{\phi}\in V\). Then  for any \(g\in G\): 
\(U\ket{\phi}=Ug\ket{\phi}=UgU^\dag U\ket{\phi}\)
which means that  \(U\ket{\phi}\) is stabilized by \(UgU^\dag\). Thus subspace \(U V\) is stabilized by the group \(UGU^\dag\equiv\{UgU^\dag|g\in G\}\) and the corresponding stabilizer projector is \(USU^\dag\). For a unitary \(U\), if it takes signed Pauli operators to signed Pauli operators under conjugation, \textit{i.e.}, \(UPU^\dag=P'\), then we call it a \textit{Clifford operation}. This transformation takes on a particularly appealing form. That is, for any stabilizer projector \(S\), \(USU^\dag=S'\) where \(S'\) is also a stabilizer projector. It was proved that any \(n\)-qubit Clifford operation can be implemented using Hadamard, Phase and CNOT gates, with the circuit size no more than \(O(n^2/\log n)\)~\cite{aaronson2004improved}.
Clifford circuits can be efficiently simulated on a classical computer according to the Gottesman-Knill theorem~\cite{gottesman1998heisenberg}. 
Furthermore, using fault-tolerant techniques  based on CSS codes, including the popular Steane code and surface codes, they can be efficiently implemented with little error rate~\cite{fowler2009high,zhou2000methodology}. This contrasts with the non-Clifford gates (\textit{e.g.} \(T\) gate), which incur more than a hundred times cost for fault-tolerant implementation. The following facts about Clifford operations will be needed in this paper:
\begin{lemma}\label{ex-clifford}
Given two lists of signed Pauli operators \((P_1,\ldots,P_n)\) and \((Q_1,\ldots,Q_n)\). If
\begin{enumerate}
\item \(\forall i,j\leq n\), \(P_i,P_j\) commute \(\longleftrightarrow\) \(Q_i,Q_j\) commute,
\item \((P_1,\ldots,P_n)\) and \((Q_1,\ldots,Q_n)\) are both independent
\end{enumerate}
then there is a Clifford unitary \(U\) with \( UP_iU^\dag=Q_i\) for all $i$.
\end{lemma}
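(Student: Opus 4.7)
The plan is to use the standard identification of the Clifford group modulo Pauli operators with the symplectic group $\mathrm{Sp}(2n,\mathbb{F}_2)$ acting on the vector representations $v(P)\in\mathbb{F}_2^{2n}$ from Section \ref{sec-stabilizer}. Under this identification, two Pauli operators $P,P'$ commute if and only if $\omega(v(P),v(P'))=0$, where $\omega$ is the natural symplectic form $\omega(v,w)=\sum_{i=1}^{n}(v_{2i-1}w_{2i}+v_{2i}w_{2i-1})$ on $\mathbb{F}_2^{2n}$. The commutation-matching hypothesis says exactly that the linear map $\phi_0:\mathrm{span}\{v(P_i)\}\to\mathrm{span}\{v(Q_i)\}$ defined by $v(P_i)\mapsto v(Q_i)$ is a well-defined (by the independence hypothesis) isometry of symplectic subspaces.

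First, I would extend $\phi_0$ to a symplectic automorphism $\phi$ of the ambient space $(\mathbb{F}_2^{2n},\omega)$ by Witt's extension theorem for the non-degenerate symplectic form $\omega$. Next, I would lift $\phi$ to a Clifford operation using the surjectivity of the Clifford group onto $\mathrm{Sp}(2n,\mathbb{F}_2)$ established in \cite{aaronson2004improved} (any symplectic transformation decomposes into elementary ones realized by Hadamard, Phase and CNOT gates). This yields a Clifford $U_0$ such that $U_0P_iU_0^\dag=\epsilon_iQ_i$ for some signs $\epsilon_i\in\{\pm 1\}$; the signs arise because $\phi$ only records the unsigned Pauli part of the conjugation action.

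Finally, I would correct the signs by left-multiplying $U_0$ by a suitable Pauli operator $R$ and defining $U=RU_0$. A direct computation gives $UP_iU^\dag=\epsilon_i(RQ_iR^\dag)=\epsilon_i\delta_iQ_i$, where $\delta_i=+1$ or $-1$ according as $R$ commutes or anti-commutes with $Q_i$, so it suffices to find $R$ with $\delta_i=\epsilon_i$ for every $i$. Such an $R$ exists because the linear map $\mathbb{F}_2^{2n}\to\mathbb{F}_2^n$ sending $v(R)\mapsto(\omega(v(R),v(Q_i)))_{i=1}^n$ is surjective: by independence of the $v(Q_i)$ together with non-degeneracy of $\omega$, its kernel has dimension $n$, hence its image has the full dimension $n$.

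The main obstacle I expect is the Witt-extension step, since the subspace $\mathrm{span}\{v(P_i)\}$ may be degenerate with respect to $\omega$ (this happens whenever some $P_i$ commutes with every $P_j$). The standard remedy is to decompose both subspaces into their radical and a complementary non-degenerate part, extend the non-degenerate parts by a symplectic basis argument, and handle the radicals separately using the non-degeneracy of $\omega$ on the ambient space.
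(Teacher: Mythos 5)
Your proposal is correct, but there is nothing in the paper to compare it against: despite the claim that all proofs appear in the Appendix, Lemma \ref{ex-clifford} is stated without proof and used as a known fact of the stabilizer formalism (it is the standard statement that the Clifford group surjects onto the signed symplectic transformations of the Pauli group, cf.\ \cite{aaronson2004improved}). Your three-step argument is the standard one and each step is sound. The reduction of commutation to the vanishing of the symplectic form $\omega$ on $\mathbb{F}_2^{2n}$, the well-definedness of $\phi_0$ from independence, and its isometry property from hypothesis 1) are all immediate. The Witt-extension step is the only place requiring care, and you correctly identify the issue: $\mathrm{span}\{v(P_i)\}$ may have a nontrivial radical, so one must split off a non-degenerate part, match symplectic bases there, and extend the radicals to hyperbolic pairs in the ambient non-degenerate space before invoking transitivity of $\mathrm{Sp}(2n,\mathbb{F}_2)$ on symplectic bases; this is exactly the argument implemented procedurally in steps 10--20 of Algorithm \ref{alg-red} and in Lemma \ref{ext-normal}. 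Your sign-correction step is also right: since the $v(Q_i)$ are independent and $\omega$ is non-degenerate, the map $v(R)\mapsto(\omega(v(R),v(Q_i)))_{i=1}^n$ has kernel of dimension $2n-n=n$ and hence full image, so a Pauli $R$ realizing any prescribed sign pattern $(\epsilon_i)_i$ exists; note also that independence forces every $Q_i\neq\pm I$, so no sign is unreachable. In short, your proof is complete modulo the routine details of the degenerate Witt extension, and it supplies a justification the paper itself leaves implicit.
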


\begin{corollary}\label{SP-Circuit}
For any stabilizer projector \(A\) of rank \(2^m\) in an \(n\)-qubit  system, there is a Clifford circuit \(U_A\) such that 
\(U_A(\ketbra{0}{0}^{\otimes n-m} \otimes I) U_A^\dag=A\).
\end{corollary}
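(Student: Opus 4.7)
The plan is to view both operators as stabilizer projectors and invoke Lemma \ref{ex-clifford} to produce a Clifford unitary that maps one stabilizer description to the other. First I would identify the stabilizer structures: the target $\ketbra{0}{0}^{\otimes n-m}\otimes I$ is the projector onto the joint $+1$ eigenspace of $Z_1,\ldots,Z_{n-m}$ (acting on the first $n-m$ qubits), so its stabilizer group has the $n-m$ independent commuting generators $Z_1,\ldots,Z_{n-m}$. Since $A$ is a stabilizer projector of rank $2^m$ in an $n$-qubit system, writing $A=\prod_i(I+g_i)/2$ shows that its stabilizer group must be generated by exactly $n-m$ independent commuting signed Pauli operators $g_1,\ldots,g_{n-m}$.

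Next I would extend each of these two sets to a full set of $n$ independent, mutually commuting signed Pauli operators. For the source side this is immediate: append $Z_{n-m+1},\ldots,Z_n$ to obtain $Z_1,\ldots,Z_n$. For the target side I would use the standard symplectic-linear-algebra fact that any isotropic set of $n-m$ independent commuting Pauli operators in the $2n$-dimensional $\textup{GF}(2)$ symplectic space can be extended to a maximal isotropic set of size $n$; doing this (and verifying that $-I$ is not picked up, e.g.\ by choosing the signs of the new generators so that the resulting group still stabilizes a nonempty subspace) yields $g_{n-m+1},\ldots,g_n$ producing an independent mutually commuting family $g_1,\ldots,g_n$.

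With these two $n$-element families in hand, the commutation pattern trivially matches (both families are entirely commuting and independent), so Lemma \ref{ex-clifford} produces a Clifford unitary $U_A$ satisfying $U_A Z_i U_A^\dag = g_i$ for every $i$. Restricting to $i=1,\ldots,n-m$ and using the product form of stabilizer projectors gives
\begin{equation}
U_A\bigl(\ketbra{0}{0}^{\otimes n-m}\otimes I\bigr)U_A^\dag = U_A\prod_{i=1}^{n-m}\frac{I+Z_i}{2}U_A^\dag = \prod_{i=1}^{n-m}\frac{I+g_i}{2}=A,
\end{equation}
which is exactly the claim.

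The main obstacle I anticipate is the extension step: ensuring that the $n-m$ given commuting independent signed generators $g_1,\ldots,g_{n-m}$ can be extended to $n$ commuting independent signed Paulis without accidentally generating $-I$. This is standard but worth spelling out; once it is done, the rest of the argument is a direct application of Lemma \ref{ex-clifford} together with the formula $S=\prod_i(I+P_i)/2$ for stabilizer projectors recalled in Section \ref{sec-stabilizer}.
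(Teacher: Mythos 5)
Your proof is correct and follows the route the paper intends: the corollary is presented as an immediate consequence of Lemma \ref{ex-clifford}, obtained exactly as you describe by matching the $n-m$ commuting independent generators of $A$'s stabilizer group with $Z_1,\ldots,Z_{n-m}$, extending both to full independent commuting families of size $n$, and conjugating the product formula $S=\prod_i(I+P_i)/2$. The extension step you flag is indeed the only detail worth spelling out, and your handling of it (maximal isotropic extension in the $\textup{GF}(2)$ symplectic space; signs are immaterial since Lemma \ref{ex-clifford} only needs independence and matching commutation) is sound.
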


\section{SPD-Based Sampling Algorithm for Test Application}\label{sec-testapp}
Test application is the process of applying a test pattern obtained from test generation to the CUT and analyzing the output responses~\cite{wang2006vlsi}. More specifically, given a quantum test pattern $(\rho,M)$, the test application includes preparing the input quantum state \(\rho\), and performing measurement \(\{M,I-M\}\) on the output of CUT to obtain measurement results. By collecting measurement results from repeated experiments, we can estimate the value of \(\tr(M\CUT\rho \CUT^\dag)\), which represents the probability of the CUT passing the test. Consequently, if \(\tr(M\CUT\rho \CUT^\dag)>0.5\), we claim that the CUT is fault-free, otherwise faulty.

In this section, we present the stabilizer projector decomposition (SPD) and an SPD-based sampling algorithm for test application. Our algorithm adopts Clifford circuits as test equipment, which is robust against SPAM error and has better test equipment complexity.

\subsection{Stabilizer Projector Decomposition}
As mentioned in Subsection \ref{sec-stabilizer}, Clifford operations have reliable and efficient fault-tolerant implementation. We choose to use Clifford-only circuit to realize our test application. However, a test pattern $(\rho,M)$ as a solution of Equation (\ref{opt-cio}) is generally non-Clifford. To handle this problem, we introduce the stabilizer projector decomposition (SPD), which generalizes the notion of pseudo-mixtures for stabilizer states~\cite{howard2017application}.

\begin{definition}[Stabilizer Projector Decomposition, SPD]\label{def-SPD}
Let \(0 \sqsubseteq A\sqsubseteq I\). Then a stabilizer projector decomposition ({\SPD} for short) of \(A\) is a finite set \(\mathcal{A}=\{(a_i,A_i)\}_i\) such that $$A=\sum_i a_i A_i,$$ where 
  \(A_i\) are stabilizer projectors and \(a_i\) are real numbers. 
\end{definition}

We will use the calligraphic letters (\textit{e.g.} \(\mathcal{A},\mathcal{B}\)) to denote the SPDs and use \(\SPD(A)\) to denote the set of all possible {\SPD}s of operator \(A\). For convenience, given an SPD \(\mathcal{A}=\{(a_i,A_i)\}\), we use \(Q \otimes \mathcal{A}\) to denote the SPD \(\{(a_i,Q \otimes A_i)\}\),  where \(Q\) is a stabilizer projector. Similarly, we write \(U \mathcal{A}U^\dag\) for the SPD \(\{(a_i, U A_i U^\dag)\}\) and \(c\mathcal{A}\) for the SPD \(\{(c\cdot a_i,A_i)\}\) where \(U\) is a Clifford unitary, and \(c\) is a real number. We also define: $$\tr(\mathcal{A})=\tr(\sum_i a_i A_i).$$ 

In our SPD-based sampling algorithm, the following norms are needed. 

\begin{definition}[Norms of SPD]
The 1-norm \(\nu\) and weighted 1-norm \(\nu^*\) of  SPD \(\mathcal{A}=\{(a_i,A_i)\}\) are defined by:
\begin{equation}\small
\begin{gathered}
\nu(\mathcal{A})=\sum_i |a_i|\\
\nu^*(\mathcal{A})=\sum_i |a_i|\tr A_i
\end{gathered}
\end{equation}
\end{definition}

\subsection{SPD-based Sampling Algorithm}

Now we turn to the main problem in this section: estimate the expectation $\tr(M\CUT\rho \CUT^\dag)$ for a given test pattern $(\rho,M)$. Here we assume that the SPDs \(\mathcal{A},\mathcal{B}\) for $\rho,M$ are provided, respectively:
\begin{equation}\label{SPD_rhom}
\begin{gathered}
\rho=\sum_i a_i A_i,\\
M=\sum_i b_i B_i
\end{gathered}
\end{equation} but leave the problem of generating them to the next section. 

First, we can rewrite the expectation under estimation  in terms of the given SPDs:
\begin{equation}
\begin{split}
\tr(M\CUT\rho \CUT^\dag)&=\sum_{ij} a_ib_j \tr(B_j \CUT A_i \CUT^\dag)\\
&=\sum_{ij} a_i b_j \tr A_i \tr(B_j \CUT \frac{A_i}{\tr A_i} \CUT^\dag)
\end{split}
\end{equation}
For simplicity, we put: $$\tr_{ij}=\tr(B_j \CUT A_i/\tr A_i \CUT^\dag).$$

Since \(A_i,B_j\) are stabilizer projectors, by Corollary \ref{SP-Circuit}, we can implement the state preparation of \(A_i/\tr A_i\) and projective measurement of \(B_j\) using Clifford-only circuits. Then, we can sample from the Bernoulli distribution of parameter \(\tr_{ij}\), by applying the CUT on input state \(A_i/\tr A_i\) and sampling the result according to measurement \(B_j\).

To further estimate \(\sum_{ij} a_i b_j\tr A_i \tr_{ij}\), we use the quasi-probability distribution~\cite{howard2017application,pashayan2015estimating} \(\{a_ib_j\tr A_i\}_{ij}\) to form a true probability distribution \(\{p_{ij}\}_{ij}\),  where $$p_{ij}=|a_ib_j\tr A_i|/\sum_{k} |a_k\tr A_k|\sum_k|b_k|.$$ In each single trial, we sample \(i,j\) from this probability distribution \(p_{ij}\) and then sample a result \(m\in\{0,1\}\) from the Bernoulli distribution of parameter \(\tr_{ij}\). We do not record \(m\) but its corrected version:
\begin{equation}
\begin{split}
\hat{m}&=m\cdot \textup{sign}(a_ib_j)(\sum_k|a_k|\tr A_k)(\sum_k |b_k|)\\
&=m\cdot \textup{sign}(a_ib_j)\nu^*(\mathcal{A})\nu(\mathcal{B})
\end{split}
\end{equation}
where \(\textup{sign}(x)=1\) if \(x\geq 0\) and \(\textup{sign}(x)=-1\) otherwise. Note that \(\hat{m}\) is essentially sign-corrected by \(\textup{sign}(a_ib_j)\) and modulus-corrected by \(\nu^*(\mathcal{A})\nu(\mathcal{B})\). The expected value of random variable \(\hat{m}\) is
\begin{equation}
\begin{split}
E(\hat{m})&=\sum_{ij} p_{ij}\tr_{ij} \textup{sign}(a_ib_j)(\sum_k|a_k|\tr A_k)(\sum_k |b_k|)\\
&=\sum_{ij}a_ib_j\tr A_i \tr_{ij}=\tr(M\CUT\rho \CUT^\dag)
\end{split}
\end{equation}
This means \(\hat{m}\) can serve as an unbiased estimator of the target value \(\tr(M\CUT\rho \CUT^\dag)\).
By repeating this sampling process, we can estimate the result to any desired accuracy, where the number of repetition depends on the variance of \(\hat{m}\).
The details of this SPD-based sampling algorithm are summarized in Algorithm \ref{alg-smp}.

Then, we analyze the performance of Algorithm~\ref{alg-smp}. Note that \(\hat{m}\in [-\nu^*(\mathcal{A})\nu(\mathcal{B}),\nu^*(\mathcal{A})\nu(\mathcal{B})]\). By the Hoeffding's inequality, we know that 
\begin{equation}
\frac{2}{\delta^2}\ln(\frac{2}{\epsilon})\big[\nu^*(\mathcal{A})\nu(\mathcal{B})\big]^2
\end{equation}
samples are needed to estimate the result with error less than \(\delta\) and success probability exceeding \(1-\epsilon\). The above result is summarized in Theorem~\ref{alg-sdn}.

\begin{algorithm}[t!]
\small
\caption{SPD-based sampling algorithm}
\begin{algorithmic}[1]
  \Require SPD for \(\rho, M\) (see Equation (\ref{SPD_rhom})), Quantum CUT \(\CUT\), required precision \(\delta\) and success probability \(1-\epsilon\)
  \Ensure An estimation \(m\) of \(\tr(M \CUT\rho \CUT^\dag)\)
  \State Initialize the estimation result \(m\gets 0\).
  \State Set the iteration number \(T\gets \frac{2}{\delta^2}\ln(\frac{2}{\epsilon})[\nu^*(\mathcal{A})\nu(\mathcal{B})]^2\)
  \For {\(t=1,2,\ldots,T\)}
    \State Sample values \(i\) and \(j\) from distribution \(|a_i|\tr A_i/\nu^*(\mathcal{A})\) and \(|b_j|/\nu(\mathcal{B})\) respectively.
    \State Uniformly sample value \(l \in \{0,\ldots,\tr A_i-1\}\)
    \State Initialize the first \(n-\log \tr A_i\) qubits to \(\ket{0}\), and last \(\log \tr A_i\) qubits to \(\ket{l}\).
    \State Apply the Clifford circuit \(U_{A_i}\). \Comment Corollary \ref{SP-Circuit}
    \State Apply the CUT \(\CUT\).
    \State Apply the Clifford circuit \(U^\dag_{B_j}\). \Comment Corollary \ref{SP-Circuit}
    \State Measure the first \(n-\log \tr B_j\) quibts on computational basis.
    \If {The measurement result is \(\ket{0}\)}
      \State \(m\gets m+\textup{sign}(a_ib_j)\nu^*(\mathcal{A})\nu(\mathcal{B})\)
    \EndIf
  \EndFor
  \State $m\gets m/T$
  \State\Return \(m\)
\end{algorithmic}
\label{alg-smp}
\end{algorithm}

\begin{theorem}\label{alg-sdn}
Algorithm \ref{alg-smp} estimates \(\tr(M \CUT\rho \CUT^\dag)\) to additive error \(\delta\) with probability larger than \(1-\epsilon\), and has test time complexity (i.e., the number of experiments on the CUT):
\begin{equation}\label{eq-5262341}
O\bigg(\frac{\ln(1/\epsilon)}{\delta^2}\big[\nu^*(\mathcal{A})\nu(\mathcal{B})\big]^2\bigg).
\end{equation}
\end{theorem}

Note that the term \(\big[\nu^*(\mathcal{A})\nu(\mathcal{B})\big]^2\) corresponds to the additional overhead in test time complexity, introduced by our SPD-based sampling algorithm. As mentioned in Section~\ref{sec-5271640}, this can be viewed as a trade off between quantum resources (i.e., test equipment complexity) and classical resources (i.e., test time complexity). Specifically, our SPD-based sampling algorithm provides efficiency in test equipment complexity (to be justified in Table~\ref{table-526039}) and robustness in test application (due to the robustness of Clifford circuits), while it also requires more repeated experiment to obtain a valid test result (due to the uncertainty induced by the classical randomness). In the next section, we will focus on reducing the test time complexity, as well as the computational complexity of the test generation process. 

\section{SPD generation}\label{sec-spdgen}
In the previous section, an algorithm for robust test application was presented under the assumption that the SPDs of a test pattern \((\rho,M)\) are given. This section is devoted to solving the remaining problem, namely the SPD generation. Our aim is to optimize the test time complexity induced by the generated SPDs, while also reduce the computational complexity of the generation process.

As shown in Theorem \ref{alg-sdn}, we want to generate the SPDs that have as low \(\nu\) and \(\nu^*\) values as possible. In this sense, given a positive operator \(A\), one can find the \(\nu\)-optimal SPD by solving this convex optimization problem:
\begin{equation}\label{cvx-opt}
\begin{gathered}
\min_x\quad \|x\|_1\\
\textup{subject to}\quad \mathcal{S}x=b
\end{gathered}
\end{equation}
where \(\mathcal{S}_{ij}=\tr(P_iS_j)\), \(b_i=\tr(P_i A)\),  \(P_i\) is the \(i\)-th Pauli operator and \(S_j\) is the \(j\)-th stabilizer projector. An optimal solution \(x\) of problem (\ref{cvx-opt}) indicates a \(\nu\)-optimal SPD \(\{(x_i,S_i)\}\) for \(A\). For computational efficiency, we can discard those \((x_i,S_i)\) with \(x_i=0\). 
Similarly, we can obtain the \(\nu^*\)-optimal SPD by solving this convex optimization problem:
\begin{equation}\label{cvx-opt2}
\begin{gathered}
\min_x\quad \|w\circ x\|_1\\
\textup{subject to}\quad \mathcal{S}x=b
\end{gathered}
\end{equation}
where \(w_i=\tr(S_i)\) and \(\circ\) denotes the element-wise product. 

Now, our goal is to solve the SPD-optimization problems \eqref{cvx-opt2} and \eqref{cvx-opt} for the test pattern \(\rho\) and \(M\) obtained in \eqref{exp-rhoM}, respectively.
As noted in Section \ref{sec-pattern}, \(\rho\) and \(M\) can be obtained more efficiently through a fault propagation strategy. We adopt a similar idea here, embedding the SPD-optimization in the fault propagation process. That is, at the very beginning, we solve the SPD-optimization problems locally for the initial test pattern at the fault site, and use the resulting SPDs to represent the initial test pattern. Then, we forward (or backward) propagate the fault effect (or excitation signal) represented by the SPD to the primary output (or primary input) of the circuit, as with the D-algorithm. 
More specifically, the propagation process is as follows:
\begin{itemize}
    \item Suppose at the current propagation step we have at hand a positive operator \(A\) representing the current outpu pattern \(M\) (or input pattern \(\rho\)), and also an SPD \(\mathcal{A}\) for \(A\). Suppose the next gate encountered is \(U\).
    \item We re-solve the optimization problem \eqref{cvx-opt2} (or problem \eqref{cvx-opt}) to find a new SPD for \(UAU^\dag\) (or \(U^\dag A U\)).
\end{itemize}
However, it is practically intractable to directly deal with these problems since the number of stabilizer projectors grows exponentially~\cite{gross2006hudson}. To alleviate this issue, several acceleration techniques are employed in each propagation step, such as the locality exploiting and sparsity exploiting.

\begin{algorithm}[t!]
\small
\caption{SPD generation}
\begin{algorithmic}[1]
  \Require Description of the fault-free circuit \(( U_1,\ldots,U_d)\), potential fault position \(i\) and corresponding fault model \(U'_i\).
  \Ensure SPD for \(\rho\) and \(M\), where \(\rho\) and \(M\) is the optimal input and measurement that can detect the fault.
  \State Let \((\ket{\psi}, \ket{\omega})\) be the test pattern for distinguishing \(U_i\) and \(U'_i\). \Comment see Section \ref{discriminate-gates}
  \State Let \(\mathcal{A}\) be the \(\nu^*\)-optimal SPD for \(\ketbra{\psi}{\psi}\) \Comment Problem (\ref{cvx-opt2})
  \State Let \(\mathcal{B}\) be the \(\nu\)-optimal SPD for \(\ketbra{\omega}{\omega}\) \Comment Problem (\ref{cvx-opt})
  \State \(\mathcal{A}\gets I\otimes \mathcal{A}\), \(\mathcal{B}\gets I\otimes \mathcal{B}\) \Comment Padding (choosing \(Q=I\) in Equation \eqref{exp-rhoM})
  \For {\(j=i-1\) down to \(1\)} \Comment Backward propagation
    \If {\(U_j\) is Clifford}
     	\State \(\mathcal{A}\gets U^\dag_j\mathcal{A} U_j\)
    \Else
        \State Let \(U_C=\text{LOC\_EXPL}(U_j^\dag, \mathcal{A})\) \Comment Algorithm \ref{alg-red}
        \State \(\mathcal{A} \gets\) the optimal SPD computed through diagram (\ref{diag-loc})
  	\EndIf
  \EndFor
  \State \(\mathcal{A}\gets \mathcal{A}/\tr(\mathcal{A})\)
  \For {\(j=i+1\) to \(d\)} \Comment Forward propagation
    \If {\(U_j\) is Clifford}
        \State \(\mathcal{B}\gets U_j\mathcal{B} U_j^\dag\)
    \Else
       	\State Let \(U_C=\text{LOC\_EXPL}(U_j, \mathcal{B})\) \Comment Algorithm \ref{alg-red}
       	\State \(\mathcal{B} \gets\) the optimal SPD computed through diagram (\ref{diag-loc})
    \EndIf
  \EndFor
  \State \Return \(\mathcal{A}, \mathcal{B}\)
\end{algorithmic}
\label{alg-spd}
\end{algorithm}

\subsection{Locality Exploiting}\label{sec-213336}
In this section, we introduce the locality exploiting technique in the process of SPD propagation. The idea is, \textbf{1)} using a Clifford unitary \(U_C\) to map the current SPD to a local form, \textbf{2)} solving problem (\ref{cvx-opt}) or (\ref{cvx-opt2}) in the ``non-trivial'' subsystem, and \textbf{3)} mapping back. The optimality is preserved thanks to the nice properties of \(\nu\) (and also \(\nu^*\) analogously), \textit{i.e.}, the Clifford invariance (C-Inv) and tensor invariance (T-Inv):
\begin{enumerate}
    \item \textbf{C-Inv}: For any Clifford unitary \(U_C\), if an SPD \(\mathcal{A}\) is \(\nu\)-optimal (\(\nu^*\)-optimal), then \(U_C\mathcal{A}U_C^\dag\) is also \(\nu\)-optimal (\(\nu^*\)-optimal).
    \item \textbf{T-Inv}: For any stabilizer projector \(S\), if an SPD \(\mathcal{A}\) is \(\nu\)-optimal (\(\nu^*\)-optimal), then \(S\otimes \mathcal{A}\) is also \(\nu\)-optimal (\(\nu^*\)-optimal).
\end{enumerate}
Specifically, suppose we need to find the \(\nu\)-optimal (or \(\nu^*\)-optimal) SPD for \(UA U^\dag\), and \(A\) is represented by an SPD \(\mathcal{A}\), then our strategy follows this diagram:
\begin{equation}\label{diag-loc}
\small
\begin{tikzcd}[column sep = 9em, row sep = 1.5em]
(U, \mathcal{A}) \arrow[d,"U_C"] \arrow[r,dashed, "\text{find optimal SPD}"] & U_C^\dag (U_\bot A_\bot U_\bot^\dag \otimes \mathcal{A}'') U_C\\
(U_\bot \otimes U', A_\bot \otimes \mathcal{A}') \arrow[d,"\text{localize}"] & U_\bot A_\bot U_\bot^\dag \otimes \mathcal{A}'' \arrow[u,"\text{C-Inv}"]\\
(U', \mathcal{A}') \arrow[r,"\text{solve problem (\ref{cvx-opt}) or (\ref{cvx-opt2}) for } U' A'U'^{\dag}"] & \mathcal{A}'' \arrow[u,"\text{T-Inv}"]
\end{tikzcd}
\end{equation}
where \(U_C\) is a Clifford unitary such that \(U \xrightarrow{U_C} U_\bot\otimes U'\) and \(\mathcal{A}\xrightarrow{U_C} A_\bot\otimes \mathcal{A}'\) by conjugation, where \(U_\bot\), \(A_\bot\) are Clifford unitary and stabilizer projector, which are the ``trivial'' parts in our computation.
The above procedures are summarized in Algorithm \ref{alg-spd}.

The remaining problem is how to find a good \(U_C\) that satisfies diagram (\ref{diag-loc}). We propose an algorithm (shown in Algorithm~\ref{alg-red}) for finding such \(U_C\) that reveals as much locality as possible. Similar to \cite{qassim2019clifford}, we assume that the elementary gate set under consideration is the set of Clifford + Pauli rotations (i.e., \(e^{i\theta P}\) where \(P\) is a Pauli operator and \(\theta \in [-\pi,\pi]\)). Note that this form covers some important and practical universal quantum gate sets, such as Clifford + T gate set, or more generally the Clifford + Z rotation gate set. 

We provide here the main intuition behind Algorithm \ref{alg-red} (for simplicity, here we only consider \(\mathcal{A}\), without \(U\)), and the detailed analyses are presented in the Appendix. First note that any stabilizer projector is equivalent to \(\ketbra{0}{0}\otimes I\) under Clifford conjugation. Thus, to extract a stabilizer projector \(A_\bot\) from the SPD \(\mathcal{A}\), Algorithm \ref{alg-red} works roughly as follows,
\begin{enumerate}
\item extract a rank-1 projector \(\ketbra{0}{0}\) (steps 3-8), by computing the intersection of all components in \(\mathcal{A}\),
\item extract a full rank projector \(I\) (steps 9-20), by computing the complement of the union of all components in \(\mathcal{A}\),
\item combine previous results (step 21).
\end{enumerate}
Finally, the algorithm can exploit a smaller ``non-trivial'' subsystem of size \((s+t)/2\) (see Algorithm \ref{alg-red} for details). Its performance is theoretically guaranteed by the following theorem (the proof is deferred to the Appendix):
\begin{theorem}\label{opt-local}
Algorithm \ref{alg-red} finds a Clifford unitary \(U_C\) that satisfies diagram (\ref{diag-loc}). Moreover, if either of the following two conditions is satisfied,
\begin{enumerate}
\item \(\mathcal{A}\) is \(\nu\)-optimal and has the smallest \(\nu^*\) among all other \(\nu\)-optimal {\SPD}s for \(A\)
\item \(\mathcal{A}\) is \(\nu^*\)-optimal and has the smallest \(\nu\) among all other \(\nu^*\)-optimal {\SPD}s for \(A\)
\end{enumerate}
then \(U_C\) is optimal (\textit{i.e.}, reveals the minimum ``non-trivial'' subsystem) among all Clifford unitaries that satisfy diagram (\ref{diag-loc}).
\end{theorem}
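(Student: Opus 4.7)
The proof splits into two parts: correctness (any $U_C$ returned by Algorithm \ref{alg-red} satisfies diagram (\ref{diag-loc})) and optimality (the revealed non-trivial subsystem is minimal under the stated hypothesis on $\mathcal{A}$).

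For correctness, the plan is to characterize the two families of signed Paulis that Algorithm \ref{alg-red} extracts and then invoke Lemma \ref{ex-clifford}. Steps 3--8 (the ``intersection'' phase) produce an independent commuting set $\{P_1,\ldots,P_s\}$ of signed Paulis such that $P_k A_i = A_i$ for every component $A_i$ of $\mathcal{A}$; by the stabilizer formalism, each $A_i$'s range then lies in the joint $+1$-eigenspace of $P_1,\ldots,P_s$. Steps 9--20 (the ``union'' phase) produce a further independent set $\{Q_1,\ldots,Q_t\}$ of signed Paulis that commute with every stabilizer of every $A_i$ and are supported off the Pauli supports of those stabilizers, i.e., they mark identity-directions common to all components. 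Lemma \ref{ex-clifford} supplies a Clifford $U_C$ sending the $P_k$ to $Z_1,\ldots,Z_s$ and the $Q_k$ to standard identity-directions on qubits $s+1,\ldots,s+t$; under this conjugation every $A_i$ factorizes as $A_\bot \otimes A_i'$ with common factor $A_\bot = (\ketbra{0}{0})^{\otimes s}\otimes I^{\otimes t}$, so $\mathcal{A} \xrightarrow{U_C} A_\bot \otimes \mathcal{A}'$. The simultaneous factorization $U \xrightarrow{U_C} U_\bot \otimes U'$ follows by applying the same conjugation to $U$'s Clifford $+$ Pauli-rotation decomposition (the gate-set assumption stated just before Algorithm \ref{alg-red}).

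For optimality, suppose towards contradiction that some Clifford $V$ satisfies the diagram with a strictly larger trivial factor $B_\bot$. Conjugating back through $U_C$, this would produce either (a) an additional signed Pauli outside $\langle P_1,\ldots,P_s\rangle$ that is a common stabilizer of every $A_i$, or (b) an additional Pauli outside the already extracted span that is an identity-direction of every $A_i$. Both are ruled out by the maximality of the two extraction phases of Algorithm \ref{alg-red} applied to $\mathcal{A}$ itself; the only way such an extra Pauli can exist is if we implicitly replace $\mathcal{A}$ by a different SPD $\widetilde{\mathcal{A}}$ for $A$ that exposes more locality. This is precisely where the theorem's hypotheses intervene. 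Under condition (1), I would use C-Inv and T-Inv to transport the $B_\bot$-factorization back into the original basis and read off an SPD for $A$ whose 1-norm $\nu$ is at most $\nu(\mathcal{A})$, with equality only if its weighted 1-norm $\nu^*$ is strictly smaller than $\nu^*(\mathcal{A})$; either outcome contradicts the stated optimality of $\mathcal{A}$. Condition (2) is handled symmetrically by swapping the roles of $\nu$ and $\nu^*$.

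The main obstacle is the optimality half. Correctness essentially reduces to Lemma \ref{ex-clifford} plus stabilizer bookkeeping, but optimality requires translating the geometric statement ``$V$ exposes more locality than $U_C$'' into a concrete competitor SPD $\widetilde{\mathcal{A}}$ and a quantitative inequality on $\nu$ or $\nu^*$. The case split between intersection-type and union-type excess locality, and the precise use of the tie-breaker in each hypothesis, is delicate: without the secondary optimality, the statement is false (two $\nu$-optimal SPDs for the same operator can expose different amounts of locality), so the tie-breaker has to be used in an essential way to close the contradiction.
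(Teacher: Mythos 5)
Your overall architecture matches the paper's: correctness via the two extraction phases plus Lemma \ref{ex-clifford}, and optimality by contradiction, with the tie-breaker hypotheses used to show that a competitor's factorization must already be visible in the components of $\mathcal{A}$. You have also correctly identified that the secondary optimality is essential. However, both halves have genuine gaps. On correctness: the factorization of $U$ is not an afterthought that ``follows by applying the same conjugation.'' The algorithm must co-process $P_U$ (the Pauli of the rotation $e^{i\theta P_U}$) with the stabilizer generators in \emph{both} phases --- it is appended to the generators of $G_1\cap\cdots\cap G_m$ before building $V_1$ (with a special case when $P_U$ anti-commutes with the intersection group, steps 4--7), and its image $Z_{l+1}$ is then included in the set whose maximal independent subset is normalized in steps 10--20. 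Without this, nothing forces $U_C U U_C^\dag$ to act only on the compressed subsystem. Your description of the second phase is also off: the $Q_i$ are not identity-directions orthogonal to the stabilizers; they are a maximal independent set of the stabilizer generators themselves, normalized into a $t$-normal form (commuting part plus anti-commuting pairs) so that $V_2$ compresses their span into $(s+t)/2$ qubits, the identity part being the complement.

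On optimality, two things are missing. First, the step where the hypotheses enter needs to be an actual proposition, not a gesture: one must show that if $\mathcal{A}$ is $\nu$-optimal with minimal $\nu^*$ (or vice versa) and $A$ factorizes as a Clifford conjugate of $B\otimes\overline{A}$ with $B$ a stabilizer projector, then \emph{every component} $A_i$ factorizes as $B\otimes \overline{A}_i$ in the same frame. The paper proves this (Propositions \ref{opt-form-tensor} and \ref{opt-form-tensor2}) by compressing a putative SPD through $\bra{0}\cdot\ket{0}$ and $\tr_S(\cdot)$, using Lemmas \ref{trace-out-00} and \ref{trace-out-I} to control how coefficients and traces can only shrink, and then invoking tensor invariance of $\xi$ and $\xi^*$ to force equality; your sketch of ``read off an SPD with no worse $\nu$ and strictly smaller $\nu^*$'' is the right instinct but the inequality chain is where all the work is. Second, even granting componentwise factorization, ``maximality of the two extraction phases'' does not by itself bound the competitor's subsystem size. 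The algorithm's subsystem has size $(s+t)/2$, which depends on the symplectic structure of the span of all stabilizer generators together with $P_U$, not just on its dimension $s$. The paper closes this by building an explicit commutation- and independence-preserving bijection $F$ between the competitor's Pauli span and the algorithm's, showing that the parameter $t$ of a maximal independent $t$-normal set is an invariant of the span (Lemma \ref{normalset}), and deducing $\overline{n}\geq(\overline{s}+\overline{t})/2=(s+t)/2$. It also needs a short eigenvalue argument to show the competitor's factor $W$ acting on the trivial part of $U$ must be the identity. None of these ingredients is present in your proposal, and without them the contradiction does not close.
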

Theorem~\ref{opt-local} states that our locality exploiting algorithm can always leverage some locality (i.e., convert the problem into a subsystem), and under certain mild conditions, it can maximize locality exploitation (i.e., convert the problem into a subsystem with the smallest possible size). Both conditions in Theorem~\ref{opt-local} can be achieved by a strategy named multi-objective optimization, more details are shown in section \ref{sec-mul-obj-opt}. However, it should be noted that, in practice, the system size might be too large to solve the multi-objective optimization problem, so that the conditions in Theorem~\ref{opt-local} may not hold at all times. Even worse, there might be no locality to exploit in some cases. To alleviate this issue, we also propose other acceleration techniques such as sparsity exploiting (see Section~\ref{sec-5272319}) and Clifford channel decomposition (see Section~\ref{sec-5272320}).

\begin{algorithm}[t!]
\small
\caption{Locality exploiting, LOC\_EXPL\((U,\mathcal{A})\)}
\begin{algorithmic}[1]
  \Require Non-Clifford \(U\), SPD \(\mathcal{A}=\{(a_i, A_i)\}_{i=1}^m\).
  \Ensure Clifford \(U_C\) that satisfies diagram (\ref{diag-loc}).
  \State Suppose \(U=e^{i\theta P_U}\), \(P_U\) is a Pauli operator and \(|\theta| \leq \pi\)
  \State Let \(G_i\) be the group associated to stabilizer projector \(A_i\) 
  \State Let \(\hat{P}_1,\ldots,\hat{P}_l\) be the independent generators of \(G_1\cap\ldots \cap G_m\)
  \If {\(P_U\) anti-commutes with some \(\hat{P}_i\)}
  	\State Let \(P_U\) only anti-commutes with \(\hat{P}_l\) (exchange the subscripts of \(\hat{P}_i\) or set \(\hat{P}_i \gets \hat{P}_i \hat{P}_j\) if necessary)
  	\State \(l\gets l-1\)
  \EndIf
  \State Generate the Clifford unitary \(V_1\) that maps \(\hat{P}_1,\ldots,\hat{P}_l,P_U\) to \(Z_1,\ldots,Z_l,Z_{l+1}\) \Comment Lemma \ref{ex-clifford}
  \State For each \(V_1G_iV_1^\dag\), find a set of independent generators of the form \(Z_1,\ldots,Z_l,I_l\otimes P_{i,1},\ldots,I_l\otimes P_{i,l_i}\)
  \State Find a maximal independent set \(Q_1,\ldots,Q_s\) of \(Z_1, P_{1,1},\ldots,P_{1,l_1},\ldots,P_{m,1},\ldots,P_{m,l_m}\)
  \For {\(i\leq s,i<j\leq s\)}
    \If {\(Q_i,Q_j\) anti-commute}
      \For {\(k\leq s,k\neq i,j\)}
        \State If \(Q_k,Q_i\) anti-commute, then \(Q_k\gets \underline{Q_kQ_j}\)
        \State If \(Q_k,Q_j\) anti-commute, then \(Q_k\gets \underline{Q_kQ_i}\)
      \EndFor
    \EndIf
  \EndFor
  \State Exchange the subscripts of \(Q_1,\ldots,Q_s\) so that \(Q_1,\ldots,Q_t\) commute with all \(Q_i,i\leq s\) and \(Q_{t+2i-1}\) only anti-commutes with \(Q_{t+2i}\) for \(i=1,\ldots,(s-t)/2\)
  \State Generate a Clifford unitary \(V_2\) that maps \(Q_1,\ldots,Q_t\) to \(Z_1,\ldots,Z_t\) and maps \(Q_{t+1},\ldots,Q_s\) to \(Z_{t+1},X_{t+1},\ldots,Z_{(t+s)/2},X_{(t+s)/2}\) \Comment Lemma \ref{ex-clifford}
  \State \(V\gets V_2 V_1\)
  \State \(U_C:= V\)
  \State \Return \(U_C\)
\end{algorithmic}
\label{alg-red}
\end{algorithm}

\subsection{Implementation Details}\label{sec-131311}
The algorithm presented above still faces some  practical challenges in its implementation. In this subsection, we propose several important  techniques to meet these challenges and thus improve the feasibility of the algorithm.

\subsubsection{Multi-Objective Optimization}\label{sec-mul-obj-opt}
To meet the requirements of Theorem \ref{opt-local}, we need to obtain an SPD that is primarily \(\nu\)-optimal and secondarily \(\nu^*\)-optimal in the forward propagation (or primarily \(\nu^*\)-optimal and secondarily \(\nu\)-optimal in the backward propagation). Consider the case of forward propagation. Let \(\overline{\theta}\) be the optimum value of the optimization problem (\ref{cvx-opt}) and define the secondary optimization problem:
\begin{equation}\label{multi-obj-opt}
\begin{gathered}
\min_x\quad \|\omega \circ x\|_1\\
\textup{subject to}\quad \mathcal{S}x=b,\,\, \|x\|_1\leq \overline{\theta}
\end{gathered}
\end{equation}
Then, the solution of Problem (\ref{multi-obj-opt})  is primarily \(\nu\)-optimal and secondarily \(\nu^*\)-optimal. In practice, we use the perturbation method~\cite{mangasarian1979nonlinear} to compute the solution at once. Consider the following problem:
\begin{equation}\label{multi-obj-opt2}
\begin{gathered}
\min_x\quad \|x\|_1+\epsilon \|\omega\circ x\|_1\\
\textup{subject to}\quad \mathcal{S}x=b
\end{gathered}
\end{equation}
It can be shown that, for sufficiently small \(\epsilon\), the optimal solution \(x\) of Problem (\ref{multi-obj-opt2}) is also the optimal solution of Problem (\ref{multi-obj-opt}). For the case of backward propagation, we have a similar conclusion.

\subsubsection{Sparsity Exploiting}\label{sec-5272319}
Given an SPD \(\mathcal{A}=\{(a_i, A_i)\}_{i=1}^k\), we say that \(\mathcal{A}\) is \(k\)-sparse as it has \(k\) non-zero terms in the decomposition. Since \(\mathcal{A}\) is represented by \(k\) pairs of real number and stabilizer projector, any non-trivial operation on \(\mathcal{A}\) has time complexity at least \(\Omega(k)\). Besides, in the worst case, \(k\) may grow exponentially with the number of non-Clifford gates applied on this SPD. To accelerate our algorithm, we exploit the sparsity of \(\mathcal{A}\) using the iterative log heuristic method~\cite{lobo2007portfolio}. This method finds a sparse solution in the feasible set by finding the local optimal point of the logarithmic function \(\sum_i\log(\delta+|x_i|)\). As this objective function is not convex, a heuristic method is applied where we first replace the objective by its first-order approximation, then solve and re-iterate. 

For efficiency, we only use the stabilizer projectors that already exist in \(\mathcal{A}\) as the basis, instead of using all stabilizer projectors. We also require the solution is primarily \(\nu\)-optimal and secondarily \(\nu^*\)-optimal (in the forward propagation for example). Combining the perturbation method~\cite{mangasarian1979nonlinear} mentioned earlier, we obtain the following optimization problem:
\begin{equation}\label{sparse-opt}
\begin{gathered}
\min_x \quad \|x\|_1+\epsilon\|\omega\circ x\|_1+\epsilon^2 \|L\circ x\|_1\\
\textup{subject to}\quad \mathcal{S}x=b
\end{gathered}
\end{equation}
where \(\mathcal{S}_{ij}=\tr(P_i A_j)\), \(P_i\) is the \(i\)-th Pauli operator, \(A_j\) is the \(j\)-th stabilizer projector in SPD \(\mathcal{A}\), \(b_i=\tr(P_i A)\) and the weight vector \(L\) is re-adjusted in each iteration based on the rule:
\begin{equation}
L_i=1/(\gamma+|x_i|)
\end{equation}
where \(\gamma\) is a small threshold value and the iteration number is fixed to \(5\) for simplicity. Note that in problem (\ref{sparse-opt}), the primary, secondary and tertiary objectives are \(\nu\), \(\nu^*\) and sparsity respectively and this technique also applies to the backward propagation similarly. We use this technique to make the SPD sparse when a non-Clifford gate is applied in the forward or backward propagation.

\subsubsection{Clifford Channel Decomposition}\label{sec-5272320}
If the system size after exploiting locality and sparsity is still large, we can use the Clifford channel decomposition~\cite{bennink2017unbiased} to directly compute the sub-optimal SPD. For example, we use \(\boldsymbol{Z}(\theta)\) (bold \(Z\)) to denote the quantum channel corresponding to the coherent rotation \(\rho\mapsto Z(\theta) \rho Z(\theta)^\dag\), where \(Z(\theta)=\begin{bmatrix}1&0\\0&e^{i\theta}\end{bmatrix}\). Then \(\boldsymbol{Z}(\theta)\) can be decomposed as
\begin{equation}
\boldsymbol{Z}(\theta)=\frac{1+\cos\theta-\sin\theta}{2}\boldsymbol{I}+\frac{1-\cos\theta-\sin\theta}{2}\boldsymbol{Z}+\sin\theta\boldsymbol{S}
\end{equation}
where \(I\) denotes the identity channel, \(\boldsymbol{Z}\) and \(\boldsymbol{S}\) denote the channel \(\rho\mapsto Z\rho Z^\dag\) and channel \(\rho \mapsto S\rho S^\dag\),  respectively. The numerical results demonstrate that this decomposition is nearly optimal for \(0\leq \theta\leq\pi\), \textit{i.e.}, it has the nearly smallest 1-norm of coefficients among all Clifford channel decomposition (however, this does NOT imply the near-optimality of the resulted SPD). This decomposition can also be generalized to any Pauli rotation channel \(\boldsymbol{P}(\theta):= \rho \mapsto e^{-i\theta/2 P}\rho e^{i\theta/2 P}\), where \(P\) is a Pauli operation (with plus or minus sign) and \(0\leq \theta\leq \pi\). We can find a Clifford unitary \(U\) that maps \(P\) to \(Z_1\). Then \(\boldsymbol{P}(\theta)\) can be decomposed as:
\begin{equation*}
\begin{split}
\boldsymbol{P}(\theta)&=\rho\mapsto\ U^\dag e^{-i\theta/2 Z_1} U\rho U^\dag e^{i\theta/2 Z_1}U\\
&= \boldsymbol{U^\dag}\cdot\boldsymbol{Z}_1(\theta)\cdot\boldsymbol{U}\\
&=\frac{1+\cos\theta-\sin\theta}{2}\boldsymbol{I}+\frac{1-\cos\theta-\sin\theta}{2}\boldsymbol{U^\dag}\boldsymbol{Z}_1\boldsymbol{U}+\sin\theta\boldsymbol{U^\dag}\boldsymbol{S}\boldsymbol{U}
\end{split}
\end{equation*}
which is also nearly optimal for \(0\leq\theta\leq\pi\).
Using the Clifford channel decomposition, we can directly apply these three Clifford channels on SPD separately and then merge the resulted SPDs.

\section{Evaluation}\label{sec-eva}

\subsection{Experimental Settings}

The algorithms proposed in the previous sections are implemented on Python3, with the help of IBM Qiskit~\cite{qiskit} --- an open source SDK for quantum computation --- and CVXPY~\cite{diamond2016cvxpy} --- an open source Python-embedded modeling language for convex optimization problems. Note that we use the Clifford circuit synthesis algorithm implemented by Qiskit, which is based on the method in \cite{bravyi2021clifford} and yields lower circuit size (though non-optimal in general). 

To demonstrate the effectiveness and scalability of our method, we conduct the experiments on several commonly used quantum circuits of  different sizes, including Quantum Fourier Transform (QFT)~\cite{nielsen2002quantum}, Bernstein-Vazirani (BV)~\cite{bernstein1997quantum} and Quantum Volume (QV)~\cite{moll2018quantum}. The Quantum Fourier Transform is the quantum analogue of the discrete Fourier transform, which is used in many quantum algorithms such as Shor's factoring, quantum hidden subgroup algorithm, as well as the QFT-based adder. The Bernstein-Vazirani algorithm is a quantum algorithm that can efficiently learn the secret string \(s\) of a linear Boolean function \(f(x)=x\cdot s\). The Quantum Volume circuits are used to quantify the ability (such as the operation fidelity, connectivity and available gate sets) of near-term quantum computing systems. The detailed information such as circuit size, the depth and number of non-Clifford gates of our benchmark circuits is summarized in Table \ref{table-bench}.

\begin{table}[h]
\centering
\caption{Benchmark Circuits}
\renewcommand{\arraystretch}{1.5}
\setlength{\tabcolsep}{2mm}{
\begin{tabular}{ccccc}
\hline
 & Qubits & Size & Depth & Non-Clifford Gates \\
\hline
QFT\_3 & 3 & 18 & 14 & 9 \\
QFT\_5 & 5 & 55 & 30 & 30 \\
QFT\_10 & 10 & 235 & 70 & 135 \\
QV\_5 & 5 & 135 & 43 & 105 \\
QV\_7 & 7 & 201 & 43 & 156 \\
BV\_10 & 10 & 29 & 12 & 0 \\
BV\_100 & 100 & 299  & 102 & 0 \\
\hline
\end{tabular}}\label{table-bench}
\end{table}

In our experiments, we adopt the missing-gate fault model~\cite{paler2012detection,bera2017detection,hayes2004testing}, where the gate is assumed to be missing (\textit{i.e.}, the gate is replaced by the identity operation). However, any unitary fault is equivalently handled by our algorithms, with no additional overhead in general. An example is shown in Fig. \ref{fig-exp1}, where both the missing-gate fault and a general unitary fault (the replaced-by-\(R_X(\pi/3)\) fault in this example) are handled on QFT\_3 circuit.

The remaining part of this section is arranged as follows. In subsection \ref{exp-tpg}, we evaluate the proposed SPD-generation algorithm (Algorithm \ref{alg-spd}) for test pattern generation, on several benchmark circuits. In subsection \ref{exp-simulation}, we evaluate the proposed sampling algorithm (Algorithm \ref{alg-smp}) for test application, using the previously generated test pattern, and the overall performance of our method for the fault detection of quantum circuit is also reported.

\begin{figure}[H]
\centering
\subfloat[The 3-quibt QFT circuit under test]{\label{fig-152206}
\scalebox{0.9}{
\Qcircuit @C=0.8em @R=0.2em @!R { \\
	 	\nghost{} & \gate{\mathrm{H}} & \gate{\mathrm{R_Z}\,(\mathrm{\frac{\pi}{4}})} & \ctrl{1} & \qw & \ctrl{1} & \gate{\mathrm{R_Z}\,(\mathrm{\frac{\pi}{8}})} & \ctrl{2} & \qw & \ctrl{2} & \qw & \qw & \qw & \qw & \qw & \qw & \qw&\nghost{}\\
	 	\nghost{} & \gate{\mathrm{R_Z}\,(\mathrm{\frac{\pi}{4}})} & \qw & \targ & \gate{\mathrm{R_Z}\,(\mathrm{\frac{-\pi}{4}})} & \targ & \gate{\mathrm{H}} & \qw & \gate{\textcolor{red}{\mathrm{R_Z}\,(\mathrm{\frac{\pi}{4}})}} & \qw & \qw & \ctrl{1} & \qw & \ctrl{1} & \qw & \qw & \qw&\nghost{}\\
	 	\nghost{} & \gate{\mathrm{R_Z}\,(\mathrm{\frac{\pi}{8}})} & \qw & \qw & \qw & \qw & \qw & \targ & \gate{\mathrm{R_Z}\,(\mathrm{\frac{-\pi}{8}})} & \targ & \gate{\mathrm{R_Z}\,(\mathrm{\frac{\pi}{4}})} & \targ & \gate{\mathrm{R_Z}\,(\mathrm{\frac{-\pi}{4}})} & \targ & \gate{\mathrm{H}} & \qw & \qw&\nghost{}\\
\\ }}
}
\par
\vspace{2mm}
\tikz{\draw[-,black, dashed](-9.5,0) -- (5.0,0);}
\par
\vspace{0.5mm}
\subfloat[$0.25,\, \langle Z_2\rangle$]{ 
\scalebox{0.9}{
\Qcircuit @C=1.0em @R=1em @!R {
	 	\nghost{} & \lstick{\ket{0}}  & \qswap \qwx[1]  & \qw& \nghost{} \\
	 	\nghost{} & \lstick{}  & \qswap  & \qw & \nghost{}\\
	 	\nghost{} & \lstick{} & \qw  & \qw& \nghost{}\\
}}
}
\hspace{3mm}
\subfloat[$0.383,\, \langle Y_2\rangle$]{ 
\scalebox{0.9}{
\Qcircuit @C=1.0em @R=0.65em @!R {
	 	\nghost{}& \qswap \qwx[1] & \gate{\mathrm{S}} & \gate{\mathrm{H}} & \gate{\mathrm{S}} & \gate{\mathrm{X}} & \rstick{\bra{0}} \qw & \nghost{}\\
	 	\nghost{}& \qswap & \qw & \qw & \qw & \qw & \rstick{} \qw & \nghost{}\\
	 	\nghost{}& \qw & \qw & \qw & \qw & \qw & \rstick{} \qw & \nghost{}\\
}}
}
\hspace{3mm}
\subfloat[$0.383,\, \langle X_2X_3\rangle$]{ 
\scalebox{0.9}{
\Qcircuit @C=1.0em @R=0.65em @!R {
	 	\nghost{} & \qswap \qwx[1] & \qswap \qwx[2] & \ctrl{2} & \gate{\mathrm{H}} & \rstick{\bra{0}} \qw & \nghost{}\\
	 	\nghost{} & \qswap & \qw & \qw & \qw & \rstick{} \qw & \nghost{}\\
	 	\nghost{} & \qw & \qswap & \targ & \qw & \rstick{} \qw & \nghost{}\\
}}
}
\hspace{3mm}
\subfloat[$-0.153,\, \langle -X_3,X_2\rangle$]{ 
\scalebox{0.9}{
\Qcircuit @C=1.0em @R=0.65em @!R {
	 	\nghost{} & \qswap \qwx[2] & \gate{\mathrm{H}} & \gate{\mathrm{X}} & \rstick{\bra{0}} \qw & \nghost{}\\
	 	\nghost{} & \qw & \gate{\mathrm{H}} & \qw & \rstick{\bra{0}} \qw & \nghost{}\\
	 	\nghost{} & \qswap & \qw & \qw & \rstick{} \qw & \nghost{}\\
}}
}
\par
\subfloat[$0.388,\, \langle -X_3,-X_2\rangle$]{ 
\scalebox{0.9}{
\Qcircuit @C=1.0em @R=0.65em @!R {
	 	\nghost{} & \qswap \qwx[2] & \gate{\mathrm{H}} & \gate{\mathrm{X}} & \rstick{\bra{0}} \qw & \nghost{}\\
	 	\nghost{} & \qw & \gate{\mathrm{H}} & \gate{\mathrm{X}} & \rstick{\bra{0}} \qw & \nghost{}\\
	 	\nghost{} & \qswap & \qw & \qw & \rstick{} \qw & \nghost{}\\
}}
}\hspace{3mm}
\subfloat[$0.388,\, \langle X_3, Y_2\rangle$]{ 
\scalebox{0.9}{
\Qcircuit @C=1.0em @R=0.65em @!R {
	 	\nghost{} & \qswap \qwx[2] & \gate{\mathrm{H}} & \qw & \qw & \qw & \rstick{\bra{0}} \qw & \nghost{}\\
	 	\nghost{} & \qw & \gate{\mathrm{S}} & \gate{\mathrm{H}} & \gate{\mathrm{S}} & \gate{\mathrm{X}} & \rstick{\bra{0}} \qw & \nghost{}\\
	 	\nghost{} & \qswap & \qw & \qw & \qw & \qw & \rstick{} \qw & \nghost{}\\
}}
}\hspace{3mm}
\subfloat[$-0.153,\, \langle X_3,-Y_2\rangle$]{ 
\scalebox{0.9}{
\Qcircuit @C=1.0em @R=0.65em @!R {
	 	\nghost{} & \qswap \qwx[2] & \gate{\mathrm{H}} & \qw & \qw & \rstick{\bra{0}} \qw & \nghost{}\\
	 	\nghost{} & \qw & \gate{\mathrm{S}} & \gate{\mathrm{H}} & \gate{\mathrm{S}} & \rstick{\bra{0}} \qw & \nghost{}\\
	 	\nghost{} & \qswap & \qw & \qw & \qw & \rstick{} \qw & \nghost{}\\
}}
}
\par
\vspace{2mm}
\tikz{\draw[-,black, dashed](-9.5,0) -- (5.0,0);}
\par
\vspace{0.5mm}
\subfloat[$0.154,\, \langle Z_2\rangle$]{ 
\scalebox{0.9}{
\Qcircuit @C=1.0em @R=1em @!R {
	 	\nghost{} & \lstick{\ket{0}} & \qswap \qwx[1]  & \qw& \nghost{} \\
	 	\nghost{} & \lstick{}  & \qswap  & \qw & \nghost{}\\
	 	\nghost{} & \lstick{} & \qw  & \qw& \nghost{}\\
}}
}
\hspace{3mm}
\subfloat[$0.005,\, \langle X_1X_2\rangle$]{ 
\scalebox{0.9}{
\Qcircuit @C=1.0em @R=0.65em @!R {
	 	\nghost{} & \lstick{\ket{0}} & \gate{\mathrm{H}} & \qswap \qwx[1] & \targ & \qw & \nghost{}\\
	 	\nghost{} & \lstick{} & \qw & \qswap & \ctrl{-1} & \qw & \nghost{}\\
	 	\nghost{} & \lstick{} & \qw & \qw & \qw & \qw & \nghost{}\\
}}
}
\hspace{3mm}
\subfloat[$-0.034,\, \langle -X_1X_2\rangle$]{ 
\scalebox{0.9}{
\Qcircuit @C=1.0em @R=0.65em @!R {
	 	\nghost{} & \lstick{\ket{0}} & \gate{\mathrm{H}} & \qswap \qwx[1] & \targ & \qw & \qw &\nghost{}\\
	 	\nghost{} & \lstick{} & \qw & \qswap & \ctrl{-1} & \gate{\mathrm{Z}} &\qw & \nghost{}\\
	 	\nghost{} & \lstick{} & \qw & \qw & \qw & \qw &\qw & \nghost{}\\
}}
}
\hspace{3mm}
\subfloat[$-0.034,\, \langle Y_2\rangle$]{ 
\scalebox{0.9}{
\Qcircuit @C=1.0em @R=0.65em @!R {
	 	\nghost{} & \lstick{\ket{0}} & \gate{\mathrm{S}} & \gate{\mathrm{H}} & \gate{\mathrm{S}} & \qswap \qwx[1] & \qw & \nghost{}\\
	 	\nghost{} & \lstick{} & \qw & \qw & \qw & \qswap & \qw & \nghost{}\\
	 	\nghost{} & \lstick{} & \qw & \qw & \qw & \qw & \qw & \nghost{}\\
}}
}
\par
\subfloat[$0.005,\, \langle -Y_2\rangle$]{ 
\scalebox{0.9}{
\Qcircuit @C=1.0em @R=0.65em @!R {
	 	\nghost{} & \lstick{\ket{0}} & \gate{\mathrm{S}} & \gate{\mathrm{H}} & \gate{\mathrm{S}} & \qswap \qwx[1] & \qw & \qw & \nghost{}\\
	 	\nghost{} & \lstick{} & \qw & \qw & \qw & \qswap & \gate{\mathrm{X}} & \qw & \nghost{}\\
	 	\nghost{} & \lstick{} & \qw & \qw & \qw & \qw & \qw & \qw & \nghost{}\\
}}
}
\hspace{3mm}
\subfloat[$0.153,\, \langle X_1,X_2\rangle$]{ 
\scalebox{0.9}{
\Qcircuit @C=1.0em @R=0.65em @!R {
	 	\nghost{} & \lstick{\ket{0}} & \gate{\mathrm{H}} & \qw & \nghost{}\\
	 	\nghost{} & \lstick{\ket{0}} & \gate{\mathrm{H}} & \qw & \nghost{}\\
	 	\nghost{} & \lstick{} & \qw & \qw & \nghost{}\\
}}
}
\hspace{3mm}
\subfloat[$0.153,\, \langle -X_1,-Y_2\rangle$]{ 
\scalebox{0.9}{
\Qcircuit @C=1.0em @R=0.65em @!R {
	 	\nghost{} & \lstick{\ket{0}} & \gate{\mathrm{H}} & \gate{\mathrm{Z}} & \qw & \qw & \qw & \nghost{}\\
	 	\nghost{} & \lstick{\ket{0}} & \gate{\mathrm{S}} & \gate{\mathrm{H}} & \gate{\mathrm{S}} & \gate{\mathrm{X}} & \qw & \nghost{}\\
	 	\nghost{} & \lstick{} & \qw & \qw & \qw & \qw & \qw & \nghost{}\\
}}
}
\par
\subfloat[$0.207,\, \langle Z_2\rangle$]{ 
\scalebox{0.9}{
\Qcircuit @C=1.1em @R=1.0em @!R {
	 	\nghost{} & \qswap \qwx[1] & \rstick{\bra{0}} \qw & \nghost{}\\
	 	\nghost{} & \qswap & \rstick{} \qw & \nghost{}\\
	 	\nghost{} & \qw & \rstick{} \qw & \nghost{}\\
}}
}
\hspace{3mm}
\subfloat[$0.361,\, \langle Y_2\rangle$]{ 
\scalebox{0.9}{
\Qcircuit @C=1.0em @R=0.65em @!R {
	 	\nghost{}  & \qswap \qwx[1] & \gate{\mathrm{S}} & \gate{\mathrm{H}} & \gate{\mathrm{S}} & \gate{\mathrm{X}} & \rstick{\bra{0}} \qw & \nghost{}\\
	 	\nghost{}  & \qswap & \qw & \qw & \qw & \qw & \rstick{} \qw & \nghost{}\\
	 	\nghost{}  & \qw & \qw & \qw & \qw & \qw & \rstick{} \qw & \nghost{}\\
}}
}
\hspace{3mm}
\subfloat[$-0.137,\, \langle -Y_2\rangle$]{ 
\scalebox{0.9}{
\Qcircuit @C=1.0em @R=0.65em @!R {
	 	\nghost{} & \qswap \qwx[1] & \gate{\mathrm{S}} & \gate{\mathrm{H}} & \gate{\mathrm{S}} & \rstick{\bra{0}} \qw & \nghost{}\\
	 	\nghost{} & \qswap & \qw & \qw & \qw & \rstick{} \qw & \nghost{}\\
	 	\nghost{} & \qw & \qw & \qw & \qw & \rstick{} \qw & \nghost{}\\
}}
}
\hspace{3mm}
\subfloat[$-0.137,\, \langle -X_2X_3\rangle$]{ 
\scalebox{0.9}{
\Qcircuit @C=1.0em @R=0.65em @!R {
	 	\nghost{} & \qswap \qwx[1] & \qswap \qwx[2] & \ctrl{2} & \gate{\mathrm{H}} & \gate{\mathrm{X}} & \rstick{\bra{0}} \qw & \nghost{}\\
	 	\nghost{} & \qswap & \qw & \qw & \qw & \qw & \rstick{} \qw & \nghost{}\\
	 	\nghost{} & \qw & \qswap & \targ & \qw & \qw & \rstick{} \qw & \nghost{}\\
}}
}
\par
\subfloat[$0.361,\, \langle X_2X_3\rangle$]{ 
\scalebox{0.9}{
\Qcircuit @C=1.0em @R=0.65em @!R {
	 	\nghost{} & \qswap \qwx[1] & \qswap \qwx[2] & \ctrl{2} & \gate{\mathrm{H}} & \rstick{\bra{0}} \qw & \nghost{}\\
	 	\nghost{} & \qswap & \qw & \qw & \qw & \rstick{} \qw & \nghost{}\\
	 	\nghost{} & \qw & \qswap & \targ & \qw & \rstick{} \qw & \nghost{}\\
}}
}
\hspace{3mm}
\subfloat[$0.345,\, \langle -X_3,-X_2\rangle$]{ 
\scalebox{0.9}{
\Qcircuit @C=1.0em @R=0.65em @!R {
	 	\nghost{} & \qswap \qwx[2] & \gate{\mathrm{H}} & \gate{\mathrm{X}} & \rstick{\bra{0}} \qw & \nghost{}\\
	 	\nghost{} & \qw & \gate{\mathrm{H}} & \gate{\mathrm{X}} & \rstick{\bra{0}} \qw & \nghost{}\\
	 	\nghost{} & \qswap & \qw & \qw & \rstick{} \qw & \nghost{}\\
}}
}
\hspace{3mm}
\subfloat[$0.345,\, \langle X_3,Y_2\rangle$]{ 
\scalebox{0.9}{
\Qcircuit @C=1.0em @R=0.65em @!R {
	 	\nghost{} & \qswap \qwx[2] & \gate{\mathrm{H}} & \qw & \qw & \qw & \rstick{\bra{0}} \qw & \nghost{}\\
	 	\nghost{} & \qw & \gate{\mathrm{S}} & \gate{\mathrm{H}} & \gate{\mathrm{S}} & \gate{\mathrm{X}} & \rstick{\bra{0}} \qw & \nghost{}\\
	 	\nghost{} & \qswap & \qw & \qw & \qw & \qw & \rstick{} \qw & \nghost{}\\
}}
}
\vspace{-2mm}
\caption{An illustrative example of the quantum test patterns represented by SPD. Sub-figure (a) is the 3-qubit QFT circuit using Clifford + \(\exp(i\theta P)\) gates. The red colored \(R_Z(\pi/4)\) gate is suspected to be faulty, and the following test patterns are automatically generated on two different fault models: \textbf{1)} Sub-figures (b)-(h) represent the test patterns for detecting the missing-gate fault, where sub-figure (b) is the input pattern and sub-figures (c)-(h) are the measurement patterns. 
\textbf{2)} Sub-figures (i)-(v) represent the test patterns for detecting a more general unitary fault, \textit{i.e.}, replaced-by-\(R_X(\pi/3)\) fault, where sub-figures (i)-(o) are the input patterns and sub-figures (p)-(v) are the measurement patterns. 
The number and the Pauli operators in the sub-caption of each circuit represent the coefficient and corresponding stabilizer group in the SPD.}
\label{fig-exp1}
\end{figure}

\subsection{Test Generation}\label{exp-tpg}

In this section, we apply the proposed SPD generation algorithm on different benchmark circuits and evaluate the generated test patterns. First, an illustrative example is shown in Fig. \ref{fig-exp1}. In this example, the SPD generation algorithm is applied on the 3-qubit QFT circuit (see sub-figure (a)), where the middle red-colored \(R_z(\pi/4)\) gate is suspected to be faulty. Two different fault models are adopted and the corresponding test patterns are generated respectively. The sub-figures (b)-(h) are test patterns for the missing-gate fault, and the sub-figures (i)-(v) are test patterns for the replaced-by-\(R_X(\pi/3)\) fault. The number and signed Pauli operators in the caption of each sub-figures (excluding sub-figure (a)) represent the coefficient and corresponding stabilizer projector of the SPD. The test patterns that have input (output) labels are input (output) patterns. The input labels represent the specified input, and those qubits without input labels are in the maximally mixed state. The output labels represent the specified measurement output, and those qubits without output labels are simply discarded.

We then evaluate the performance of the proposed SPD generation algorithm in detail. Specifically, we are concerned with the following metrics: \textbf{1)} the product of the SPD 1-norm \(\nu^*(\mathcal{A}_\rho)\nu(\mathcal{B}_M)\), where \(\mathcal{A}_\rho\) denotes the generated SPD for test input \(\rho\) and \(\mathcal{B}_M\) denotes the generated SPD for test measurement \(M\), \textbf{2)} the sparsity of the generated SPD, \textbf{3)} the size of the quantum Clifford circuit that implements the test pattern, and  \textbf{4)} the depth of the quantum Clifford circuit that implements the test pattern. The first metric \(\nu^*(\mathcal{A}_\rho)\nu(\mathcal{B}_M)\) could be treated as a quantity that measures how ``non-Clifford'' the SPD is. It heavily impacts the iteration number needed in the sampling algorithm (see Algorithm \ref{alg-smp}). The second metric reflects the computational complexity required for any non-trivial operation on the SPD. The third and fourth metrics reflect the complexity of the Clifford circuit implementing the test patterns. 

\begin{figure}[h]
\centering
\subfloat[]{
  \includegraphics[width=0.45\linewidth]{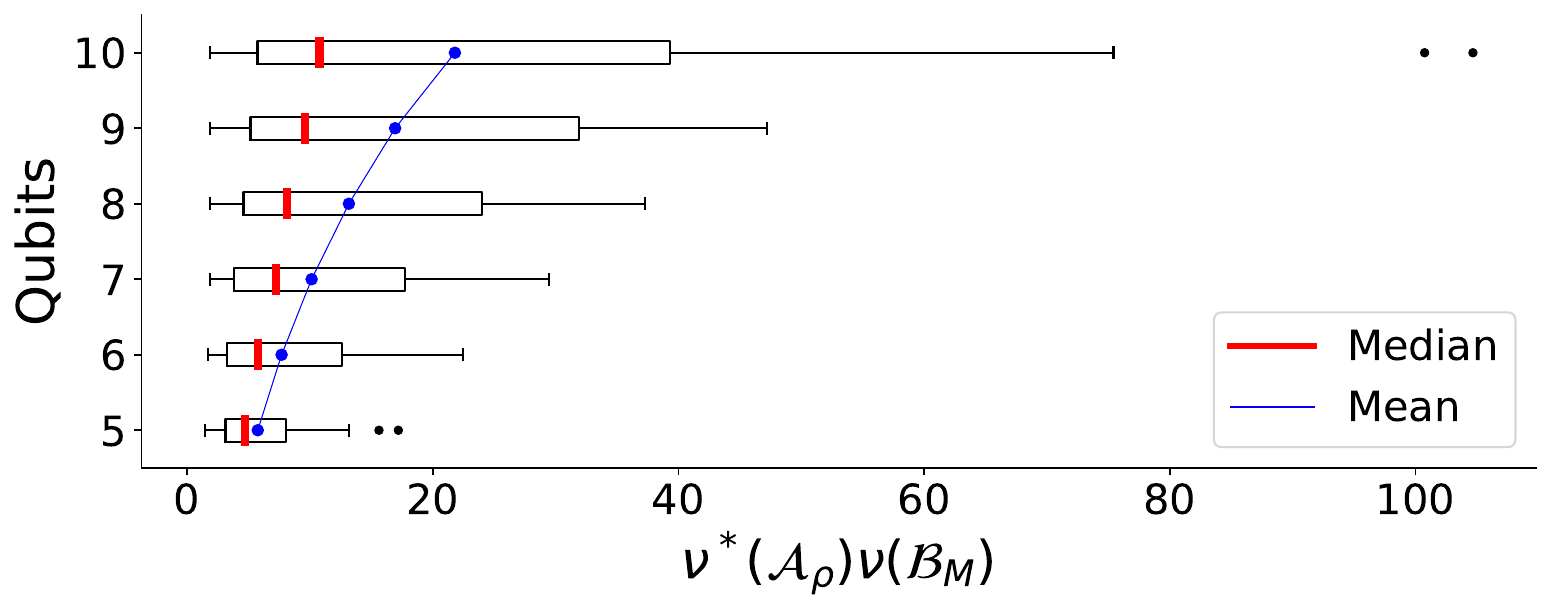}
  }
\hspace{2mm}
\subfloat[]{
  \includegraphics[width=0.45\linewidth]{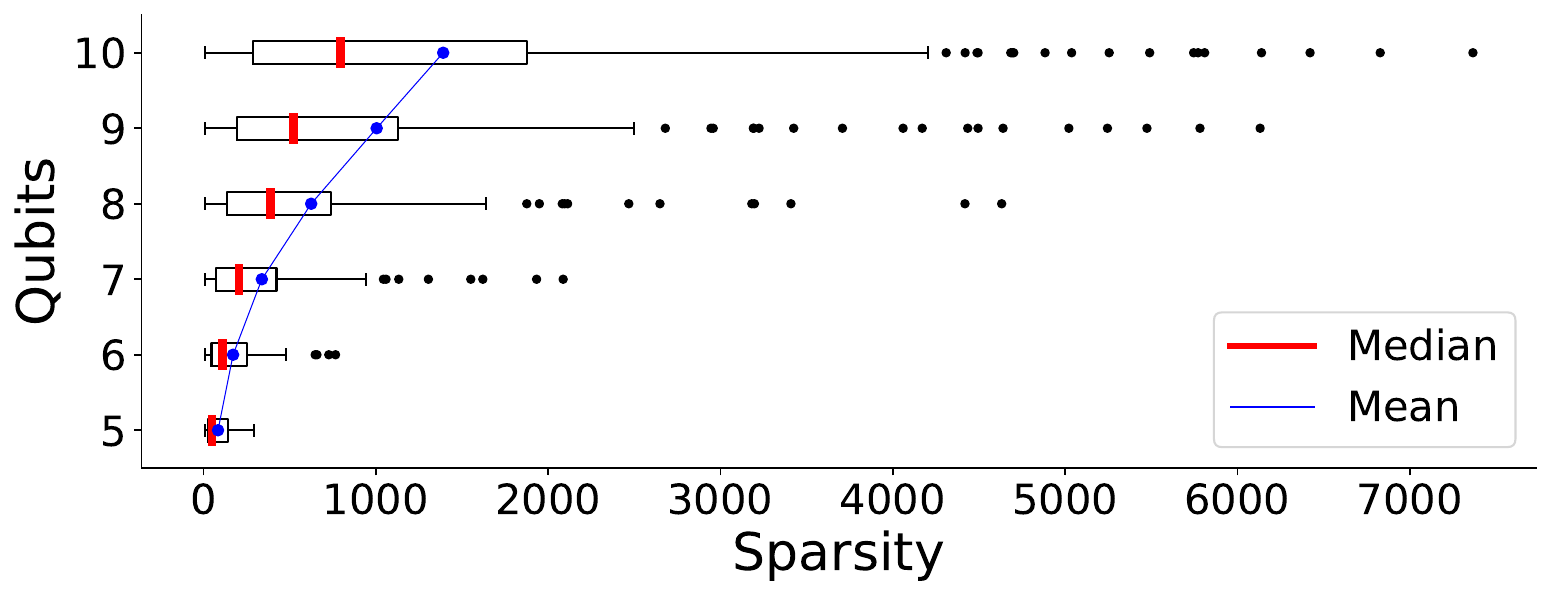}
  }
\par
\subfloat[]{
  \includegraphics[width=0.45\linewidth]{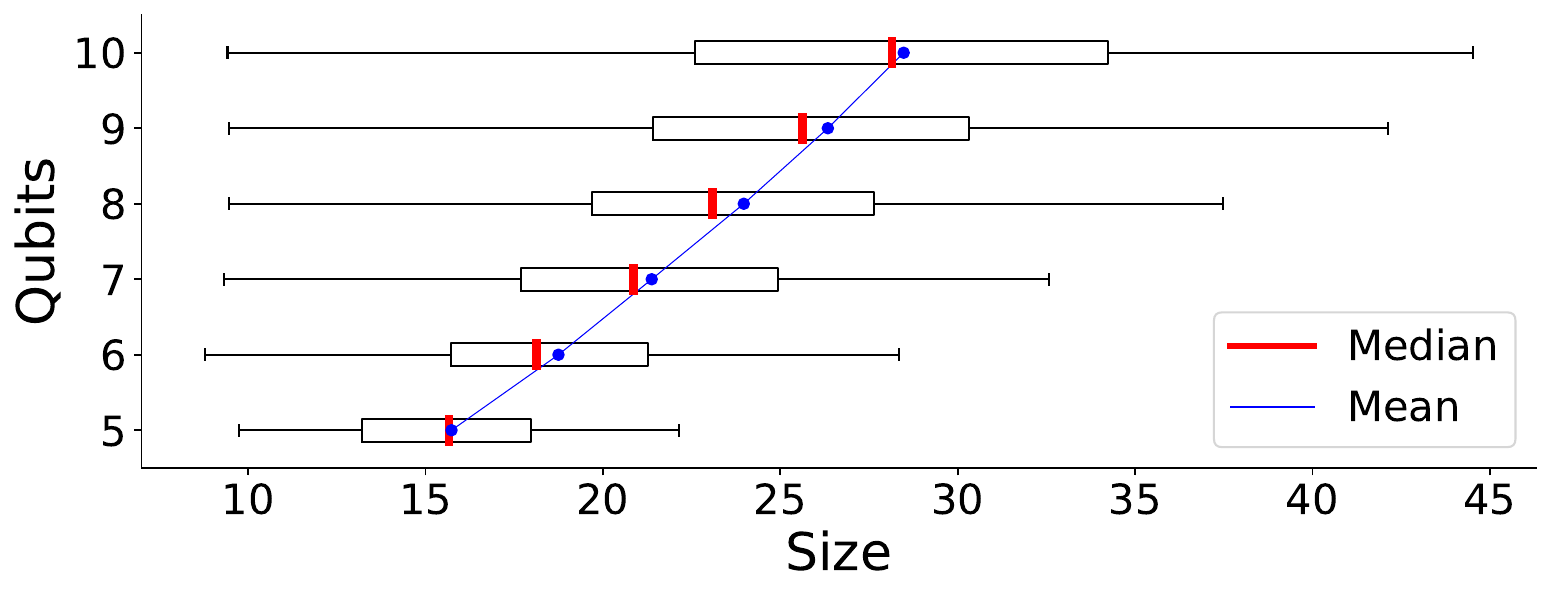}
  }
\hspace{2mm}
\subfloat[]{
  \includegraphics[width=0.45\linewidth]{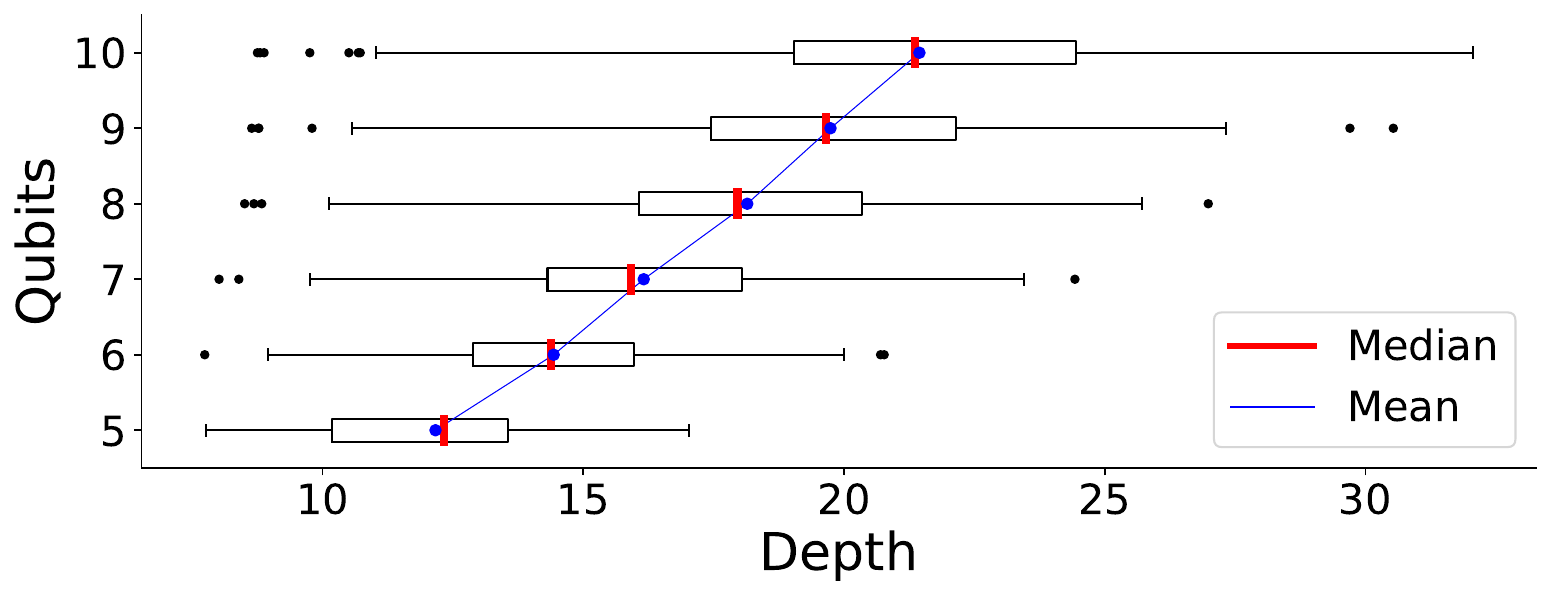}
  }
 \caption{Different metrics for the generated test patterns. The evaluation is based on QFT circuits and the circuit sizes range from 5 qubits to 10 qubits. The box extends from the first quartile to the third quartile of the data.}
\label{fig-spd}
\end{figure}

First, we conduct the experiments on QFT circuits with the sizes ranging from 5 qubits to 10 qubits. For each potential fault in the circuit, the corresponding test patterns (represented by SPD) are generated by our algorithm, and then evaluated with the aforementioned metrics. The results are shown in Fig. \ref{fig-spd}. It is worth noting that, though the maximum of \(\nu^*(\mathcal{A}_\rho) \nu(\mathcal{B}_M)\) grows rapidly, the median of \(\nu^*(\mathcal{A}_\rho) \nu(\mathcal{B}_M)\) grows slowly with the number of qubits. This demonstrates good scalability of the SPD in a considerable proportion of the cases. Similarly, the median of sparsity scales moderately compared to its maximum value, which demonstrates the computational efficiency of the SPD generation in the average sense. We believe that the SPD generation algorithm can be further accelerated using advanced parallel techniques on classical computer. We also observe that the averaged circuit size and depth of the test pattern scale almost linearly, which further ensures the practical feasibility of the proposed test application scheme.
Then, we also evaluate the QV and BV circuits, where the results are shown in Table \ref{table-spd}. Note that for the 7-qubit QV circuit, the metrics \(\nu^*(\mathcal{A}_\rho)\), \(\nu(\mathcal{B}_M)\) and the sparsity are rather higher than others. This is likely because QV is a  kind of randomized quantum circuits designed for quantifying the maximal capacity of a quantum device, so that it is much more complicated than most of the quantum circuits for normal uses. In contrast, the BV circuit contains only Clifford gates, so that our algorithm even scales well for the circuit size up to 100 qubits.

\begin{table}[t]
\centering
\caption{Averaged metric values of the generated test patterns on different benchmark circuits. The numbers in parentheses represent the optimal circuit size (depth) if we carefully choose the initial subspace \(Q\) (see Equation (\ref{exp-rhoM})).}
\renewcommand{\arraystretch}{1.5}
\setlength{\tabcolsep}{2mm}{
\begin{tabular}{cccccc}
\hline
 & \(\nu^*(\mathcal{A}_\rho)\) & \(\nu(\mathcal{B}_M)\) & Sparsity & Size & Depth \\
\hline
QFT\_5 & 1.698 & 3.381 & 84.0 & 15.7 & 12.2 \\
QFT\_6 & 1.956 & 4.056 & 171.5 & 18.7 & 14.4 \\
QFT\_7 & 2.272 & 4.764 & 338.4 & 21.4 & 16.2 \\
QFT\_8 & 2.653 & 5.484 & 624.8 & 24.0 & 18.1 \\
QFT\_9 & 3.112 & 6.222 & 1004.7 & 26.3 & 19.7 \\
QFT\_10 & 3.736 & 6.938 & 1390.8 & 28.5 & 21.4 \\
QV\_5 & 2.119 & 8.632 & 835.0 & 26.8 & 19.2 \\
QV\_7 & 3.209 & 15.006 & 5118.4 & 33.3 & 22.8 \\
BV\_10 & 1.493 & 1.479 & 10.3 & 18.9 & 15.9 \\
BV\_100 & 1.513  & 1.497 & 10.6 & 162.0 (37) & 145.2 (7) \\
\hline
\end{tabular}}\label{table-spd}
\end{table}

\subsection{Test Application and Fault Detection}\label{exp-simulation}

\begin{table}[t]
\centering
\caption{Comparison between different test application methods. ``Direct'' method uses the Clifford + Pauli-rotation gate set to implement the test patterns, while SPD-based method uses only the Clifford gate set. We report the averaged circuit sizes and depths, which reflects the test equipment complexity, and also the averaged \(\nu^*(\mathcal{A})\nu(\mathcal{B})\), which reflects the overhead in test time complexity.}
\renewcommand{\arraystretch}{1.5}
\setlength{\tabcolsep}{2mm}{
\begin{tabular}{@{\extracolsep{10pt}}cccccc@{}}
\hline
&\multicolumn{2}{c}{Direct}& \multicolumn{3}{c}{SPD-Based}\\
\cline{2-3}\cline{4-6} 
 & Size & Depth & Size & Depth & \(\nu^*(\mathcal{A})\nu(\mathcal{B})\) \\
\hline
QFT\_9 & 200.7 & 76.9 & 26.3 & 19.7 & 19.4 \\
QFT\_10 & 246.8 & 86.9  & 28.5 & 21.4 & 25.9 \\
QV\_5 & 141.6 & 54.6  & 26.8 & 19.2 &  18.3 \\
QV\_7 & 207.7 & 59.0 & 33.3 & 22.8 & 48.2 \\
BV\_10 & 36.8 & 16.7  & 18.9 & 15.9 & 2.2 \\
BV\_100 & 307.3  & 107.0 & 162.0 & 145.2 & 2.3 \\
\hline
\end{tabular}}\label{table-526039}
\end{table}

In this section, we first justify the prior mentioned quantum-classical resources trade off (see Fig.~\ref{fig-5261603} and related discussion). To see this, we compare the ``direct'' method with our SPD-based method, as shown in Table \ref{table-526039}. Both methods implement the same test patterns (i.e., \((\rho, M)\)), where the ``direct'' method uses Clifford + Pauli-rotation gate set, while the SPD-based method uses only Clifford gate set.  Specifically, we conduct the experiments on QFT, QV and BV circuits with different sizes. For each potential fault in the CUT, the corresponding test pattern is generated and evaluated, and the averaged results are given in Table \ref{table-526039}. Note that the sizes and depths of the circuits implementing the test patterns by our SPD-based method are significantly less than those by ``direct'' method. Moreover, our method only uses Clifford circuits. These demonstrate its efficiency and robustness.

Then, we apply the test patterns (generated in Section~\ref{exp-tpg}) on benchmark circuits by simulating the sampling algorithm (Algorithm \ref{alg-smp}). Then, we show the overall results of our framework for quantum circuit fault detection. First, an illustrative example is shown in Fig. \ref{fig-distribution}. In this example, we simulate the sampling algorithm on the 3-qubit fault-free QFT circuit, using the test pattern for each gate in the circuit. The orange and green bars represent  the simulation results with different parameters, and the blue bar represents its expected value. The gate ID \(i\) indicates that the result is from applying the test patterns of \(i\)-th gate. First note that when setting \(\delta=0.1,\epsilon=0.1\), the estimation returned by  the sampling algorithm successfully approximates the ideal output. We found that when applying the test patterns of gate 6 and gate 7, the green bar is lower than \(0.5\). This means we may misclassify the CUT as faulty circuit when we set the classification threshold to be \(0.5\), resulting in a false positive alert. However, this is mainly because the gate itself is close to the identity gate to some extent. Thus it is intrinsically difficult to detect such fault under the SMGF (single missing gate fault) model. Except gate-6 and gate-7, there is no big difference between the results of using parameters \(\delta=0.1,\epsilon=0.1\) and \(\delta=0.3,\epsilon=0.3\) while the latter one requires fewer iterations. This means if the faults are sufficiently distinguishable, then the parameters \(\delta=0.3,\epsilon=0.3\) may be a practical choice that are both efficient and sound.

Then we evaluate our algorithms for the fault detection task. We first randomly selected \(k\) candidate fault-sites in the benchmark circuit such that the maximal discrimination probability is at least \(0.5+\tau\). This is because the discrimination probability of \(0.5+\tau\) indicates that the trace distance between fault-free and faulty output is at most \(2\tau\). Thus \(\tau\rightarrow 0\) indicates that the fault is negligible and inherently hard to detect. With the probability of \(0.5\), the CUT is set to be faulty, and otherwise the CUT is set to be fault-free. When the CUT is faulty, a fault-site is randomly selected from the \(k\) candidates and the gate at the fault-site is replaced by its faulty version. 
For each candidate fault-site \(i\), the corresponding test patterns are automatically generated and applied on the CUT by simulating the sampling algorithm (with parameters \(\epsilon,\delta\)) to obtain the estimation \(m_i\). The CUT is predicted to be faulty if \(\min\{m_1,\ldots,m_k\} \leq 0.5\), and otherwise is predicted to be fault-free. Note that we need to correctly obtain \(k\) estimations to ensure that the final prediction is correct. Since the parameter \(\epsilon\) bounds the success probability of our sampling algorithm from below, we set this parameter to \(\epsilon/k\) to ensure that the overall success probability for fault detection is at least \(1-\epsilon\). Note that the complexity of our sampling algorithm scales logarithmically with \(1/\epsilon\), thus the overhead is still acceptable. In this work, we set \(k=10,\tau=0.1,\delta=0.3,\epsilon=0.3\). For each benchmark circuit, we repeat the above fault detection procedure \(100\) times and the experimental results are shown in Table \ref{table-detect}. 

\begin{figure}[t]
\centering
\includegraphics[width=0.95\linewidth]{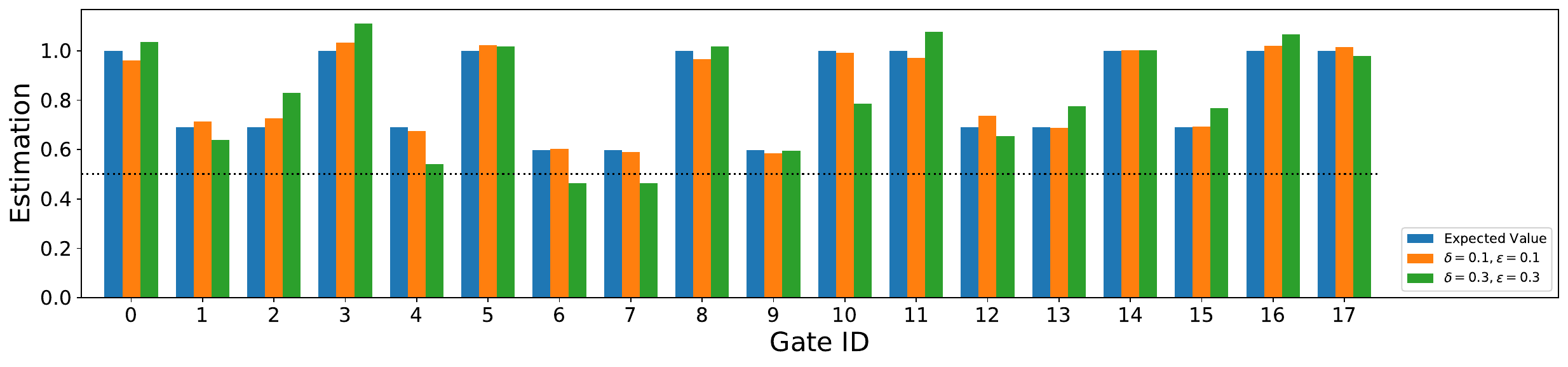}
 \caption{An illustrative example for the sampling algorithm (Algorithm \ref{alg-smp}) on 3-qubit QFT circuit. The orange and green bars represent the simulation results with different parameters, and the blue bar represents the expected output. The gate ID \(i\) means that we apply the test patterns for the \(i\)-th gate.}
\label{fig-distribution}
\end{figure}

\begin{table}[t]
\centering
\caption{Simulation results for quantum fault detection.}
\renewcommand{\arraystretch}{1.5}
\setlength{\tabcolsep}{2mm}{
\begin{tabular}{cccccccc}
\hline
 & TP & TN & FP & FN & Prec. & Rec. & Acc. \\
\hline
QV\_5 & 51 & 49 & 0 & 0 & 1.0 & 1.0 & 1.0 \\
BV\_10 & 47 & 53 & 0 & 0 & 1.0 & 1.0 & 1.0 \\
QFT\_5 & 46 & 47 & 7 & 0 & 0.87 & 1.0 & 0.93 \\
QFT\_10 & 44  & 47 & 9 & 0 & 0.83 & 1.0 & 0.91 \\
\hline
\end{tabular}}
\label{table-detect}
\end{table}

Our method achieves the highest recall (1.0) uniformly in all experiments, which means all the faults in the quantum circuits are detected with success probability \(1.0\). We also found that there are some false positive alerts in the experiments for QFT\_5 and QFT\_10 circuits, \textit{i.e.}, some fault-free circuits are misclassified as faulty circuits. However, the precision is still higher than \(0.83\) and the false positive rate (FPR) is lower than \(0.16\). Furthermore, since the recall is sufficiently close to \(1.0\), the false positive alerts can be effectively eliminated by repeating the test on the false positive samples and excluding the samples once they are classified as negative (\textit{i.e.}, fault-free). Finally, the overall performance of fault detection is much better than expected by the theoretical bounds (failure probability \(\epsilon=0.3\) and estimation deviation \(\delta=0.3\)), which indicates rather good performance of our method in practice.

\section{Conclusion}\label{sec-con}
In this paper, we propose a new ATPG framework for robust quantum circuit testing. The stabilizer projector decomposition (SPD) is introduced for representing the quantum test pattern, and an SPD generation algorithm is developed. Furthermore, the test application is reconstructed using Clifford-only circuits with classical randomness. To tackle the issue of exponentially growing size of the involved  optimization problem, we proposed several acceleration techniques that can exploit both locality and sparsity in SPD generation. It is  proved that the locality exploiting algorithm will return optimal result under some reasonable conditions. To demonstrate the effectiveness and efficiency, the overall framework is implemented and evaluated on several commonly used benchmark circuits.

For future research, we plan to extend our work in the following two directions: \subsubsection{Adaptive quantum ATPG}  The adaptive ATPG method~\cite{holst2009adaptive} for classical circuits utilizes the existing output responses from the CUT to guide the automatic generation of new test patterns, which makes the fault diagnosis much more efficient. In the quantum case, however, the output responses are quantum states, and thus we cannot easily make use of the information in the output responses due to the high cost of quantum state tomography. Further, the response analysis and pattern generation guidance are much more complicated than those for classical circuit. These make the development of adaptive quantum ATPG high nontrivial.
\subsubsection{Different Fault Models and Circuit  Models} In this paper, we assume that the faulty gate is still a unitary gate. This assumption does cover some important classes of faults in the real-world applications. But in many applications, a fault gate is often a non-unitary gate. Then techniques for discriminating a unitary gate and a general quantum operation (see e.g. \cite{gilchrist2005distance}) is vital in ATPG for this kind of faults. On the other hand, we only consider combinational quantum circuits in this paper. Recently, several new circuit models have been introduced in quantum computing and quantum information. One interesting model is dynamic quantum circuits (see e.g. \cite{corcoles2021exploiting, burgholzer2021towards}) in which quantum measurements can occur at the middle of the circuits and the measurement outcomes are used to conditionally control subsequent steps of the computation. A fault in these circuits may happen not only at a quantum gate but also at a measurement. Therefore, techniques for discriminating quantum measurements (see e.g. \cite{ji2006identification}) will be needed in ATPG for them. Another interesting model is  sequential quantum circuits~\cite{wang2021equivalence}. As we know, in the classical case, ATPG for combinational circuits and that for sequential ones are fundamentally different. The quantum case is similar, and  the approach presented in this paper cannot directly generalised to sequential quantum circuits.  

\section*{Acknowledgment}
This work was supported in part by the National Natural Science Foundation of China under Grant 61832015.

\bibliographystyle{ACM-Reference-Format}
\bibliography{egbib}


\begin{thebibliography}{44}


\ifx \showCODEN    \undefined \def \showCODEN     #1{\unskip}     \fi
\ifx \showDOI      \undefined \def \showDOI       #1{#1}\fi
\ifx \showISBNx    \undefined \def \showISBNx     #1{\unskip}     \fi
\ifx \showISBNxiii \undefined \def \showISBNxiii  #1{\unskip}     \fi
\ifx \showISSN     \undefined \def \showISSN      #1{\unskip}     \fi
\ifx \showLCCN     \undefined \def \showLCCN      #1{\unskip}     \fi
\ifx \shownote     \undefined \def \shownote      #1{#1}          \fi
\ifx \showarticletitle \undefined \def \showarticletitle #1{#1}   \fi
\ifx \showURL      \undefined \def \showURL       {\relax}        \fi
\providecommand\bibfield[2]{#2}
\providecommand\bibinfo[2]{#2}
\providecommand\natexlab[1]{#1}
\providecommand\showeprint[2][]{arXiv:#2}

\bibitem[qis({[n.\,d.]})]%
        {qiskit}
 \bibinfo{year}{[n.\,d.]}\natexlab{}.
\newblock \bibinfo{title}{Qiskit}.
\newblock \bibinfo{howpublished}{\url{https://qiskit.org}}.
\newblock


\bibitem[Aaronson and Gottesman(2004)]%
        {aaronson2004improved}
\bibfield{author}{\bibinfo{person}{Scott Aaronson} {and} \bibinfo{person}{Daniel Gottesman}.} \bibinfo{year}{2004}\natexlab{}.
\newblock \showarticletitle{Improved simulation of stabilizer circuits}.
\newblock \bibinfo{journal}{\emph{Physical Review A}} \bibinfo{volume}{70}, \bibinfo{number}{5} (\bibinfo{year}{2004}), \bibinfo{pages}{052328}.
\newblock


\bibitem[Acin(2001)]%
        {acin2001statistical}
\bibfield{author}{\bibinfo{person}{Antonio Acin}.} \bibinfo{year}{2001}\natexlab{}.
\newblock \showarticletitle{Statistical distinguishability between unitary operations}.
\newblock \bibinfo{journal}{\emph{Physical review letters}} \bibinfo{volume}{87}, \bibinfo{number}{17} (\bibinfo{year}{2001}), \bibinfo{pages}{177901}.
\newblock


\bibitem[Bennink et~al\mbox{.}(2017)]%
        {bennink2017unbiased}
\bibfield{author}{\bibinfo{person}{Ryan~S Bennink}, \bibinfo{person}{Erik~M Ferragut}, \bibinfo{person}{Travis~S Humble}, \bibinfo{person}{Jason~A Laska}, \bibinfo{person}{James~J Nutaro}, \bibinfo{person}{Mark~G Pleszkoch}, {and} \bibinfo{person}{Raphael~C Pooser}.} \bibinfo{year}{2017}\natexlab{}.
\newblock \showarticletitle{Unbiased simulation of near-Clifford quantum circuits}.
\newblock \bibinfo{journal}{\emph{Physical Review A}} \bibinfo{volume}{95}, \bibinfo{number}{6} (\bibinfo{year}{2017}), \bibinfo{pages}{062337}.
\newblock


\bibitem[Bera(2017)]%
        {bera2017detection}
\bibfield{author}{\bibinfo{person}{Debajyoti Bera}.} \bibinfo{year}{2017}\natexlab{}.
\newblock \showarticletitle{Detection and diagnosis of single faults in quantum circuits}.
\newblock \bibinfo{journal}{\emph{IEEE Transactions on Computer-Aided Design of Integrated Circuits and Systems}} \bibinfo{volume}{37}, \bibinfo{number}{3} (\bibinfo{year}{2017}), \bibinfo{pages}{587--600}.
\newblock


\bibitem[Bernstein and Vazirani(1997)]%
        {bernstein1997quantum}
\bibfield{author}{\bibinfo{person}{Ethan Bernstein} {and} \bibinfo{person}{Umesh Vazirani}.} \bibinfo{year}{1997}\natexlab{}.
\newblock \showarticletitle{Quantum complexity theory}.
\newblock \bibinfo{journal}{\emph{SIAM Journal on computing}} \bibinfo{volume}{26}, \bibinfo{number}{5} (\bibinfo{year}{1997}), \bibinfo{pages}{1411--1473}.
\newblock


\bibitem[Bravyi and Kitaev(2005)]%
        {bravyi2005universal}
\bibfield{author}{\bibinfo{person}{Sergey Bravyi} {and} \bibinfo{person}{Alexei Kitaev}.} \bibinfo{year}{2005}\natexlab{}.
\newblock \showarticletitle{Universal quantum computation with ideal Clifford gates and noisy ancillas}.
\newblock \bibinfo{journal}{\emph{Physical Review A}} \bibinfo{volume}{71}, \bibinfo{number}{2} (\bibinfo{year}{2005}), \bibinfo{pages}{022316}.
\newblock


\bibitem[Bravyi et~al\mbox{.}(2021)]%
        {bravyi2021clifford}
\bibfield{author}{\bibinfo{person}{Sergey Bravyi}, \bibinfo{person}{Ruslan Shaydulin}, \bibinfo{person}{Shaohan Hu}, {and} \bibinfo{person}{Dmitri Maslov}.} \bibinfo{year}{2021}\natexlab{}.
\newblock \showarticletitle{Clifford circuit optimization with templates and symbolic pauli gates}.
\newblock \bibinfo{journal}{\emph{arXiv preprint arXiv:2105.02291}} (\bibinfo{year}{2021}).
\newblock


\bibitem[Burgholzer and Wille(2021)]%
        {burgholzer2021towards}
\bibfield{author}{\bibinfo{person}{Lukas Burgholzer} {and} \bibinfo{person}{Robert Wille}.} \bibinfo{year}{2021}\natexlab{}.
\newblock \showarticletitle{Towards verification of dynamic quantum circuits}.
\newblock \bibinfo{journal}{\emph{arXiv preprint arXiv:2106.01099}} (\bibinfo{year}{2021}).
\newblock


\bibitem[C{\'o}rcoles et~al\mbox{.}(2021)]%
        {corcoles2021exploiting}
\bibfield{author}{\bibinfo{person}{Antonio~D C{\'o}rcoles}, \bibinfo{person}{Maika Takita}, \bibinfo{person}{Ken Inoue}, \bibinfo{person}{Scott Lekuch}, \bibinfo{person}{Zlatko~K Minev}, \bibinfo{person}{Jerry~M Chow}, {and} \bibinfo{person}{Jay~M Gambetta}.} \bibinfo{year}{2021}\natexlab{}.
\newblock \showarticletitle{Exploiting dynamic quantum circuits in a quantum algorithm with superconducting qubits}.
\newblock \bibinfo{journal}{\emph{Physical Review Letters}} \bibinfo{volume}{127}, \bibinfo{number}{10} (\bibinfo{year}{2021}), \bibinfo{pages}{100501}.
\newblock


\bibitem[Diamond and Boyd(2016)]%
        {diamond2016cvxpy}
\bibfield{author}{\bibinfo{person}{Steven Diamond} {and} \bibinfo{person}{Stephen Boyd}.} \bibinfo{year}{2016}\natexlab{}.
\newblock \showarticletitle{{CVXPY}: {A} {P}ython-embedded modeling language for convex optimization}.
\newblock \bibinfo{journal}{\emph{Journal of Machine Learning Research}} \bibinfo{volume}{17}, \bibinfo{number}{83} (\bibinfo{year}{2016}), \bibinfo{pages}{1--5}.
\newblock


\bibitem[Emerson et~al\mbox{.}(2005)]%
        {emerson2005scalable}
\bibfield{author}{\bibinfo{person}{Joseph Emerson}, \bibinfo{person}{Robert Alicki}, {and} \bibinfo{person}{Karol {\.Z}yczkowski}.} \bibinfo{year}{2005}\natexlab{}.
\newblock \showarticletitle{Scalable noise estimation with random unitary operators}.
\newblock \bibinfo{journal}{\emph{Journal of Optics B: Quantum and Semiclassical Optics}} \bibinfo{volume}{7}, \bibinfo{number}{10} (\bibinfo{year}{2005}), \bibinfo{pages}{S347}.
\newblock


\bibitem[Fowler et~al\mbox{.}(2012)]%
        {fowler2012surface}
\bibfield{author}{\bibinfo{person}{Austin~G Fowler}, \bibinfo{person}{Matteo Mariantoni}, \bibinfo{person}{John~M Martinis}, {and} \bibinfo{person}{Andrew~N Cleland}.} \bibinfo{year}{2012}\natexlab{}.
\newblock \showarticletitle{Surface codes: Towards practical large-scale quantum computation}.
\newblock \bibinfo{journal}{\emph{Physical Review A}} \bibinfo{volume}{86}, \bibinfo{number}{3} (\bibinfo{year}{2012}), \bibinfo{pages}{032324}.
\newblock


\bibitem[Fowler et~al\mbox{.}(2009)]%
        {fowler2009high}
\bibfield{author}{\bibinfo{person}{Austin~G Fowler}, \bibinfo{person}{Ashley~M Stephens}, {and} \bibinfo{person}{Peter Groszkowski}.} \bibinfo{year}{2009}\natexlab{}.
\newblock \showarticletitle{High-threshold universal quantum computation on the surface code}.
\newblock \bibinfo{journal}{\emph{Physical Review A}} \bibinfo{volume}{80}, \bibinfo{number}{5} (\bibinfo{year}{2009}), \bibinfo{pages}{052312}.
\newblock


\bibitem[Gilchrist et~al\mbox{.}(2005)]%
        {gilchrist2005distance}
\bibfield{author}{\bibinfo{person}{Alexei Gilchrist}, \bibinfo{person}{Nathan~K Langford}, {and} \bibinfo{person}{Michael~A Nielsen}.} \bibinfo{year}{2005}\natexlab{}.
\newblock \showarticletitle{Distance measures to compare real and ideal quantum processes}.
\newblock \bibinfo{journal}{\emph{Physical Review A}} \bibinfo{volume}{71}, \bibinfo{number}{6} (\bibinfo{year}{2005}), \bibinfo{pages}{062310}.
\newblock


\bibitem[Gottesman(1998)]%
        {gottesman1998heisenberg}
\bibfield{author}{\bibinfo{person}{Daniel Gottesman}.} \bibinfo{year}{1998}\natexlab{}.
\newblock \showarticletitle{The Heisenberg representation of quantum computers}.
\newblock \bibinfo{journal}{\emph{arXiv preprint quant-ph/9807006}} (\bibinfo{year}{1998}).
\newblock


\bibitem[Gross(2006)]%
        {gross2006hudson}
\bibfield{author}{\bibinfo{person}{David Gross}.} \bibinfo{year}{2006}\natexlab{}.
\newblock \showarticletitle{Hudson’s theorem for finite-dimensional quantum systems}.
\newblock \bibinfo{journal}{\emph{Journal of mathematical physics}} \bibinfo{volume}{47}, \bibinfo{number}{12} (\bibinfo{year}{2006}), \bibinfo{pages}{122107}.
\newblock


\bibitem[Grover(1996)]%
        {grover1996fast}
\bibfield{author}{\bibinfo{person}{Lov~K Grover}.} \bibinfo{year}{1996}\natexlab{}.
\newblock \showarticletitle{A fast quantum mechanical algorithm for database search}. In \bibinfo{booktitle}{\emph{Proceedings of the twenty-eighth annual ACM symposium on Theory of computing}}. \bibinfo{pages}{212--219}.
\newblock


\bibitem[Harper and Flammia(2019)]%
        {harper2019fault}
\bibfield{author}{\bibinfo{person}{Robin Harper} {and} \bibinfo{person}{Steven~T Flammia}.} \bibinfo{year}{2019}\natexlab{}.
\newblock \showarticletitle{Fault-tolerant logical gates in the IBM quantum experience}.
\newblock \bibinfo{journal}{\emph{Physical review letters}} \bibinfo{volume}{122}, \bibinfo{number}{8} (\bibinfo{year}{2019}), \bibinfo{pages}{080504}.
\newblock


\bibitem[Harper et~al\mbox{.}(2020)]%
        {harper2020efficient}
\bibfield{author}{\bibinfo{person}{Robin Harper}, \bibinfo{person}{Steven~T Flammia}, {and} \bibinfo{person}{Joel~J Wallman}.} \bibinfo{year}{2020}\natexlab{}.
\newblock \showarticletitle{Efficient learning of quantum noise}.
\newblock \bibinfo{journal}{\emph{Nature Physics}} \bibinfo{volume}{16}, \bibinfo{number}{12} (\bibinfo{year}{2020}), \bibinfo{pages}{1184--1188}.
\newblock


\bibitem[Harris et~al\mbox{.}(2014)]%
        {harris2014efficient}
\bibfield{author}{\bibinfo{person}{Nicholas~C. Harris}, \bibinfo{person}{Yangjin Ma}, \bibinfo{person}{Jacob Mower}, \bibinfo{person}{Tom Baehr-Jones}, \bibinfo{person}{Dirk Englund}, \bibinfo{person}{Michael Hochberg}, {and} \bibinfo{person}{Christophe Galland}.} \bibinfo{year}{2014}\natexlab{}.
\newblock \showarticletitle{Efficient, compact and low loss thermo-optic phase shifter in silicon}.
\newblock \bibinfo{journal}{\emph{Optics express}} \bibinfo{volume}{22}, \bibinfo{number}{9} (\bibinfo{year}{2014}).
\newblock


\bibitem[Harrow et~al\mbox{.}(2009)]%
        {harrow2009quantum}
\bibfield{author}{\bibinfo{person}{Aram~W Harrow}, \bibinfo{person}{Avinatan Hassidim}, {and} \bibinfo{person}{Seth Lloyd}.} \bibinfo{year}{2009}\natexlab{}.
\newblock \showarticletitle{Quantum algorithm for linear systems of equations}.
\newblock \bibinfo{journal}{\emph{Physical review letters}} \bibinfo{volume}{103}, \bibinfo{number}{15} (\bibinfo{year}{2009}), \bibinfo{pages}{150502}.
\newblock


\bibitem[Hayes et~al\mbox{.}(2004)]%
        {hayes2004testing}
\bibfield{author}{\bibinfo{person}{John~P Hayes}, \bibinfo{person}{Ilia Polian}, {and} \bibinfo{person}{Bernd Becker}.} \bibinfo{year}{2004}\natexlab{}.
\newblock \showarticletitle{Testing for missing-gate faults in reversible circuits}. In \bibinfo{booktitle}{\emph{13th Asian test symposium}}. IEEE, \bibinfo{pages}{100--105}.
\newblock


\bibitem[Helstrom(1969)]%
        {helstrom1969quantum}
\bibfield{author}{\bibinfo{person}{Carl~W Helstrom}.} \bibinfo{year}{1969}\natexlab{}.
\newblock \showarticletitle{Quantum detection and estimation theory}.
\newblock \bibinfo{journal}{\emph{Journal of Statistical Physics}} \bibinfo{volume}{1}, \bibinfo{number}{2} (\bibinfo{year}{1969}), \bibinfo{pages}{231--252}.
\newblock


\bibitem[Holevo(1973)]%
        {holevo1973statistical}
\bibfield{author}{\bibinfo{person}{Alexander~S Holevo}.} \bibinfo{year}{1973}\natexlab{}.
\newblock \showarticletitle{Statistical decision theory for quantum systems}.
\newblock \bibinfo{journal}{\emph{Journal of multivariate analysis}} \bibinfo{volume}{3}, \bibinfo{number}{4} (\bibinfo{year}{1973}), \bibinfo{pages}{337--394}.
\newblock


\bibitem[Holst and Wunderlich(2009)]%
        {holst2009adaptive}
\bibfield{author}{\bibinfo{person}{Stefan Holst} {and} \bibinfo{person}{Hans-Joachim Wunderlich}.} \bibinfo{year}{2009}\natexlab{}.
\newblock \showarticletitle{Adaptive debug and diagnosis without fault dictionaries}.
\newblock \bibinfo{journal}{\emph{Journal of Electronic Testing}} \bibinfo{volume}{25}, \bibinfo{number}{4} (\bibinfo{year}{2009}), \bibinfo{pages}{259--268}.
\newblock


\bibitem[Howard and Campbell(2017)]%
        {howard2017application}
\bibfield{author}{\bibinfo{person}{Mark Howard} {and} \bibinfo{person}{Earl Campbell}.} \bibinfo{year}{2017}\natexlab{}.
\newblock \showarticletitle{Application of a resource theory for magic states to fault-tolerant quantum computing}.
\newblock \bibinfo{journal}{\emph{Physical review letters}} \bibinfo{volume}{118}, \bibinfo{number}{9} (\bibinfo{year}{2017}), \bibinfo{pages}{090501}.
\newblock


\bibitem[Ji et~al\mbox{.}(2006)]%
        {ji2006identification}
\bibfield{author}{\bibinfo{person}{Zhengfeng Ji}, \bibinfo{person}{Yuan Feng}, \bibinfo{person}{Runyao Duan}, {and} \bibinfo{person}{Mingsheng Ying}.} \bibinfo{year}{2006}\natexlab{}.
\newblock \showarticletitle{Identification and distance measures of measurement apparatus}.
\newblock \bibinfo{journal}{\emph{Physical Review Letters}} \bibinfo{volume}{96}, \bibinfo{number}{20} (\bibinfo{year}{2006}), \bibinfo{pages}{200401}.
\newblock


\bibitem[Lobo et~al\mbox{.}(2007)]%
        {lobo2007portfolio}
\bibfield{author}{\bibinfo{person}{Miguel~Sousa Lobo}, \bibinfo{person}{Maryam Fazel}, {and} \bibinfo{person}{Stephen Boyd}.} \bibinfo{year}{2007}\natexlab{}.
\newblock \showarticletitle{Portfolio optimization with linear and fixed transaction costs}.
\newblock \bibinfo{journal}{\emph{Annals of Operations Research}} \bibinfo{volume}{152}, \bibinfo{number}{1} (\bibinfo{year}{2007}), \bibinfo{pages}{341--365}.
\newblock


\bibitem[Magesan et~al\mbox{.}(2011)]%
        {magesan2011scalable}
\bibfield{author}{\bibinfo{person}{Easwar Magesan}, \bibinfo{person}{Jay~M Gambetta}, {and} \bibinfo{person}{Joseph Emerson}.} \bibinfo{year}{2011}\natexlab{}.
\newblock \showarticletitle{Scalable and robust randomized benchmarking of quantum processes}.
\newblock \bibinfo{journal}{\emph{Physical review letters}} \bibinfo{volume}{106}, \bibinfo{number}{18} (\bibinfo{year}{2011}), \bibinfo{pages}{180504}.
\newblock


\bibitem[Mangasarian and Meyer(1979)]%
        {mangasarian1979nonlinear}
\bibfield{author}{\bibinfo{person}{Olvi~L Mangasarian} {and} \bibinfo{person}{RR Meyer}.} \bibinfo{year}{1979}\natexlab{}.
\newblock \showarticletitle{Nonlinear perturbation of linear programs}.
\newblock \bibinfo{journal}{\emph{SIAM Journal on Control and Optimization}} \bibinfo{volume}{17}, \bibinfo{number}{6} (\bibinfo{year}{1979}), \bibinfo{pages}{745--752}.
\newblock


\bibitem[Moll et~al\mbox{.}(2018)]%
        {moll2018quantum}
\bibfield{author}{\bibinfo{person}{Nikolaj Moll}, \bibinfo{person}{Panagiotis Barkoutsos}, \bibinfo{person}{Lev~S Bishop}, \bibinfo{person}{Jerry~M Chow}, \bibinfo{person}{Andrew Cross}, \bibinfo{person}{Daniel~J Egger}, \bibinfo{person}{Stefan Filipp}, \bibinfo{person}{Andreas Fuhrer}, \bibinfo{person}{Jay~M Gambetta}, \bibinfo{person}{Marc Ganzhorn}, \bibinfo{person}{Abhinav Kandala}, \bibinfo{person}{Antonio Mezzacapo}, \bibinfo{person}{Peter Müller}, \bibinfo{person}{Walter Riess}, \bibinfo{person}{Gian Salis}, \bibinfo{person}{John Smolin}, \bibinfo{person}{Ivano Tavernelli}, {and} \bibinfo{person}{Kristan Temme}.} \bibinfo{year}{2018}\natexlab{}.
\newblock \showarticletitle{Quantum optimization using variational algorithms on near-term quantum devices}.
\newblock \bibinfo{journal}{\emph{Quantum Science and Technology}} \bibinfo{volume}{3}, \bibinfo{number}{3} (\bibinfo{year}{2018}), \bibinfo{pages}{030503}.
\newblock


\bibitem[Nielsen and Chuang(2010)]%
        {nielsen2002quantum}
\bibfield{author}{\bibinfo{person}{Michael~A Nielsen} {and} \bibinfo{person}{Isaac~L Chuang}.} \bibinfo{year}{2010}\natexlab{}.
\newblock \bibinfo{booktitle}{\emph{Quantum computation and quantum information}}.
\newblock \bibinfo{publisher}{Cambridge university press}.
\newblock


\bibitem[Paler et~al\mbox{.}(2012)]%
        {paler2012detection}
\bibfield{author}{\bibinfo{person}{Alexandru Paler}, \bibinfo{person}{Ilia Polian}, {and} \bibinfo{person}{John~P Hayes}.} \bibinfo{year}{2012}\natexlab{}.
\newblock \showarticletitle{Detection and diagnosis of faulty quantum circuits}. In \bibinfo{booktitle}{\emph{17th Asia and South Pacific Design Automation Conference}}. IEEE, \bibinfo{pages}{181--186}.
\newblock


\bibitem[Pashayan et~al\mbox{.}(2015)]%
        {pashayan2015estimating}
\bibfield{author}{\bibinfo{person}{Hakop Pashayan}, \bibinfo{person}{Joel~J Wallman}, {and} \bibinfo{person}{Stephen~D Bartlett}.} \bibinfo{year}{2015}\natexlab{}.
\newblock \showarticletitle{Estimating outcome probabilities of quantum circuits using quasiprobabilities}.
\newblock \bibinfo{journal}{\emph{Physical review letters}} \bibinfo{volume}{115}, \bibinfo{number}{7} (\bibinfo{year}{2015}), \bibinfo{pages}{070501}.
\newblock


\bibitem[Pelucchi et~al\mbox{.}(2022)]%
        {pelucchi2022potential}
\bibfield{author}{\bibinfo{person}{Emanuele Pelucchi}, \bibinfo{person}{Giorgos Fagas}, \bibinfo{person}{Igor Aharonovich}, \bibinfo{person}{Dirk Englund}, \bibinfo{person}{Eden Figueroa}, \bibinfo{person}{Qihuang Gong}, \bibinfo{person}{Hübel Hannes}, \bibinfo{person}{Jin Liu}, \bibinfo{person}{Chao-Yang Lu}, \bibinfo{person}{Nobuyuki Matsuda}, \bibinfo{person}{Jian-Wei Pan}, \bibinfo{person}{Florian Schreck}, \bibinfo{person}{Fabio Sciarrino}, \bibinfo{person}{Christine Silberhorn}, \bibinfo{person}{Jianwei Wang}, {and} \bibinfo{person}{K. Jöns}.} \bibinfo{year}{2022}\natexlab{}.
\newblock \showarticletitle{The potential and global outlook of integrated photonics for quantum technologies}.
\newblock \bibinfo{journal}{\emph{Nature Reviews Physics}} \bibinfo{volume}{4}, \bibinfo{number}{3} (\bibinfo{year}{2022}), \bibinfo{pages}{194--208}.
\newblock


\bibitem[Postler et~al\mbox{.}(2022)]%
        {postler2022demonstration}
\bibfield{author}{\bibinfo{person}{Lukas Postler}, \bibinfo{person}{Sascha Heu$\beta$en}, \bibinfo{person}{Ivan Pogorelov}, \bibinfo{person}{Manuel Rispler}, \bibinfo{person}{Thomas Feldker}, \bibinfo{person}{Michael Meth}, \bibinfo{person}{Christian~D. Marciniak}, \bibinfo{person}{Roman Stricker}, \bibinfo{person}{Martin Ringbauer}, \bibinfo{person}{Rainer Blatt}, \bibinfo{person}{Philipp Schindler}, \bibinfo{person}{Markus Müller}, {and} \bibinfo{person}{Thomas Monz}.} \bibinfo{year}{2022}\natexlab{}.
\newblock \showarticletitle{Demonstration of fault-tolerant universal quantum gate operations}.
\newblock \bibinfo{journal}{\emph{Nature}} \bibinfo{volume}{605}, \bibinfo{number}{7911} (\bibinfo{year}{2022}), \bibinfo{pages}{675--680}.
\newblock


\bibitem[Qassim et~al\mbox{.}(2019)]%
        {qassim2019clifford}
\bibfield{author}{\bibinfo{person}{Hammam Qassim}, \bibinfo{person}{Joel~J Wallman}, {and} \bibinfo{person}{Joseph Emerson}.} \bibinfo{year}{2019}\natexlab{}.
\newblock \showarticletitle{Clifford recompilation for faster classical simulation of quantum circuits}.
\newblock \bibinfo{journal}{\emph{Quantum}}  \bibinfo{volume}{3} (\bibinfo{year}{2019}), \bibinfo{pages}{170}.
\newblock


\bibitem[Roth(1966)]%
        {roth1966diagnosis}
\bibfield{author}{\bibinfo{person}{J~Paul Roth}.} \bibinfo{year}{1966}\natexlab{}.
\newblock \showarticletitle{Diagnosis of automata failures: A calculus and a method}.
\newblock \bibinfo{journal}{\emph{IBM journal of Research and Development}} \bibinfo{volume}{10}, \bibinfo{number}{4} (\bibinfo{year}{1966}), \bibinfo{pages}{278--291}.
\newblock


\bibitem[Shor(1994)]%
        {shor1994algorithms}
\bibfield{author}{\bibinfo{person}{Peter~W Shor}.} \bibinfo{year}{1994}\natexlab{}.
\newblock \showarticletitle{Algorithms for quantum computation: discrete logarithms and factoring}. In \bibinfo{booktitle}{\emph{Proceedings 35th annual symposium on foundations of computer science}}. Ieee, \bibinfo{pages}{124--134}.
\newblock


\bibitem[Wang et~al\mbox{.}(2020)]%
        {wang2020integrated}
\bibfield{author}{\bibinfo{person}{Jianwei Wang}, \bibinfo{person}{Fabio Sciarrino}, \bibinfo{person}{Anthony Laing}, {and} \bibinfo{person}{Mark~G. Thompson}.} \bibinfo{year}{2020}\natexlab{}.
\newblock \showarticletitle{Integrated photonic quantum technologies}.
\newblock \bibinfo{journal}{\emph{Nature Photonics}} \bibinfo{volume}{14}, \bibinfo{number}{5} (\bibinfo{year}{2020}), \bibinfo{pages}{273--284}.
\newblock


\bibitem[Wang et~al\mbox{.}(2006)]%
        {wang2006vlsi}
\bibfield{author}{\bibinfo{person}{Laung-Terng Wang}, \bibinfo{person}{Cheng-Wen Wu}, {and} \bibinfo{person}{Xiaoqing Wen}.} \bibinfo{year}{2006}\natexlab{}.
\newblock \bibinfo{booktitle}{\emph{VLSI test principles and architectures: design for testability}}.
\newblock \bibinfo{publisher}{Elsevier}.
\newblock


\bibitem[Wang et~al\mbox{.}(2021)]%
        {wang2021equivalence}
\bibfield{author}{\bibinfo{person}{Qisheng Wang}, \bibinfo{person}{Riling Li}, {and} \bibinfo{person}{Mingsheng Ying}.} \bibinfo{year}{2021}\natexlab{}.
\newblock \showarticletitle{Equivalence checking of sequential quantum circuits}.
\newblock \bibinfo{journal}{\emph{IEEE Transactions on Computer-Aided Design of Integrated Circuits and Systems}} (\bibinfo{year}{2021}).
\newblock


\bibitem[Zhou et~al\mbox{.}(2000)]%
        {zhou2000methodology}
\bibfield{author}{\bibinfo{person}{Xinlan Zhou}, \bibinfo{person}{Debbie~W Leung}, {and} \bibinfo{person}{Isaac~L Chuang}.} \bibinfo{year}{2000}\natexlab{}.
\newblock \showarticletitle{Methodology for quantum logic gate construction}.
\newblock \bibinfo{journal}{\emph{Physical Review A}} \bibinfo{volume}{62}, \bibinfo{number}{5} (\bibinfo{year}{2000}), \bibinfo{pages}{052316}.
\newblock


\end{thebibliography}

\appendix
\section{Appendix}

In this appendix, we provide the detailed proofs of  all lemmas, propositions and theorems omitted in the main body of the paper.  
\subsection{Minimal Norms of SPD}
\begin{definition}[Minimal SPD norms]
The minimal 1-norm $\xi$ and minimal weighted 1-norm $\xi^*$ of operator \(A\) are defined by
\begin{equation}\small
\begin{gathered}
\xi(A)=\min\bigg\{\nu(\mathcal{A})\;\bigg|\; \mathcal{A} \in \SPD(A) \bigg\}\\
\xi^*(A)=\min\bigg\{\nu^*(\mathcal{A})\;\bigg|\; \mathcal{A} \in \SPD(A) \bigg\}
\end{gathered}
\end{equation}

We say that \(\mathcal{A}\) is \(\nu\)-optimal (respectively,  \(\nu^*\)-optimal) for \(A\) if \(\mathcal{A}\in \SPD(A)\) and \(\nu(\mathcal{A})=\xi(A)\) (respectively,  \(\nu^*(\mathcal{A})=\xi^*(A)\)).
\end{definition}

The above definition is a generalization of the robustness of magic in quantum resource theory~\cite{howard2017application}. Some basic properties of \(\xi\) and \(\xi^*\) are given in the following:

\begin{proposition}\label{tens-inv}
Both \(\xi\) and \(\xi^*\) are invariant under applying Clifford unitary and tensoring stabilizer projector:
\begin{enumerate}
\item Clifford Invariance: for any Clifford unitary $U$, $$\xi(UAU^\dag)=\xi(A),\quad \xi^*(UAU^\dag)=\xi^*(A).$$  
\item Tensor Invariance: for any stabilizer projector $A$, $$\xi(A\otimes B)=\xi(B),\quad \xi^*(A\otimes B)=\tr A\cdot \xi^*(B).$$
\end{enumerate}
\end{proposition}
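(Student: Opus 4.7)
The plan is to establish each statement as a pair of matching inequalities. The upper bound on $\xi$ (and $\xi^*$) of the modified operator will always come from a direct SPD construction; only the reverse direction of tensor invariance is nontrivial, and there the idea is to compress any given SPD of $A \otimes B$ into an SPD of $B$ via a unital completely positive map that exploits the closure of the stabilizer formalism under computational-basis projection and partial trace. Concretely, Clifford invariance is immediate: given $\mathcal{A} = \{(a_i, A_i)\} \in \SPD(A)$, the family $\{(a_i, UA_iU^\dag)\}$ lies in $\SPD(UAU^\dag)$, since conjugation by a Clifford $U$ sends stabilizer projectors to stabilizer projectors (by the very definition of Clifford operations in Subsection \ref{sec-stabilizer}) and unitary conjugation preserves traces; the coefficients are unchanged, so $\nu$ and $\nu^*$ are invariant, and the reverse inequality follows by applying the same argument with $U^\dag$. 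Similarly for the ``$\leq$'' direction of tensor invariance: given $\mathcal{B} = \{(b_i, B_i)\} \in \SPD(B)$, the family $\{(b_i, A \otimes B_i)\}$ lies in $\SPD(A \otimes B)$ because the tensor product of two stabilizer projectors is again one (its stabilizer group generated by lifts of the two original groups), giving $\nu(A \otimes \mathcal{B}) = \nu(\mathcal{B})$ and $\nu^*(A \otimes \mathcal{B}) = \tr A \cdot \nu^*(\mathcal{B})$, hence $\xi(A \otimes B) \leq \xi(B)$ and $\xi^*(A \otimes B) \leq \tr A \cdot \xi^*(B)$.

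It then remains to prove $\xi(B) \leq \xi(A \otimes B)$ and $\tr A \cdot \xi^*(B) \leq \xi^*(A \otimes B)$. By the Clifford invariance just proven together with Corollary \ref{SP-Circuit}, we may assume without loss of generality that $A = \ketbra{0}{0}^{\otimes k} \otimes I^{\otimes m}$, with $\tr A = 2^m$. Introduce the compression map
\begin{equation*}
\Phi(X) = \tfrac{1}{2^m} \tr_{m}\!\bigl((\bra{0}^{\otimes k} \otimes I)\, X\, (\ket{0}^{\otimes k} \otimes I)\bigr),
\end{equation*}
where $\tr_m$ denotes the partial trace over the $m$ ``middle'' qubits of the first factor. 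A short calculation shows that $\Phi$ is completely positive and unital (so $\Phi(I) = I_B$) and that $\Phi(A \otimes B) = B$; applied to any SPD $\{(c_i, C_i)\}$ of $A \otimes B$, it yields $B = \sum_i c_i \Phi(C_i)$.

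The main technical claim, which I expect to be the principal obstacle, is that for each stabilizer projector $C_i$ one has $\Phi(C_i) = \lambda_i D_i$ for some stabilizer projector $D_i$ on the $B$ system and some scalar $\lambda_i \in [0, 1]$. The bound $\lambda_i \leq 1$ is immediate from $\Phi$ being unital and CP, hence contractive in operator norm. The harder part is to show that the image is literally proportional to a stabilizer projector. The plan there is to expand $C_i = \tfrac{1}{|G_i|}\sum_{g \in G_i} g$ over its stabilizer group, observe that projection onto $\ket{0}^{\otimes k}$ annihilates every $g$ whose action on the first $k$ qubits contains $X$ or $Y$ while the partial trace annihilates every $g$ whose middle-$m$ action is not $\pm I$, and check that the surviving signed Pauli operators on the $B$ qubits form a consistent signed subgroup whose sum is proportional to the corresponding stabilizer projector. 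Granting this, $\{(c_i\lambda_i, D_i)\}$ is an SPD of $B$ with $\nu = \sum_i |c_i|\lambda_i \leq \sum_i |c_i|$, yielding $\xi(B) \leq \xi(A \otimes B)$; and combining $\lambda_i \tr D_i = \tr \Phi(C_i) = \tfrac{1}{2^m}\tr((A \otimes I) C_i) \leq \tr C_i / 2^m$ with $\tr A = 2^m$ gives $\tr A \cdot \sum_i |c_i| \lambda_i \tr D_i \leq \sum_i |c_i| \tr C_i$, hence $\tr A \cdot \xi^*(B) \leq \xi^*(A \otimes B)$ on taking the minimum.
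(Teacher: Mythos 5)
Your proposal is correct and follows essentially the same route as the paper: reduce via Clifford invariance to $A=\ketbra{0}{0}^{\otimes k}\otimes I_m$, then compress any SPD of $A\otimes B$ by projecting onto $\ket{0}^{\otimes k}$ and partial-tracing the $I_m$ factor, using the fact that these operations send a stabilizer projector to a sub-normalized stabilizer projector. The ``main technical claim'' you isolate (and sketch via the signed-subgroup argument) is exactly the content of the paper's Lemmas \ref{trace-out-00} and \ref{trace-out-I}, so your single map $\Phi$ is just a repackaging of the paper's two-step argument.
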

The Clifford invariance is obvious and the proof of tensor invariance is given in the next section.

\subsection{Proof of Proposition \ref{tens-inv} (Tensor Invariance)}

The proof of Proposition \ref{tens-inv} requires several technical lemmas:

\begin{lemma}\label{trace-out-0}
Suppose \(A\) is a stabilizer projector of bipartite system \(ST\), where \(S\) is a single qubit subsystem. Then
\begin{equation}
\bra{0}A\ket{0}=cB
\end{equation}
where \(B\) is a stabilizer projector of \(T\) and \(0\leq c\leq 1\). Furthermore, if \(c=1\), then \(A\sqsupseteq\ketbra{0}{0}\otimes B\).
\end{lemma}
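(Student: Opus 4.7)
My plan is to directly unpack the stabilizer formula $A=\frac{1}{|G|}\sum_{P\in G}P$ and track what happens to each Pauli term under the partial overlap $\bra{0}\cdot\ket{0}$ on the single-qubit system $S$. Write $P=\epsilon_P\,(P_S\otimes P_T)$ with unsigned $P_S,P_T$ and sign $\epsilon_P\in\{\pm 1\}$. Since $\bra{0}X\ket{0}=\bra{0}Y\ket{0}=0$ and $\bra{0}I\ket{0}=\bra{0}Z\ket{0}=1$, only those $P$ whose $S$-part is $\pm I$ or $\pm Z$ contribute; these are exactly the elements of the subgroup $G_0=\{P\in G:PZ_1=Z_1P\}$, and $[G:G_0]\in\{1,2\}$. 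Hence
$$\bra{0}A\ket{0}=\frac{1}{|G|}\sum_{P\in G_0}\epsilon_P P_T.$$

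Next I would define $\phi\colon G_0\to\{\pm 1\}\cdot\mathrm{Pauli}(T)$ by $\phi(P)=\epsilon_P P_T$. A short check (using that $P_S,Q_S\in\{I,Z\}$ commute and multiply without extra sign) shows $\phi$ is a group homomorphism. Let $H=\mathrm{im}(\phi)$ and $K=\ker(\phi)=G\cap\{I,Z_1\}$, so $|G_0|=|K|\cdot|H|$. The proof then splits into two cases according to whether $-I\in H$.

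If $-I\in H$, then $H$ is closed under $Q\mapsto -Q$, so pairing $Q$ with $-Q$ yields $\sum_{Q\in H}Q=0$, giving $\bra{0}A\ket{0}=0=0\cdot B$ for any stabilizer projector $B$. If $-I\notin H$, then elements of $H$ commute (inherited from $G$) and $H$ is a genuine stabilizer group on $T$; let $B=\frac{1}{|H|}\sum_{Q\in H}Q$ be its projector. Counting $|K|$ preimages per element of $H$ gives
$$\bra{0}A\ket{0}=\frac{|K|\cdot|H|}{|G|}\,B=\frac{|G_0|}{|G|}\,B=cB,\qquad c\in\{\tfrac12,1\}.$$
In both cases $c\in[0,1]$ and $B$ is a stabilizer projector.

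For the \emph{furthermore} clause, $c=1$ forces the non-trivial case $-I\notin H$ together with $G_0=G$, i.e.\ every element of $G$ commutes with $Z_1$. Form $G'=\langle G,Z_1\rangle$; I would verify $-I\notin G'$, since otherwise $-Z_1\in G$, whence $\phi(-Z_1)=-I\in H$, contradicting our case. Thus $G'$ is a valid stabilizer group with projector $A'$, and since $Z_1\in G'$ the subspace lies inside $\ket{0}_S$-support, so $A'=\ketbra{0}{0}\otimes B''$ for some stabilizer projector $B''$ on $T$. A direct matrix computation, using that $Z_1$ commutes with $A$, gives $A'=\tfrac14(I+Z_1)A(I+Z_1)=\ketbra{0}{0}\otimes\bra{0}A\ket{0}=\ketbra{0}{0}\otimes B$. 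Finally, $G\subseteq G'$ implies $A'\sqsubseteq A$, yielding $A\sqsupseteq\ketbra{0}{0}\otimes B$.

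The main obstacle is bookkeeping of signs: verifying that $\phi$ is a well-defined homomorphism, and cleanly relating the three conditions ``$c=1$'', ``$G_0=G$'', and ``$-Z_1\notin G$'' so that the enlarged group $G'$ is legitimate. Everything else is routine stabilizer-formalism manipulation plus the general fact that $G\subseteq G'$ implies the opposite inclusion of projectors.
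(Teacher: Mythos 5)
Your proof is correct, and it takes a recognizably different route from the paper's. The paper works with a generating set and the product form $A=\prod_i\frac{I+P_i}{2}$, splitting into the two cases ``$Z_1$ commutes with all generators'' versus ``$Z_1$ anti-commutes with exactly one,'' and then massaging generators (replacing $P_i$ by $P_iZ_1$) to read off $B$; for the \emph{furthermore} clause it does an explicit block computation distinguishing generators of the form $I\otimes P'$ from $Z\otimes P'$. You instead work with the full group average $A=\frac{1}{|G|}\sum_{P\in G}P$, isolate the centralizer $G_0$ of $Z_1$, and push everything through the homomorphism $\phi(P)=\epsilon_P P_T$, reading off $c=|G_0|/|G|\cdot[{-I}\notin H]$ from kernel/image counting; for the last clause you enlarge to $G'=\langle G,Z_1\rangle$ and invoke the monotonicity $G\subseteq G'\Rightarrow A'\sqsubseteq A$. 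Two remarks. First, your version is actually \emph{more} complete: the degenerate case $-Z_1\in G$ (equivalently $-I\in H$), where $\bra{0}A\ket{0}=0$, is handled explicitly by your pairing argument, whereas the paper's first case silently assumes $\langle P_1,\dots,P_l,Z_1\rangle$ contains no $-I$ and would report $c=1$ there; the lemma's conclusion survives either way since $0\le c\le 1$ is allowed, but your treatment is cleaner. Second, the one step you flagged as needing care — that $\phi$ is a homomorphism — does go through: since $P_S,Q_S\in\{I,Z\}$ multiply with no phase, the entire phase of $PQ\in G$ (which is $\pm1$ because $G$ is a signed Pauli group) is absorbed into the $T$-factor, so $\phi(PQ)=\phi(P)\phi(Q)$ exactly as you sketched. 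The paper's approach buys explicit generators for $B$ (useful downstream in Lemma \ref{trace-out-00} and the optimality proofs); yours buys a uniform statement covering $c\in\{0,\tfrac12,1\}$ and a shorter, order-theoretic proof of $A\sqsupseteq\ketbra{0}{0}\otimes B$.
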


\begin{proof}
Suppose \(G=\langle P_1,\ldots,P_l\rangle\) is the signed Pauli group associated to the stabilizer projector \(A\) where \(P_1,\ldots,P_l\) are independent generators of \(G\). Without loss of generality, we can assume that either \(Z_1\) commutes with all \(P_i\) or only anti-commutes with \(P_l\).

If \(Z_1\) commutes with all \(P_i\), then
\begin{equation}
\begin{split}
\ketbra{0}{0}\otimes cB&=\ketbra{0}{0}A\ketbra{0}{0}\\
&=\frac{I+Z_1}{2}A\frac{I+Z_1}{2}\\
&=\frac{I+Z_1}{2}\prod_{i\leq l} \frac{I+P_i}{2}\frac{I+Z_1}{2}\\
&=\prod_{i\leq l}\frac{I+P_i}{2}\frac{I+Z_1}{2}
\end{split}
\end{equation}
Since each \(P_i\) commutes with \(Z_1\), \(\langle P_1,\ldots,P_l,Z_1\rangle\) is a stabilizer group. Thus we can replace \(P_i\) by \(P_i Z_1\) (if necessary) to ensure that \(P_i\) is of the form \(I\otimes P'_i\). This means
\begin{equation}
\ketbra{0}{0}\otimes cB=\ketbra{0}{0}\otimes \prod_{i\leq l}\frac{I+P'_i}{2}
\end{equation}
Thus \(c=1\) and \(B\) is a stabilizer projector.

If \(Z_1\) anti-commutes with \(P_l\), then
\begin{equation}
\begin{split}
\ketbra{0}{0}\otimes cB&=\ketbra{0}{0}A\ketbra{0}{0}\\
&=\frac{I+Z_1}{2}A\frac{I+Z_1}{2}\\
&=\prod_{i\leq l-1}\frac{I+P_i}{2} \frac{I+Z_1}{2}\frac{I+P_l}{2}\frac{I+Z_1}{2}\\
&=\frac{1}{2}\prod_{i\leq l-1}\frac{I+P_i}{2} \frac{I+Z_1}{2}
\end{split}
\end{equation}
Similarly, we can assume \(P_i\) is of the form \(I\otimes P'_i\), which means
\begin{equation}
\ketbra{0}{0}\otimes cB= \ketbra{0}{0} \otimes \frac{1}{2}\prod_{i\leq l-1}\frac{I+P'_i}{2}
\end{equation}
Thus \(c=1/2\) and \(B\) is a stabilizer projector.

Furthermore, \(c=1\) indicates that all \(P_i\) commutes with \(Z_1\). So each \(P_i\) is either of the form \(I\otimes P'_i\) or of the form \(Z\otimes P'_i\). If \(P_i=I\otimes P'_i\) for all \(i\), then 
\begin{equation}
A=I\otimes \prod \frac{I+P'_i}{2}=I\otimes B\sqsupseteq \ketbra{0}{0}\otimes B
\end{equation}
Otherwise, there is some generator \(P_i=Z\otimes P'_i\). Without loss of generality, we can assume that only \(P_l\) has the form \(Z\otimes P'_l\). Then 
\begin{equation*}
\begin{split}
A&=(I\otimes \prod_{i<l} \frac{I+P'_i}{2}) \frac{I+Z\otimes P'_l}{2}\\
&=(I\otimes \prod_{i<l} \frac{I+P'_i}{2}) \big(\ketbra{0}{0}\otimes \frac{I+P'_l}{2}+\ketbra{1}{1}\otimes \frac{I-P'_l}{2}\big)\\
&=\ketbra{0}{0}\otimes  \prod_{i\leq l} \frac{I+P'_i}{2} +\ketbra{1}{1} \otimes \prod_{i<l} \frac{I+P'_i}{2}\frac{I-P'_l}{2}\\
&\sqsupseteq\ketbra{0}{0}\otimes B
\end{split}
\end{equation*}
\end{proof}

\begin{lemma}\label{trace-out-00}
Suppose \(A\) is a stabilizer projector of bipartite system \(ST\), where \(S\) is an \(n\)-qubit system, then
\begin{equation}
\bra{0}^{\otimes n}A \ket{0}^{\otimes n}=cB
\end{equation}
where \(B\) is a stabilizer projector of \(T\) and \(0\leq c \leq 1\). Furthermore, if \(c=1\), then \(A\sqsupseteq\ketbra{0}{0}^{\otimes n}\otimes B\)
\end{lemma}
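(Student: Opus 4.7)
The plan is to prove the lemma by induction on $n$, the number of qubits in subsystem $S$, with Lemma \ref{trace-out-0} providing the base case $n=1$ directly.

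For the inductive step, suppose the statement holds for $(n-1)$-qubit subsystems. I would partition $S$ into a single qubit $s_1$ and an $(n-1)$-qubit subsystem $S'$, so that $\bra{0}^{\otimes n}A\ket{0}^{\otimes n}$ is computed by first projecting the qubit $s_1$ onto $\ket{0}$ and then projecting the remaining $n-1$ qubits of $S'$ onto $\ket{0}^{\otimes(n-1)}$. By Lemma \ref{trace-out-0} applied to $s_1$, we get $\bra{0}_{s_1}A\ket{0}_{s_1}=c_1A'$ for some stabilizer projector $A'$ on $S'T$ and some $c_1\in[0,1]$. Applying the inductive hypothesis to $A'$ and subsystem $S'$ then yields $\bra{0}^{\otimes(n-1)}_{S'}A'\ket{0}^{\otimes(n-1)}_{S'}=c_2B$ for a stabilizer projector $B$ on $T$ and $c_2\in[0,1]$. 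Setting $c=c_1c_2\in[0,1]$ gives the desired decomposition $\bra{0}^{\otimes n}A\ket{0}^{\otimes n}=cB$.

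For the ``furthermore'' part, if $c=1$ then necessarily $c_1=c_2=1$, since both lie in $[0,1]$. The ``furthermore'' clause of Lemma \ref{trace-out-0} then gives $A\sqsupseteq\ketbra{0}{0}_{s_1}\otimes A'$, while the inductive hypothesis gives $A'\sqsupseteq\ketbra{0}{0}^{\otimes(n-1)}_{S'}\otimes B$. Tensoring the latter inequality on the left with the positive operator $\ketbra{0}{0}_{s_1}$ preserves the $\sqsupseteq$ order, and then chaining via transitivity with the first inequality yields $A\sqsupseteq\ketbra{0}{0}^{\otimes n}_S\otimes B$, as required.

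The proof is essentially routine once the induction is set up; there is no serious obstacle. The only point requiring a brief justification is the compatibility of the $\sqsupseteq$ order with tensor products, namely that $M\sqsupseteq N$ implies $P\otimes M\sqsupseteq P\otimes N$ for any positive semidefinite $P$, which follows because $P\otimes(M-N)$ remains positive semidefinite. The bookkeeping of tensor factors when peeling off one qubit at a time also needs attention, but is straightforward.
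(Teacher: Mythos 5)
Your proof is correct and follows exactly the paper's approach: the paper's entire proof of this lemma is the single line ``By inductively apply Lemma \ref{trace-out-0}'', and your write-up simply fills in the induction, including the chaining of the two $\sqsupseteq$ inequalities for the $c=1$ case, all of which checks out.
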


\begin{proof}
By inductively apply Lemma \ref{trace-out-0}.
\end{proof}

\begin{lemma}\label{trace-out-I}
Suppose \(A\) is a stabilizer projector of bipartite system \(ST\), then
\begin{equation}
\tr_S(A)=cB
\end{equation}
where \(B\) is a stabilizer projector and \(0\leq c\leq \dim(S)\). Furthermore, if \(c=\dim(S)\), then \(A=I\otimes B\).
\end{lemma}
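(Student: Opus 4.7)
The plan is to work directly with the stabilizer expansion of \(A\). Writing \(G\) for the signed Pauli stabilizer group associated to \(A\), we have \(A = \frac{1}{|G|}\sum_{P\in G} P\), so taking the partial trace termwise gives
\begin{equation*}
\tr_S(A) \;=\; \frac{1}{|G|}\sum_{P\in G} \tr_S(P).
\end{equation*}
Decomposing each \(P \in G\) as \(P = \alpha_P\, P_S \otimes P_T\) with \(\alpha_P \in \{\pm 1\}\) and \(P_S, P_T\) unsigned Pauli operators, \(\tr_S(P) = \alpha_P \tr(P_S)\, P_T\) vanishes unless \(P_S = I\), in which case it equals \(\alpha_P \dim(S)\, P_T\). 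Hence only the subset \(G' := \{P \in G : P_S = I\}\) contributes, and \(\tr_S(A) = \frac{\dim(S)}{|G|}\sum_{P\in G'} \alpha_P P_T\).

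Next I would show that \(H := \{\alpha_P P_T : P \in G'\}\) is itself a signed Pauli stabilizer group on \(T\). Closure is inherited from \(G\): if \(P, Q \in G'\) then \(PQ \in G\) still has \(S\)-part equal to \(I\), hence \(PQ \in G'\); and abelianness of \(G\) forces the \(T\)-parts of \(P\) and \(Q\) to commute, so no stray \(\pm i\) phase appears in their product. The condition \(-I_T \notin H\) follows from \(-I \notin G\): if \(-I_T\) were in \(H\), then \(I_S \otimes (-I_T) = -I\) would lie in \(G\), a contradiction. Therefore \(H\) is a stabilizer group with associated projector \(B := \frac{1}{|H|}\sum_{h\in H} h\), and the displayed sum collapses to \(\tr_S(A) = \frac{\dim(S)\,|G'|}{|G|}\, B =: cB\). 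Since \(G'\le G\), \(|G'|/|G|\le 1\), which gives \(0 \le c \le \dim(S)\).

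For the equality case, \(c = \dim(S)\) forces \(|G'| = |G|\), hence \(G' = G\): every element of \(G\) acts trivially on \(S\). Then one can choose independent generators \(P_1, \ldots, P_l\) of \(G\) of the form \(P_i = \alpha_i\, I_S \otimes Q_i\) with \(Q_i\) an unsigned Pauli on \(T\), and the product-form identity for stabilizer projectors yields
\begin{equation*}
A \;=\; \prod_{i=1}^l \frac{I + P_i}{2} \;=\; I_S \otimes \prod_{i=1}^l \frac{I_T + \alpha_i Q_i}{2} \;=\; I \otimes B,
\end{equation*}
as required.

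The main (mild) obstacle is the bookkeeping around signs: one has to verify carefully that the induced set \(H\) really is a \emph{signed} Pauli group on \(T\), i.e.\ that no residual \(\pm i\) phases appear in products and that \(-I_T \notin H\). As sketched above, both points reduce cleanly to the abelianness of \(G\) and to \(-I \notin G\), respectively, so no genuinely new ideas beyond the tools already used in Lemma \ref{trace-out-0} and Lemma \ref{trace-out-00} are needed.
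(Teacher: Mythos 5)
Your proof is correct and follows essentially the same route as the paper's: expand \(A=\frac{1}{|G|}\sum_{P\in G}P\), observe that only the subgroup \(G'\) of elements acting as the identity on \(S\) survives the partial trace, and identify the resulting sum with \(\frac{\dim(S)|G'|}{|G|}B\) for the induced stabilizer projector \(B\) on \(T\), with the equality case \(|G'|=|G|\) forcing \(A=I\otimes B\). The only difference is that you spell out the verification that the induced set on \(T\) is a bona fide signed stabilizer group (closure without \(\pm i\) phases, and \(-I_T\notin H\)), which the paper takes for granted; this is a welcome but inessential addition.
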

\begin{proof}
Suppose \(G=\{P_i\}_i\) is the stabilizer group associated to the stabilizer projector \(A\). Obviously, all the signed Pauli operators \(P_i\) that act trivially on system \(T\) (\textit{i.e.}, \(P_i=I\otimes P'_i\)) form a subgroup \(G'\). Then,
\begin{equation}
\begin{split}
\tr_S(A)&=\frac{1}{|G|}\sum_{P_i\in G} \tr_S(P_i)\\
&=\frac{1}{|G|}\sum_{P_i\in G'} \tr_S(P_i)\\ &=\frac{\dim(S)}{|G|}\sum_{P_i\in G'} P'_i\\
&=\frac{\dim(S)|G'|}{|G|}\frac{1}{|G'|}\sum_{P_i\in G'} P'_i\\
&=\frac{\dim(S)|G'|}{|G|} B
\end{split}
\end{equation}
So \(c=\dim(S)|G'|/|G|\leq \dim(S)\) and \(B\) is a stabilizer projector. Furthermore, if \(c=\dim(S)\), then \(|G'|=|G|\). This means \(G'=G\) and thus \(P_i=I\otimes P'_i\) for all \(P_i\in G\). So \(A=I\otimes B\).
\end{proof}

Now we are ready to give the proof of Proposition \ref{tens-inv}. For the clarity, let us split the proposition into the following two: 

\begin{proposition}[Tensor Invariance of \(\xi\)]
\begin{equation}
\xi(A\otimes B)=\xi(B)
\end{equation}
where \(A\) is a stabilizer projector.
\end{proposition}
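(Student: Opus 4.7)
The plan is to prove the two inequalities $\xi(A\otimes B)\le \xi(B)$ and $\xi(A\otimes B)\ge \xi(B)$ separately, reducing the second (harder) direction to two canonical special cases via Clifford invariance.

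For the easy direction, suppose $\mathcal{B}=\{(b_i,B_i)\}$ is an SPD of $B$. Since $A$ is a stabilizer projector and the tensor product of stabilizer projectors is again a stabilizer projector (its stabilizer group is generated by the union of the generators of each factor), the set $\{(b_i, A\otimes B_i)\}$ is an SPD of $A\otimes B$ with $\nu=\sum_i|b_i|=\nu(\mathcal{B})$. Minimizing over $\mathcal{B}$ gives $\xi(A\otimes B)\le \xi(B)$.

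For the reverse inequality, first apply Clifford invariance (already established): by Corollary~\ref{SP-Circuit} there is a Clifford $U_A$ on the system of $A$ with $U_A(\ketbra{0}{0}^{\otimes k}\otimes I)U_A^\dag=A$ for some $k$, so that $(U_A\otimes I)(\,(\ketbra{0}{0}^{\otimes k}\otimes I)\otimes B\,)(U_A\otimes I)^\dag=A\otimes B$. Hence $\xi(A\otimes B)=\xi(\ketbra{0}{0}^{\otimes k}\otimes I\otimes B)$, and it suffices to prove
\begin{equation*}
\xi(\ketbra{0}{0}^{\otimes k}\otimes I\otimes B)\ \ge\ \xi(B),
\end{equation*}
which I will do by splitting off the two factors $\ketbra{0}{0}^{\otimes k}$ and $I$ in turn.

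Take any SPD $\{(c_i,C_i)\}$ of $\ketbra{0}{0}^{\otimes k}\otimes I\otimes B$. Sandwich by $\bra{0}^{\otimes k}\,\cdot\,\ket{0}^{\otimes k}$ on the first $k$ qubits and take the partial trace $\tr_{S}$ over the $I$-factor system $S$. By Lemma~\ref{trace-out-00}, $\bra{0}^{\otimes k}C_i\ket{0}^{\otimes k}=\alpha_i C_i'$ with $C_i'$ a stabilizer projector and $0\le\alpha_i\le 1$; applying Lemma~\ref{trace-out-I} to $C_i'$ further gives $\tr_{S}(C_i')=\beta_i C_i''$ with $C_i''$ a stabilizer projector and $0\le\beta_i\le\dim S$. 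Since $\bra{0}^{\otimes k}(\ketbra{0}{0}^{\otimes k}\otimes I\otimes B)\ket{0}^{\otimes k}=I\otimes B$ and $\tr_S(I\otimes B)=\dim(S)\cdot B$, combining these computations yields
\begin{equation*}
B=\sum_i\frac{c_i\alpha_i\beta_i}{\dim S}\,C_i'',
\end{equation*}
which is an SPD of $B$ with $1$-norm $\sum_i|c_i|\alpha_i\beta_i/\dim S\le\sum_i|c_i|$. Therefore $\xi(B)\le\xi(\ketbra{0}{0}^{\otimes k}\otimes I\otimes B)=\xi(A\otimes B)$, completing the proof.

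The main obstacle is controlling the $1$-norm during the reverse-direction reduction: a priori, sandwiching and partial-tracing could blow up coefficients or produce non-stabilizer operators. This is exactly where the technical lemmas pay off — Lemma~\ref{trace-out-00} gives the crucial bound $\alpha_i\le 1$ (so sandwiching never inflates coefficients) and Lemma~\ref{trace-out-I} gives $\beta_i\le\dim S$ (so the $\dim S$ in the denominator absorbs the enlargement from the partial trace). Both also guarantee that the outputs are again stabilizer projectors, which is what makes the resulting decomposition a bona fide SPD.
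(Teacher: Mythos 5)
Your proposal is correct and follows essentially the same route as the paper's proof: conjugate by a Clifford to reduce to $\ketbra{0}{0}^{\otimes k}\otimes I\otimes B$, then peel off the $\ketbra{0}{0}^{\otimes k}$ factor via Lemma~\ref{trace-out-00} and the $I$ factor via Lemma~\ref{trace-out-I} to produce an SPD of $B$ with no larger $1$-norm, with the easy direction handled by tensoring. Your per-index constants $\alpha_i,\beta_i$ are in fact a slightly more careful rendering than the paper's common factors $c_1,c_2$.
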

\begin{proof}
We use \(S\) and \(T\) to denote the subsystems corresponding to the operators \(A\) and \(B\) respectively. By Corollary \ref{SP-Circuit}, there is a Clifford unitary \(U\) such that \(U AU^\dag=\ketbra{0}{0}^{\otimes m} \otimes I_n\), where \(m=\log\dim(S)-\log\tr(A)\) and \(n=\log\tr (A)\). Suppose \(\{(a_i,A_i)\}\) is an optimal SPD for \(\ketbra{0}{0}^{\otimes m} \otimes I_n\otimes B\), so
\begin{equation}
\ketbra{0}{0}^{\otimes m} \otimes I_n\otimes B= \sum_i a_iA_i
\end{equation}
By Lemma \ref{trace-out-00}
\begin{equation}
\begin{split}
I_n\otimes B=\sum_i a_i \bra{0}^{\otimes m}A_i\ket{0}^{\otimes m}
=c_1\sum_i a_i A'_i
\end{split}
\end{equation}
where \(0\leq c_1\leq 1\) and \(A'_i\) are stabilizer projectors.
By Lemma \ref{trace-out-I}
\begin{equation}
\begin{gathered}
2^n\cdot B=c_1\sum_i a_i \tr_n(A'_i)
=c_1c_2\sum_i a_i A''_i
\end{gathered}
\end{equation}
where \(0\leq c_2\leq 2^n\), \(\tr_n\) denotes the partial trace on first \(n\) quibts and \(A''_i\) are stabilizer projectors. Thus
\begin{equation}
B=c_1\cdot c_2/2^n\cdot \sum_i a_i A''_i
\end{equation}
This implies \(\xi(B)\leq \xi (\ketbra{0}{0}^{\otimes m} \otimes I_n\otimes B)= \xi(A\otimes B)\). But obviously we have \(\xi(A\otimes B)\leq \xi(B)\). Thus \(\xi(A\otimes B)=\xi(B)\).
\end{proof}

\begin{proposition}[Tensor Invariance of \(\xi^*\)]
\begin{equation}
\xi^*(A\otimes B)=\tr A\cdot\xi^*(B)
\end{equation}
where \(A\) is a stabilizer projector.
\end{proposition}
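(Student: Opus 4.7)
The plan is to prove the two inequalities $\xi^*(A \otimes B) \leq \tr A \cdot \xi^*(B)$ and $\xi^*(A \otimes B) \geq \tr A \cdot \xi^*(B)$ separately. The first inequality is immediate: take a $\nu^*$-optimal SPD $\{(b_i, B_i)\}$ for $B$ and form the SPD $\{(b_i, A \otimes B_i)\}$ for $A \otimes B$. Since $\tr(A \otimes B_i) = \tr A \cdot \tr B_i$, this yields $\nu^* = \tr A \cdot \xi^*(B)$.

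For the reverse inequality, I first use the Clifford invariance of $\xi^*$ (obvious from the fact that Clifford conjugation maps stabilizer projectors to stabilizer projectors and preserves trace) together with Corollary \ref{SP-Circuit} to reduce to the case $A = \ketbra{0}{0}^{\otimes m} \otimes I_n$, where $m = \log\dim(S) - \log\tr A$ and $n = \log \tr A$, so $\tr A = 2^n$. Let $\{(a_i, A_i)\}$ be a $\nu^*$-optimal SPD of $\ketbra{0}{0}^{\otimes m} \otimes I_n \otimes B$. I would then construct an SPD of $B$ via two successive operations mirroring the proof of tensor invariance for $\xi$: first apply $\bra{0}^{\otimes m}\cdot\ket{0}^{\otimes m}$ on both sides to obtain
\begin{equation}
I_n \otimes B = \sum_i a_i \bra{0}^{\otimes m} A_i \ket{0}^{\otimes m} = \sum_i a_i c_i B_i,
\end{equation}
where each $\bra{0}^{\otimes m}A_i\ket{0}^{\otimes m} = c_i B_i$ with $0 \leq c_i \leq 1$ and $B_i$ a stabilizer projector by Lemma \ref{trace-out-00}; then apply the partial trace $\tr_n$ over the $I_n$ factor to obtain
\begin{equation}
2^n B = \sum_i a_i c_i \tr_n(B_i) = \sum_i a_i c_i d_i B'_i,
\end{equation}
where $\tr_n(B_i) = d_i B'_i$ with $0 \leq d_i \leq 2^n$ and $B'_i$ a stabilizer projector by Lemma \ref{trace-out-I}. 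Dividing by $2^n$ gives an SPD $\{(a_i c_i d_i / 2^n, B'_i)\}$ of $B$.

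The key step is then to bound the $\nu^*$ value of this SPD. From $\tr_n(B_i) = d_i B'_i$ I get $d_i \tr(B'_i) = \tr(B_i)$, and from $\bra{0}^{\otimes m}A_i\ket{0}^{\otimes m} = c_i B_i$ I get $c_i \tr(B_i) = \tr\bigl((\ketbra{0}{0}^{\otimes m}\otimes I \otimes I)A_i\bigr) \leq \tr(A_i)$, using that $\ketbra{0}{0}^{\otimes m} \otimes I \otimes I \sqsubseteq I$ and $A_i$ is positive semidefinite. Combining, the new SPD has weighted norm
\begin{equation}
\sum_i |a_i| \frac{c_i d_i \tr(B'_i)}{2^n} = \frac{1}{2^n}\sum_i |a_i| c_i \tr(B_i) \leq \frac{1}{2^n}\sum_i |a_i| \tr(A_i) = \frac{\xi^*(A \otimes B)}{\tr A},
\end{equation}
which yields $\tr A \cdot \xi^*(B) \leq \xi^*(A \otimes B)$ and closes the proof.

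The main obstacle is the careful bookkeeping of the trace factors $c_i, d_i$ and verifying the inequality $c_i \tr(B_i) \leq \tr(A_i)$; this is precisely where the weighted norm $\xi^*$ behaves differently from $\xi$ and where the factor $\tr A$ must enter to exactly balance the $2^n$ gained from $\tr_n(I_n)$. The rest of the argument is structurally parallel to the proof for $\xi$, with the two auxiliary lemmas doing most of the work.
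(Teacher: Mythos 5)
Your proof is correct, but it takes a genuinely different route from the paper's. The paper first refines a $\nu^*$-optimal SPD of $A\otimes B$ into a decomposition over rank-one \emph{stabilizer states} (splitting each projector $B_j$ into $\tr B_j$ orthogonal stabilizer states), which converts the weighted $1$-norm into the plain $1$-norm of the refined coefficients; it then traces out the entire subsystem of $A$ in a single application of Lemma \ref{trace-out-I}, where the bookkeeping is automatic because $c_i\tr(\sigma'_i)=\tr(\sigma_i)=1$. You instead keep the projectors intact, reduce $A$ to the normal form $\ketbra{0}{0}^{\otimes m}\otimes I_n$ via Clifford invariance and Corollary \ref{SP-Circuit}, and run the same two-step reduction ($\bra{0}^{\otimes m}\cdot\ket{0}^{\otimes m}$ followed by $\tr_n$) as in the $\xi$ case, which forces you to establish the extra inequality $c_i\tr(B_i)=\tr\bigl((\ketbra{0}{0}^{\otimes m}\otimes I)A_i\bigr)\leq \tr(A_i)$ — you identify this correctly as the crux, and your justification via positivity of $A_i$ is sound (as is the identity $d_i\tr(B'_i)=\tr(B_i)$ from trace preservation). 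Interestingly, your argument is essentially the same mechanism the paper deploys later in the proof of Proposition \ref{opt-form-tensor2}, where the inequality $|c_i|\tr C_i\geq|c'_i|\tr C'_i$ is derived from $\tr C_i=\sum_j\tr\bra{j}C_i\ket{j}\geq\tr\bra{0}^{\otimes n-m}C_i\ket{0}^{\otimes n-m}$. What the paper's stabilizer-state trick buys is brevity and no need for a normal-form reduction; what your version buys is uniformity with the $\xi$ proof and, in fact, slightly more careful bookkeeping — the paper's proof of tensor invariance of $\xi$ pulls out a single constant $c_1$ for all terms, whereas your per-index constants $c_i,d_i$ are the honest way to write it. (Minor housekeeping only: terms with $c_i=0$ or $d_i=0$ should simply be discarded from the resulting SPD, which does not affect the bound.)
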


\begin{proof}
Since any stabilizer projector \(S\) can be expanded into the sum of \(\tr S\) stabilizer states, we can show that there is an stabilizer state decomposition
\begin{equation}
A\otimes B= \sum_i a_i \sigma_i
\end{equation}
where \(\sigma_i\) are stabilizer states and \(\sum_i |a_i|=\xi^*(A\otimes B)\). We use \(S\) to denote the subsystem of operator \(A\). By Lemma \ref{trace-out-I}
\begin{equation}
\begin{gathered}
\tr A\cdot B = \sum_i a_i \tr_S(\sigma_i)=\sum_i a_i c_i \sigma'_i
\end{gathered}
\end{equation}
where \(\sigma'_i\) are stabilizer projectors and \(c_i\tr(\sigma'_i)=\tr(\sigma_i)=1\). This means
\begin{equation}
B=\sum_i a_i \frac{1}{\tr A\cdot \tr(\sigma'_i)} \sigma'_i
\end{equation}
This implies \(\xi^*(B)\leq 1/\tr A\cdot \sum_i |a_i|=1/\tr A\cdot \xi^*(A\otimes B)\). But obviously we have \(\xi^*(A\otimes B)\leq \tr(A)\xi^*(B)\). Thus \(\xi^*(A\otimes B)=\tr A\cdot\xi^*(B)\)
\end{proof}

\subsection{Proof of Theorem \ref{opt-local} (Optimality of Algorithm \ref{alg-red})}
First, we show that the SPD is in a canonical form if one of the two conditions in Theorem \ref{opt-local} is satisfied. For clarity, we will prove the two cases separately.
\begin{proposition}\label{opt-form-tensor}
Suppose \(A\) is a stabilizer projector and \(B\) is a positive operator. If \(\mathcal{C}=\{(c_i,C_i)\}\in\SPD(A\otimes B)\) satisfies
\begin{enumerate}
\item \(\mathcal{C}\) is \(\nu\)-optimal, 
\item \(\mathcal{C}\) has the smallest \(\nu^*\) among all \(\nu\)-optimal SPDs of \(A\otimes B\)
\end{enumerate}
then each \(C_i\) has the form \(C_i=A\otimes C'_i\)
\end{proposition}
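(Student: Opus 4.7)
The plan is to reduce to a canonical form via Clifford invariance, then peel off the two tensor factors of $A$ separately by sandwiching and partial-tracing, using Lemmas \ref{trace-out-00} and \ref{trace-out-I}.

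First, invoking Corollary \ref{SP-Circuit} I pick a Clifford unitary $U_S$ on the system of $A$ with $U_S A U_S^\dag=\ketbra{0}{0}^{\otimes m}\otimes I_k$ (so $k=\log\tr A$). Applying $U_S\otimes I_T$ to every element of $\mathcal{C}$ preserves stabilizer projectors, $\nu$-values and $\nu^*$-values (Clifford invariance, Proposition \ref{tens-inv}), so without loss of generality I may assume $A=\ketbra{0}{0}^{\otimes m}\otimes I_k$, writing the first system as $S_1\otimes S_2$ with $|S_1|=m$ and $|S_2|=k$. Any factorisation of the transformed $C_i$ as $A\otimes E_i$ pulls back to a factorisation of the original $C_i$ of the same form.

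Peeling off the $\ketbra{0}{0}^{\otimes m}$ factor. Sandwich the SPD identity $A\otimes B=\sum_i c_i C_i$ by $\ketbra{0}{0}^{\otimes m}_{S_1}\otimes I$; the left side is unchanged since $\ketbra{0}{0}^{\otimes m}_{S_1}\otimes I\sqsubseteq A\otimes I$, and on the right side Lemma \ref{trace-out-00} supplies a stabilizer projector $D_i$ on $S_2T$ and a scalar $\mu_i\in[0,1]$ with $(\ketbra{0}{0}^{\otimes m}_{S_1}\otimes I)\,C_i\,(\ketbra{0}{0}^{\otimes m}_{S_1}\otimes I)=\mu_i\,\ketbra{0}{0}^{\otimes m}_{S_1}\otimes D_i$. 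Cancelling the common $\ketbra{0}{0}^{\otimes m}$ factor yields an SPD $\sum_i c_i\mu_i D_i=I_k\otimes B$ whose $\nu$-norm is $\sum_i|c_i|\mu_i\leq\sum_i|c_i|=\nu(\mathcal{C})$. Combining the $\nu$-optimality of $\mathcal{C}$ with the tensor invariance $\xi(I_k\otimes B)=\xi(B)=\xi(A\otimes B)$ forces this inequality to be tight, so $\mu_i=1$ whenever $c_i\neq 0$. The ``furthermore'' clause of Lemma \ref{trace-out-00} then gives the containment $C_i\sqsupseteq\ketbra{0}{0}^{\otimes m}_{S_1}\otimes D_i$. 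To promote this containment to equality, consider the modified SPD $\mathcal{C}^{\star}=\{(c_i,\ketbra{0}{0}^{\otimes m}_{S_1}\otimes D_i)\}_i$: by the sandwich identity it also decomposes $A\otimes B$, it satisfies $\nu(\mathcal{C}^{\star})=\nu(\mathcal{C})$ (hence is $\nu$-optimal), and obeys $\nu^*(\mathcal{C}^{\star})\leq\nu^*(\mathcal{C})$ because the trace of a projector dominates that of any sub-projector. By hypothesis $\mathcal{C}$ has the smallest $\nu^*$ among $\nu$-optimal SPDs, so equality must hold termwise: $\tr C_i=\tr(\ketbra{0}{0}^{\otimes m}_{S_1}\otimes D_i)$. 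Since $C_i-\ketbra{0}{0}^{\otimes m}_{S_1}\otimes D_i$ is positive semidefinite with trace zero, it vanishes, giving $C_i=\ketbra{0}{0}^{\otimes m}_{S_1}\otimes D_i$.

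Peeling off the $I_k$ factor. The identity $\sum_i c_i D_i=I_k\otimes B$ on $S_2T$ can now be processed by the partial trace $\tr_{S_2}$. Lemma \ref{trace-out-I} supplies stabilizer projectors $E_i$ on $T$ and scalars $\eta_i\in[0,2^k]$ with $\tr_{S_2}D_i=\eta_i E_i$, so $2^k B=\sum_i c_i\eta_i E_i$ becomes an SPD of $B$ of $\nu$-norm $\sum_i|c_i|\eta_i/2^k$. Using $\xi(B)=\xi(A\otimes B)=\sum_i|c_i|$ and $\eta_i\leq 2^k$, this norm is both $\leq\sum_i|c_i|$ and $\geq\xi(B)=\sum_i|c_i|$; equality forces $\eta_i=2^k$ whenever $c_i\neq 0$. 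The ``furthermore'' clause of Lemma \ref{trace-out-I} (the case $c=\dim(S_2)$) then upgrades this to $D_i=I_k\otimes E_i$, and putting the two peelings together gives $C_i=\ketbra{0}{0}^{\otimes m}_{S_1}\otimes I_k\otimes E_i=A\otimes E_i$, as claimed. The main obstacle is recognising why both optimality conditions are essential: $\nu$-optimality together with the tensor invariance of $\xi$ pins down the scalars $\mu_i$ and $\eta_i$, but because Lemma \ref{trace-out-00} delivers only a containment in its ``furthermore'' half, it is precisely the $\nu^*$-minimality hypothesis that is needed to upgrade that containment to the factored equality $C_i=A\otimes C'_i$ required by the proposition; the partial-trace peeling, in contrast, produces equality directly from Lemma \ref{trace-out-I}.
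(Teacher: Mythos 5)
Your proposal is correct and follows essentially the same route as the paper's proof: reduce to $A=\ketbra{0}{0}^{\otimes m}\otimes I_k$ by Clifford invariance, peel off the rank-one factor by sandwiching with $\ket{0}^{\otimes m}$ (Lemma \ref{trace-out-00}, with $\nu$-optimality pinning the scalars and $\nu^*$-minimality upgrading the containment to equality), then peel off the identity factor by partial trace (Lemma \ref{trace-out-I}). The only difference is cosmetic — you make explicit the ``PSD with trace zero implies zero'' step that the paper leaves implicit.
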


\begin{proof}
Suppose \(A\) is an \(n\)-qubit stabilizer projector and let \(m=\log\tr A\). Then, there is a Clifford unitary \(U\) such that \(UAU^\dag=\ketbra{0}{0}^{\otimes n-m} \otimes I_m\), where \(I_m\) is the identity operator on the \(m\)-qubit system. Note that \(U\) is Clifford, if \(\{(c_i,C_i)\}\) satisfies conditions 1) and 2) for \(A\otimes B\), then \(\{(c_i,U C_i U^\dag)\}\) also satisfies conditions 1) and 2) for \(UAU^\dag \otimes B\). This means we only need to prove this proposition with \(A=\ketbra{0}{0}^{n-m}\otimes I_m\) where \(m\geq 0\).

First, note that
\begin{equation}
\begin{split}
\sum_i c_i \bra{0}^{\otimes n-m} C_i \ket{0}^{\otimes n-m}&= \bra{0}^{\otimes n-m} A\otimes B\ket{0}^{\otimes n-m}\\
&=I_m \otimes B
\end{split}
\end{equation}
By Lemma \ref{trace-out-00}, \(c_i\bra{0}^{\otimes n-m}C_i\ket{0}^{\otimes n-m}=c'_i C'_i\) where \(|c'_i|\leq |c_i|\) and \(C'_i\) are also stabilizer projectors. This means 
\begin{equation}
\xi(I_m\otimes B)\leq \sum_i |c'_i|\leq \sum_i |c_i|=\xi(A\otimes B)
\end{equation}
However, by the tensor invariance of \(\xi\), \textit{i.e.}, $$\xi(A\otimes B)= \xi(\ketbra{0}{0}^{\otimes n-m}\otimes I_m\otimes B)=\xi(I_m\otimes B),$$ we can conclude that \(c_i=c'_i\). This means each \(C_i\) satisfies \(C_i\sqsupseteq \ketbra{0}{0}^{\otimes n-m}\otimes C'_i\) (see Lemma \ref{trace-out-00}). Note that \(\{(c_i,\ketbra{0}{0}^{\otimes n-m} \otimes C'_i)\}\) is also a \(\nu\)-optimal {\SPD} for \(A\otimes B\) since
\begin{equation}
\sum_i c_i \ketbra{0}{0}^{\otimes n-m} \otimes C'_i=\ketbra{0}{0}^{\otimes n-m}\otimes I_m\otimes B
\end{equation}
and has smaller \(\nu^*\), \textit{i.e.},
\begin{equation}
\sum_i |c_i| \tr( \ketbra{0}{0}^{\otimes n-m} \otimes C'_i)\leq \sum_i |c_i| \tr(C_i)
\end{equation}
So by condition 2), we conclude that \(C_i= \ketbra{0}{0}^{\otimes n-m} \otimes C'_i\).

Next, 
\begin{equation}
\begin{split}
2^m \cdot B&=\tr_m(I_m\otimes B)=\tr_m( \sum_i c_iC'_i)\\
&= \sum_i c_i \tr_m (C'_i)
\end{split}
\end{equation}
where \(\tr_m\) denotes the partial trace for the system of first \(m\) qubits. By Lemma \ref{trace-out-I}, \(c_i\tr_m(C'_i)=2^m c''_i C''_i\), where \(|c''_i|\leq |c_i|\) and \(C''_i\) are also stabilizer projectors. Thus \(B=\sum_i c''_i C''_i\) and $$\xi(B)\leq \sum_i |c''_i|\leq \sum_i |c_i|=\xi(A\otimes B).$$ Similarly, by the tensor invariance of \(\xi\), we can conclude that \(c''_i=c_i\). This means each \(C'_i\) has the form \(C'_i=I_m\otimes C''_i\) (see Lemma \ref{trace-out-I}).  
So, $$C_i=\ketbra{0}{0}^{\otimes n-m}\otimes I_m\otimes C''_i=A\otimes C''_i.$$
\end{proof}

\begin{proposition}\label{opt-form-tensor2}
Suppose \(A\) is a stabilizer projector and \(B\) is a positive operator. If \(\mathcal{C}=\{(c_i,C_i)\}\in\SPD(A\otimes B)\) satisfies
\begin{enumerate}
\item \(\mathcal{C}\) is \(\nu^*\)-optimal, 
\item \(\mathcal{C}\) has the smallest \(\nu\) among all \(\nu^*\)-optimal SPDs of \(A\otimes B\)
\end{enumerate}
then each \(C_i\) has the form \(C_i=A\otimes C'_i\)
\end{proposition}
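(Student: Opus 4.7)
My plan is to mirror the two-step argument used for Proposition \ref{opt-form-tensor}, but to swap the roles of the two optimality conditions because here the primary objective is $\nu^*$ and the secondary one is $\nu$. First, by Corollary \ref{SP-Circuit}, I pick a Clifford unitary $U$ with $UAU^\dag = \ketbra{0}{0}^{\otimes n-m}\otimes I_m$, where $n$ is the qubit count of $A$ and $m = \log \tr A$. Conjugation by $U$ preserves stabilizer projectors, traces, and hence both $\nu$ and $\nu^*$, so without loss of generality I may assume $A = \ketbra{0}{0}^{\otimes n-m}\otimes I_m$.

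For the first step I would sandwich the decomposition with $\bra{0}^{\otimes n-m}(\cdot)\ket{0}^{\otimes n-m}$. Lemma \ref{trace-out-00} gives $\bra{0}^{\otimes n-m} C_i \ket{0}^{\otimes n-m} = d_i C'_i$ with $0\le d_i\le 1$ and $C'_i$ a stabilizer projector, so $I_m \otimes B = \sum_i c_i d_i C'_i$. Because $d_i \tr C'_i = \tr\bigl((\ketbra{0}{0}^{\otimes n-m}\otimes I)C_i\bigr) \le \tr C_i$, I obtain $\nu^*\bigl(\{(c_i d_i, C'_i)\}\bigr) \le \nu^*(\mathcal{C})$. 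Tensor invariance of $\xi^*$ from Proposition \ref{tens-inv} gives $\xi^*(I_m \otimes B) = 2^m \xi^*(B) = \tr(A)\xi^*(B) = \xi^*(A \otimes B) = \nu^*(\mathcal{C})$, forcing termwise equality. For any $i$ with $c_i \neq 0$ this yields $\tr\bigl((\ketbra{0}{0}^{\otimes n-m}\otimes I)C_i\bigr) = \tr C_i$, and since $C_i$ is a projector its range must lie in the image of $\ketbra{0}{0}^{\otimes n-m}\otimes I$. Hence $C_i = \ketbra{0}{0}^{\otimes n-m}\otimes C''_i$ for some projector $C''_i$, which is necessarily a stabilizer projector (it equals $d_i C'_i$ for a nonzero projector $C'_i$, forcing $d_i = 1$). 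Only condition (1), $\nu^*$-optimality, is used up to this point.

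For the second step I bring in the smallest-$\nu$ hypothesis (2). From $\sum_i c_i C''_i = I_m \otimes B$, tracing out the first $m$ qubits and applying Lemma \ref{trace-out-I} gives $\tr_m(C''_i) = e_i C'''_i$ with $0 \le e_i \le 2^m$ and $C'''_i$ a stabilizer projector, so $B = \sum_i (c_i e_i/2^m) C'''_i$. I then lift this back to an alternative SPD $\mathcal{C}' = \{(c_i e_i/2^m,\, A \otimes C'''_i)\}$ for $A \otimes B$. Using $e_i \tr C'''_i = \tr C''_i = \tr C_i$ together with $\tr(A \otimes C'''_i) = 2^m \tr C'''_i$, a direct check gives $\nu^*(\mathcal{C}') = \nu^*(\mathcal{C})$, so $\mathcal{C}'$ is likewise $\nu^*$-optimal. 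Meanwhile $\nu(\mathcal{C}') = \sum_i |c_i|\, e_i/2^m \le \nu(\mathcal{C})$, with equality iff $e_i = 2^m$ for every $i$ with $c_i \neq 0$. The minimal-$\nu$ property of $\mathcal{C}$ forces equality, so $e_i = 2^m$ throughout, and the furthermore clause of Lemma \ref{trace-out-I} yields $C''_i = I_m \otimes C'''_i$. Combining the two steps gives $C_i = A \otimes C'''_i$ as required.

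The main obstacle I anticipate is correctly allocating the two hypotheses to the two steps. Unlike Proposition \ref{opt-form-tensor}, where $\nu$-optimality forces equality in both the projection step and the partial-trace step, here $\nu^*$-optimality alone pins down the $\ketbra{0}{0}^{\otimes n-m}$ factor but is blind to replacing $C''_i$ by $I_m \otimes C'''_i$, since that substitution leaves $\nu^*$ unchanged while strictly lowering $\nu$. Recognising this asymmetry, and deferring the use of condition (2) until the second step, is the crux of adapting the earlier argument.
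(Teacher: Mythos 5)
Your proof is correct and follows essentially the same route as the paper's: reduce to $A=\ketbra{0}{0}^{\otimes n-m}\otimes I_m$, use Lemma \ref{trace-out-00} together with $\nu^*$-optimality and tensor invariance of $\xi^*$ to pin down the $\ketbra{0}{0}^{\otimes n-m}$ factor, then use Lemma \ref{trace-out-I} and the secondary $\nu$-minimality to force the $I_m$ factor. Your explicit lifting of the partial-trace decomposition back to an SPD of $A\otimes B$ before invoking condition (2) is a slightly more careful rendering of the same step the paper performs on $I_m\otimes B$, and your allocation of the two hypotheses to the two steps matches the paper exactly.
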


\begin{proof}
Suppose \(A\) is an \(n\)-qubit stabilizer projector and let \(m=\log\tr A\). Similarly, we only need to prove this proposition with \(A=\ketbra{0}{0}^{n-m}\otimes I_m\) where \(m\geq 0\).

First, note that
\begin{equation}
\begin{split}
\sum_i c_i \bra{0}^{\otimes n-m} C_i \ket{0}^{\otimes n-m}&= \bra{0}^{\otimes n-m} A\otimes B\ket{0}^{\otimes n-m}\\
&=I_m \otimes B
\end{split}
\end{equation}
By Lemma \ref{trace-out-00}, \(c_i\bra{0}^{\otimes n-m} C_i\ket{0}^{\otimes n-m}=c'_i C'_i\) where \(C'_i\) are stabilizer projectors. Note that
\begin{equation}
\begin{split}
\tr C_i&=\sum_{j\in \{0,1\}^{n-m}}\tr \bra{j} C_i \ket{j}\\
&\geq \tr\bra{0}^{\otimes n-m}C_i\ket{0}^{\otimes n-m}\\
&=\frac{c'_i}{c_i}\tr C'_i
\end{split}
\end{equation}
which means \(|c_i|\tr C_i\geq |c'_i| \tr C'_i\). Then, we have
\begin{equation}
\xi^*(I_m\otimes B)\leq \sum_i |c'_i| \tr C'_i\leq \sum_i |c_i| \tr C_i=\xi^*(A\otimes B)
\end{equation}
However, by the tensor invariance of \(\xi^*\), \textit{i.e.}, $$\xi^*(A\otimes B)= \xi^*(\ketbra{0}{0}^{\otimes n-m}\otimes I_m\otimes B)=\xi^*	(I_m\otimes B).$$ We can conclude that \(c_i\tr C_i=c'_i \tr C'_i\) and so that \(\bra{j} C_i\ket{j}=0\) for all \(0^{n-m}\neq j\in\{0,1\}^{n-m}\). This means \(C_i=\ketbra{0}{0}^{\otimes n-m}\otimes C'_i\).

Next, we have:
\begin{equation}
\begin{split}
2^m \cdot B&=\tr_m(I_m\otimes B)=\tr_m( \sum_i c_iC'_i)\\
&= \sum_i c_i \tr_m (C'_i)
\end{split}
\end{equation}
where \(\tr_m\) denotes the partial trace for the system of first \(m\) qubits. By Lemma \ref{trace-out-I}, \(c_i\tr_m(C'_i)=2^m c''_i C''_i\), where \(|c''_i|\leq |c_i|\) and \(C''_i\) are also stabilizer projectors. Thus \(B=\sum_i c''_i C''_i\) and \(I_m\otimes B = \sum_i c''_i I_m\otimes C''_i\). Note that
\begin{equation}
\begin{gathered}
\sum_i |c''_i| 2^m \tr (C''_i)=\sum_i |c_i| \tr (C'_i)=\xi(I_m\otimes B)
\end{gathered}
\end{equation}
which means \(\{(c''_i,I_m\otimes C''_i)\}\) is a \(\nu^*\)-optimal SPD for \(I_m\otimes B\). It also has smaller \(\nu\), \textit{i.e.},
\begin{equation}
\sum_i |c''_i|\leq \sum_i|c_i|
\end{equation}
So by condition 2), we have \(|c''_i|=|c_i|\), and thus \(C'_i=I_m\otimes C''_i\) (see Lemma \ref{trace-out-I}).

So, \(C_i=\ketbra{0}{0}^{\otimes n-m}\otimes I_m\otimes C''_i=A\otimes C''_i\).
\end{proof}

Then, we are ready to prove Theorem \ref{opt-local}. We will first prove the correctness (\textit{i.e.}, the output of Algorithm \ref{alg-red} correctly reveal the locality) and then prove its optimality.
\begin{proof}[\textbf{Proof of Theorem \ref{opt-local}}]
\quad

\noindent\textbf{Correctness:} First we prove the correctness of Algorithm \ref{alg-red}. Note that \(V_1\) maps \(\hat{P}_1,\ldots,\hat{P}_l,P_U\) to \(Z_1,\ldots,Z_l,Z_{l+1}\). This means the group \(V_1G_i V_1^\dag\) can be generated by \(Z_1,\ldots, Z_l, \tilde{P}_{i,1}\otimes P_{i,1},\ldots, \tilde{P}_{i,l_i}\otimes P_{i,l_i}\), where \(\tilde{P}_{i,j}\) are Pauli operators on first \(l\) qubits. Note that \(\tilde{P}_{i,j}\) contains only \(I\) and \(Z\), as \(\tilde{P}_{i,j}\) commutes with all \(Z_1,\ldots,Z_l\). We can set \(\tilde{P}_{i,j}\) to \(I_l\) by multiplying several \(Z_k,k \leq l\) if necessary. Thus each \(V_1 G_i V_1^\dag\) has the generators \(Z_1,\ldots, Z_l, I_l\otimes P_{i,1},\ldots, I_l\otimes P_{i,l_i}\), which means \(V_1A_i V_1^\dag = \ketbra{0}{0}^{\otimes l}\otimes A'_i\) where $$A'_i=\prod_{j\leq l_i} \frac{I+P_{i,j}}{2}.$$ Besides, it holds that  $$V_1UV_1^\dag=V_1e^{i\theta P_U}V_1^\dag=e^{i\theta Z_{l+1}}=I_l\otimes e^{i\theta Z_1}.$$

Consider the subsystem that excludes the first \(l\) qubits. By steps 10-17 of Algorithm \ref{alg-red}, we can ensure that each \(Q_i\) at most anti-commutes with one \(Q_j\). Then, by exchanging the subscripts, we can make \(\{Q_i\}\) normal, \textit{i.e.}, \(\exists t\) such that \(Q_1,\ldots,Q_t\) commute with all \(Q_i\) for \(i\leq s\) and \(Q_{t+2i-1}\) only anti-commutes with \(Q_{t+2i}\) for \(i=1,\ldots,(s-t)/2\). Then, we can find a Clifford unitary \(V_2\) that maps \(\{Q_1,\ldots,Q_t\}\) to \(\{Z_1,\ldots,Z_t\}\) and maps \(\{Q_{t+1},\ldots,Q_s\}\) to \(\{Z_{t+1},X_{t+1},\ldots,Z_{(t+s)/2},X_{(t+s)/2}\}\) (see Lemma \ref{ex-clifford}). Since \(\{Q_1,\ldots,Q_s\}\) is the maximal independent set, \(V_2\) transforms all \(P_{i,j}\) and \(Z_1\) to the signed Pauli operators of the form \(I_{n'-(s+t)/2}\otimes P\) where \(P\) acts on a subsystem of size \((s+t)/2\). Thus, $$V_2 A'_iV_2^\dag=I_{n'-(s+t)/2}\otimes A''_i$$ for all \(i\) and $$V_2 U' V_2^\dag=I_{n'-(s+t)/2}\otimes U'',$$ where \(A''_i\) and \(U''\) are the stabilizer projectors and non-Clifford unitary in a subsystem of size \((t+s)/2\). Note that \(V=V_2V_1\) is then the required unitary, \textit{i.e.}, \begin{align*}V U V^\dag&=I_{n-(s+t)/2}\otimes U''\\
V A_i V^\dag&= \ketbra{0}{0}^{\otimes l} \otimes I_{n'-(s+t)/2}\otimes A''_i.\end{align*}

\noindent\textbf{Optimality:} Then we prove the optimality of Algorithm \ref{alg-red}. Suppose the optimal Clifford unitary is \(\overline{V}\) which satisfies \(\overline{V}A\overline{V}^\dag=B \otimes \overline{A}\) and \(\overline{V}U\overline{V}^\dag=W\otimes \overline{U}\), where \(B\) is a stabilizer projector, \(W\) is a Clifford unitary and \(\overline{A},\overline{U}\) act in a subsystem of size \(\overline{n}\). Without loss of generality, we can assume that \(B=\ketbra{0}{0}^{\otimes l}\otimes I_{n-\overline{n}-l}\) (by applying extra Clifford unitary on \(B\)). By conditions 1), 2) and Proposition \ref{opt-form-tensor} and \ref{opt-form-tensor2}, we know that for each \(A_i\), it can be written as follows:
\begin{equation}\label{eq-61}
A_i=\overline{V}^\dag (\ketbra{0}{0}^{\otimes l}\otimes I\otimes \overline{A}_i) \overline{V}    
\end{equation}
where \(\overline{A}_i\) are stabilizer projectors.
For convenience, we denote \(I_{n-\overline{n}-l}\otimes \overline{A}_i\) as \(\tilde{A}_i\).

Besides, \(W\otimes \overline{U}\) is of the form \(I_{n-\overline{n}}\otimes e^{i\theta \overline{P}_U}\). This is because \(e^{i\theta P_U}=U=\overline{V}^\dag (W\otimes \overline{U}) \overline{V}\) has two different eigenvalues (here we do not consider multiplicity of eigenvalues), which means \(W\) and \(\overline{U}\) at most have two different eigenvalues. As \(\overline{U}\) is a non-Clifford unitary, it has exactly two different eigenvalues \(\lambda_1,\lambda_2\). If \(W\) has exactly two different eigenvalues \(\mu_1,\mu_2\), then it can only be \(\lambda_1=-\lambda_2,\mu_1=-\mu_2\). Thus \(U=e^{i\theta P_U}\) has two eigenvalues \(\pm\lambda_1\mu_1\), which means \(\theta=\pm\pi/2\), contradicting that \(U\) is non-Clifford. So \(W\) has only one eigenvalue, which means \(W=I\) (up to a global phase). Thus \(\overline{U}=e^{i\theta \overline{P}_U}\) and \(P_U=\overline{V}^\dag (I\otimes \overline{P}_{U})\overline{V}\). For convenience, we denote \(I_{n-\overline{n}-l}\otimes \overline{P}_U\) as \(\tilde{P}_U\).

Define \(\overline{P}_i=\overline{V}^\dag Z_i \overline{V}\) for \(i=1,\ldots,l\). Note that \(Z_i\) commutes with \(\overline{P}_U\) for \(i\leq l\), so that \(\overline{P}_i\) commutes with \(P_U\) for \(i\leq l\). And since all \(Z_1,\ldots,Z_l\) are in the stabilizer group corresponding to \(\ketbra{0}{0}^\otimes l\otimes I\otimes \overline{A}_i\), by Equation (\ref{eq-61}), we have \(\overline{P}_1,\ldots,\overline{P}_l\in G_1\cap\ldots\cap G_m\equiv \hat{G}\) where \(G_i\) is the stabilizer group corresponding to \(A_i\). Since \(\overline{V}\) is optimal, \(\langle \overline{P}_1,\ldots,\overline{P}_l\rangle\) is the maximal sub-group of \(\hat{G}\) that commutes with \(P_U\) (otherwise there is some \(P_{l+1}\in\hat{G}\) such that \(\overline{P}_1,\ldots,\overline{P}_l,P_{l+1},P_U\) are independent and commute with each other. Then we can find another Clifford unitary that maps \(\overline{P}_1,\ldots,\overline{P}_l,P_{l+1}\) to \(Z_1,\ldots,Z_{l+1}\) and \(\overline{P}_U\) to \(Z_{l+2}\) to further shrink \(\overline{A}\) and \(\overline{U}\)). 

On the other hand, our algorithm exactly finds the unique maximal sub-group of \(\hat{G}\) that commutes with \(P_U\), \textit{i.e.}, \(\langle \hat{P}_1,\ldots,\hat{P}_l\rangle=\langle \overline{P}_1,\ldots,\overline{P}_l\rangle\). This is because if \(P_U\) commutes with \(\hat{G}\), then \(\hat{G}\) itself is the maximal sub-group. Otherwise suppose \(P_U\) does not commute with \(\hat{G}\). If there is another maximal subgroup \(G'\) that commutes with \(P_U\), let \(P'\in G'-\langle \overline{P}_1,\ldots,\overline{P}_l\rangle\), then 
\begin{equation}
|\langle \hat{P}_1,\ldots,\hat{P}_l,P'\rangle|=2^{l+1}=|\hat{G}|
\end{equation}
which means \(\langle \hat{P}_1,\ldots,\hat{P}_l,P'\rangle=\hat{G}\) and thus \(\hat{G}\) commutes with \(P_U\), contradiction. Thus the \(V_1\) found by our algorithm satisfies \(V_1 A_i V_1^\dag=\ketbra{0}{0}^{\otimes l}\otimes A'_i\) and \(V_1 P_U V_1^\dag=Z_{l+1}=I_l\otimes Z_1\). 

Note that \(V_1\) and \(\overline{V}\) are isomorphisms between \(Z_1,\ldots,Z_l\) and \(\langle \hat{P}_1,\ldots,\hat{P}_l\rangle=\langle \overline{P}_1,\ldots,\overline{P}_l\rangle\). Thus \(V_1\overline{V}^\dag\) is an automorphism on \(\langle Z_1,\ldots,Z_l\rangle\). 
Thus
\begin{equation}
\begin{split}
\ketbra{0}{0}^{\otimes l}\otimes A'_i &=V_1 A_i V_1^\dag\\
& = V_1 \overline{V}^\dag (\ketbra{0}{0}^{\otimes l}\otimes \tilde{A}_i)(V_1\overline{V}^\dag)^\dag\\
&=(\ketbra{0}{0}^{\otimes l}\otimes I) \big[V_1\overline{V}^\dag  (I\otimes \tilde{A}_i )(V_1\overline{V}^\dag)^\dag\big]
\end{split}
\end{equation}
So
\begin{equation}\label{AiAi}
A'_i=\bra{0}^{\otimes l} V_1\overline{V}^\dag (I\otimes \tilde{A}_i)(V_1\overline{V}^\dag)^\dag \ket{0}^{\otimes l}
\end{equation}
for all \(i\) and
\begin{equation}\label{Z1PU}
\begin{split}
I_l\otimes Z_1&=V_1P_U V_1^\dag\\
&=V_1\overline{V}^\dag (I\otimes \tilde{P}_U)(V_1\overline{V}^\dag)^\dag
\end{split}
\end{equation}

From now on, we will use \(\underline{P}\) to denote the unsigned Pauli operator by simply discarding the phase of \(P\). If the underline is applied on a set \(\underline{\{P,Q,\ldots\}}\), then it will denote \(\{\underline{P},\underline{Q},\ldots\}\).

Consider these two sets of Pauli operators:
\begin{equation}
\begin{split}
\overline{S}&=\textup{span}(\underline{\textup{SG}(\tilde{A}_1)},\ldots,\underline{\textup{SG}(\tilde{A}_m)},\underline{\tilde{P}_U})\\
S'&=\textup{span}(\underline{\textup{SG}(A'_1)},\ldots,\underline{\textup{SG}(A'_m)},Z_1)
\end{split}
\end{equation}
where \(\textup{SG}(A)\) denotes the stabilizer group of the stabilizer projector \(A\), ``span'' acts on the vector representations of the input and returns the set of (unsigned) Pauli operators corresponding to the spanned vector space. Then, we will show that \(\overline{S}\) and \(S'\) are isomorphic.

Define the mapping \(F: \overline{S}\rightarrow S'\) as
\begin{equation}
F(P)=\underline{\bra{0}^{\otimes l}V_1 \overline{V}^{\dag}(I\otimes P)(V_1\overline{V}^\dag)^\dag\ket{0}^{\otimes l}}
\end{equation}
\(F\) has the following properties:
\begin{enumerate}
\item \(F\) is bijective
\item \(P,Q\in \overline{S}\) commutes \(\Leftrightarrow\) \(F(P),F(Q)\) commutes.
\item \(\{P_1,\ldots,P_j\}\subseteq\overline{S}\) independent \(\Leftrightarrow\) \(\{F(P_1),\ldots,F(P_j)\}\) independent.
\end{enumerate}

\vspace{5mm}

\noindent \textbf{1)} \(F\) is bijective:

Note that
\begin{equation}
\begin{split}
&\quad\,\,\underline{\overline{V}^\dag \textup{span}(Z_1,\ldots,Z_l, I_l\otimes \overline{S})\overline{V}}\\
&=\textup{span}(\underline{\textup{SG}(A_1)},\ldots,\underline{\textup{SG}(A_m)},\underline{P_U})\\
&=\underline{V_1^\dag \textup{span}( Z_1,\ldots,Z_l, I_l\otimes S') V_1}
\end{split}
\end{equation}
This means \(\textup{span}( Z_1,\ldots,Z_l ,I_l\otimes\overline{S})\) and \(\textup{span}( Z_1,\ldots,Z_l ,I_l\otimes S')\) have the same size. And since 
$$\langle Z_1,\ldots,Z_l\rangle\cap (I_l\otimes\overline{S})=\langle Z_1,\ldots,Z_l\rangle\cap (I_l\otimes S')=\emptyset,$$ 
\(I_l\otimes\overline{S}\) and \(I_l\otimes S'\) have the same size, so do \(\overline{S}\) and \(S'\).
Equation (\ref{AiAi}) implies that \(F\) is a surjection of \(\underline{\textup{SG}(\tilde{A}_i)}\rightarrow \underline{\textup{SG}(A'_i)}\) for all \(i\). Thus \(F\) is surjection of \(\overline{S}\rightarrow S'\). And since \(|\overline{S}|=|S'|\), we conclude that \(F\) is a bijection of \(\overline{S}\rightarrow S'\).

\vspace{5mm}

\noindent\textbf{2)} \(P,Q\in \overline{S}\) commutes \(\Leftrightarrow\) \(F(P),F(Q)\) commutes:

Equation (\ref{AiAi}) implies that for all \(i\) and \(P\in \textup{SG}(A_i)\), \(\underline{V_1\overline{V}^\dag (I\otimes P) (V_1\overline{V}^\dag)^\dag}\) is of the form \(\perp_1\otimes\cdots\otimes \perp_l\otimes P'\), where \(\perp_j\in \{I,Z\}\) (since otherwise the RHS of Equation (\ref{AiAi}) equals \(c A\) where \(c<1\) and \(A\) is a stabilizer projector, see also the proof of Lemma \ref{trace-out-0} and Lemma \ref{trace-out-00}). Combined with Equation (\ref{Z1PU}), we conclude that for all \(P\in \overline{S}\), \(\underline{V_1\overline{V}^\dag (I\otimes P)(V_1\overline{V}^\dag)^\dag}\) is of the form \(\perp_1\otimes\cdots\otimes \perp_l\otimes P'\), where \(\perp_i\in\{I,Z\}\). 
We have
\begin{equation}
\forall P\in \overline{S},\quad \underline{V_1\overline{V}^\dag (I\otimes P) (V_1\overline{V}^\dag)}=\perp_P \otimes F(P)
\end{equation}
So
\begin{equation}
\begin{split}
&\quad\,\,P,Q \textup{ commutes }\\
&\Leftrightarrow \underline{V_1\overline{V}^\dag (I\otimes P) (V_1\overline{V}^\dag)^\dag}, \underline{V_1\overline{V}^\dag (I\otimes Q)(V_1\overline{V}^\dag)^\dag}\textup{ commutes}\\
&\Leftrightarrow \perp_P\otimes F(P),\perp_Q\otimes F(Q) \textup{ commutes} 
\end{split}
\end{equation}
Since \(\perp_P,\perp_Q\) always commutes,
\begin{equation}
P,Q \textup{ commutes }\Leftrightarrow\,\, F(P),F(Q) \textup{ commutes}
\end{equation}

\vspace{5mm}

\noindent\textbf{3)} \(\{P_1,\ldots,P_j\}\subseteq\overline{S}\) independent \(\Leftrightarrow\) \(\{F(P_1),\ldots,F(P_j)\}\) independent:

For \(\Leftarrow\) part, we have:
\begin{equation}
\begin{split}
&\quad\,\,\{F(P_i)\} \textup{ is independent}\\
&\Rightarrow \{\perp_{P_i}\otimes F(P_i)\}= \{\underline{V_1\overline{V}^\dag (I\otimes P_i)(V_1\overline{V}^\dag)^\dag}\} \textup{ is independent} \\
&\Rightarrow \{P_i\} \textup{ is independent}
\end{split}
\end{equation}

For the \(\Rightarrow\) part, we have:  
\begin{equation}
\begin{split}
\underline{(\perp_P\otimes F(P)) (\perp_Q\otimes F(Q))}
&= \underline{V_1\overline{V}^\dag PQ(V_1\overline{V}^\dag)^\dag}\\
&=\underline{\perp_P\perp_Q} \otimes F(\underline{PQ})
\end{split}
\end{equation}
Thus
\begin{equation}
\underline{F(P)F(Q)}=F(\underline{PQ})
\end{equation}
Suppose \(\{P_i\}\) is independent but \(\{F(P_i)\}\) is not. Without loss of generality, we can assume \(\underline{F(P_1)\cdots F(P_j)}=I\). Then \(F(\underline{P_1\cdots P_j})=I\). However, \(F(I)=I\) and \(\underline{P_1\cdots P_j}\neq I\). This contradicts the fact that \(F\) is bijective.

At this point, we can conclude that \(F\) satisfies the properties 1), 2) and 3). Thus \(F\) is an isomorphism between \(\overline{S}\) and \(S'\). 

Suppose \(\overline{s}\) is the dimension of \(\overline{S}\). By Lemma \ref{ext-normal}, there is a maximal independent \(\overline{t}\)-normal set (see the Definition \ref{t-normal}) \(\{P_1,\ldots,P_{\overline{s}}\}\subseteq\overline{S}\) for some \(\overline{t}\). Note that each \(P\in \overline{S}\) is of the form \(I_{n-\overline{n}-l}\otimes \perp_P\), where \(\perp_P\) is an \(\overline{n}\)-qubit Pauli operator. So there are \((\overline{s}+\overline{t})/2\) independent \(\overline{n}\)-qubit Pauli operators \(\perp_{P_1},\ldots,\perp_{P_{\overline{t}-1}},\perp_{P_{\overline{t}}},\perp_{P_{\overline{t}+2}},\perp_{P_{\overline{t}+4}},\ldots,\perp_{P_{\overline{s}}}\) that commute with each other. This implies that \(\overline{n}\geq (\overline{s}+\overline{t})/2\). 

By properties 2) and 3) of \(F\), \(\{F(P_1),\ldots,F(P_{\overline{s}})\}\) is also a maximal independent \(\overline{t}\)-normal set of \(S'\). Our algorithm finds another maximal independent \(t\)-normal set \(\{h_1,\ldots,h_s\}\) of \(S'\). Since both of them are maximal independent sets of \(S'\), we have \(s=\overline{s}\). By Lemma \ref{normalset}, we have \(t=\overline{t}\). So our algorithm find the unitary \(V_2\) that maps each \(A'_i\) and \(P'_U\) into a subsystem of  \((s+t)/2=(\overline{s}+\overline{t})/2\leq\overline{n}\). This proves the optimality of our algorithm.
\end{proof}

\begin{definition}\label{t-normal}
We say that a set of Pauli operators \(\{P_1,\ldots,P_s\}\) is \(t\)-normal if
\begin{enumerate}
\item \(P_1,\ldots,P_t\) commute with all \(P_i, 1\leq i\leq s\)
\item \(P_{t+2i-1}\) only anti-commutes with \(P_{t+2i}\) for \(1\leq i\leq (s-t)/2\)
\end{enumerate}
\end{definition}

\begin{lemma}\label{ext-normal}
Suppose \(S\) is a set of Pauli operators that satisfies \(\forall P,Q\in S,\,\underline{PQ}\in S\). Then, there is a independent subset \(\{P_1,\ldots,P_s\}\subseteq S\) such that
\begin{enumerate}
\item \(S=\textup{span}(P_1,\ldots,P_s)\),
\item \(\{P_1,\ldots,P_s\}\) is \(t\)-normal for some \(t\)
\end{enumerate}
We say that \(\{P_1,\ldots,P_s\}\) is a maximal independent \(t\)-normal set.
\end{lemma}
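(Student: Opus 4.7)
The plan is to recognize that the commutation relation on $S$ endows it with the structure of a symplectic GF(2) vector space, and then to construct the desired basis by iteratively extracting the radical together with anti-commuting pairs. I would proceed by induction on $s=\dim S$, where $S$ is viewed, via the vector representation of Section \ref{sec-stabilizer}, as a linear subspace of $\textup{GF}(2)^{2n}$; closure of $S$ under $\underline{PQ}$ is exactly closure of this subspace under addition, and the commutation character $(P,Q)\mapsto[P,Q]\in\textup{GF}(2)$ (with value $0$ if they commute, $1$ if they anti-commute) is a symmetric alternating bilinear form.

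For the inductive step I would first define the radical $Z:=\{P\in S:PQ=QP\text{ for all }Q\in S\}$, a linear subspace of $S$, and choose independent generators $P_1,\ldots,P_t$ of $Z$ as the commuting part of the target basis. If $t=s$, the claim already holds. Otherwise I would pick any $P_{t+1}\in S\setminus Z$ and, by definition of $Z$, select some $P_{t+2}\in S$ that anti-commutes with $P_{t+1}$, forming the first symplectic pair.

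Next I would consider $S':=\{Q\in S:QP_{t+1}=P_{t+1}Q\text{ and }QP_{t+2}=P_{t+2}Q\}$. The two commutation characters on $S$ (with $P_{t+1}$ and $P_{t+2}$) are linearly independent GF(2)-linear functionals — the first is nonzero on $P_{t+2}$, the second on $P_{t+1}$ — so $S'$ has codimension $2$ in $S$, is closed under $\underline{\,\cdot\,}$, and $S=S'\oplus\textup{span}(P_{t+1},P_{t+2})$ as GF(2) vector spaces. Moreover, any element in the radical of $S'$ commutes with all of $S'$ and with $P_{t+1},P_{t+2}$, hence with all of $S$, so the radical of $S'$ coincides with $Z$. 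Applying the induction hypothesis to $S'$ yields independent generators $R_1,\ldots,R_{s-2}$ with $R_1,\ldots,R_t$ spanning $Z$ and $R_{t+2i-1},R_{t+2i}$ anti-commuting pairs for $i=1,\ldots,(s-2-t)/2$; concatenating as $R_1,\ldots,R_{s-2},P_{t+1},P_{t+2}$ produces a $t$-normal independent spanning set for $S$.

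The main obstacle is the bookkeeping that the radical parameter $t$ is invariant under the recursion, so that the final list has the correct block structure — precisely the claim that the radical of $S'$ coincides with $Z$, which rests on the direct-sum decomposition above. This is also the abstract counterpart of the modification loop in steps 12--16 of Algorithm \ref{alg-red}, where replacing $Q_k$ by $\underline{Q_kQ_{t+1}}$ or $\underline{Q_kQ_{t+2}}$ to enforce commutation implements the projection onto $S'$ required for the induction to go through.
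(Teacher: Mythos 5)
Your proof is correct and is essentially the paper's argument: the paper's proof simply takes a maximal independent subset and runs the in-place symplectic Gram--Schmidt of steps 10--17 of Algorithm \ref{alg-red}, which is exactly the iterative form of your recursive peeling of hyperbolic pairs $(P_{t+1},P_{t+2})$ and restriction to their commutant $S'$. Your version spells out the invariance of the radical under the recursion, a point the paper leaves implicit, but the underlying decomposition is the same.
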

\begin{proof}
First find a maximal independent subset \(P_1,\ldots,P_s\subseteq S\), then use the similar manner as steps 10-17 of Algorithm \ref{alg-red} and exchange the subscripts if necessary.
\end{proof}

\begin{lemma}\label{normalset}
Suppose \(S\) is a set of Pauli operators that satisfies \(\forall P,Q\in S,\,\underline{PQ}\in S\). \(\{P_1,\ldots,P_s\}\) is a maximal independent \(t_1\)-normal set of \(S\) and \(\{Q_1,\ldots,Q_s\}\) is a maximal independent \(t_2\)-normal set of \(S\). Then \(t_1=t_2\).
\end{lemma}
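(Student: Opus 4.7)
The plan is to recast the statement in the GF(2) symplectic representation of Pauli operators reviewed in Section \ref{sec-stabilizer}, and to show that the parameter $t$ is in fact an intrinsic invariant of $S$, namely the dimension of the radical of the induced symplectic form on the linear span of $S$. Both $t_1$ and $t_2$ will then equal this same quantity, forcing $t_1 = t_2$.

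First I would pass to unsigned Pauli operators: the notions of independence and commutation both depend only on the vector representations $v(\cdot) \in \mathbb{F}_2^{2n}$, and $v(P) = v(\underline{P})$ by definition. Since $\underline{PQ} \in S$ for all $P, Q \in S$ and $v(\underline{PQ}) = v(P) + v(Q)$, the image $V = \{v(P) : P \in S\}$ is a linear subspace of $\mathbb{F}_2^{2n}$. The standard symplectic form $\omega$ on $\mathbb{F}_2^{2n}$, which evaluates to $0$ on commuting pairs and to $1$ on anti-commuting pairs, restricts to a bilinear form on $V$; let $R \subseteq V$ denote its radical, i.e.\ the subspace of vectors $\omega$-orthogonal to all of $V$. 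Clearly $\dim R$ depends only on $S$ and not on any chosen generating set.

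Next I would show that for any maximal independent $t$-normal set $\{P_1, \ldots, P_s\}$, the vectors $v(P_1), \ldots, v(P_t)$ form a basis of $R$. Condition (1) of Definition \ref{t-normal} gives the inclusion $v(P_i) \in R$ for $i \le t$, since $P_1, \ldots, P_t$ commute with each $P_j$ and hence, by bilinearity of $\omega$, with every element of $\mathrm{span}(v(P_1), \ldots, v(P_s)) = V$. For the reverse inclusion, given an arbitrary $v = \sum_{i=1}^s c_i\, v(P_i) \in R$, I would pair $v$ via $\omega$ with each ``partner'' vector $v(P_{t+2k-1})$ or $v(P_{t+2k})$: by the $t$-normal structure this coupling isolates a single coefficient $c_{t+2k}$ or $c_{t+2k-1}$, forcing every coefficient with index $> t$ to vanish. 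Thus $v \in \mathrm{span}(v(P_1), \ldots, v(P_t))$, giving $t = \dim R$.

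Applying this identity to both decompositions yields $t_1 = \dim R = t_2$. The main subtlety, which I would handle carefully at the outset, is the transition from signed Pauli operators (as used throughout Algorithm \ref{alg-red} and its analysis) to the $\mathbb{F}_2^{2n}$ representation: one must verify that independence, commutation, and the closure property $\underline{PQ} \in S$ all descend cleanly through $v(\cdot)$, so that the radical-of-restricted-symplectic-form viewpoint is legitimate. Once that identification is in place, the result reduces to the standard linear-algebraic fact that the dimension of the radical of a bilinear form is a basis-independent invariant.
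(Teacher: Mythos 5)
Your proposal is correct, and it takes a genuinely different route from the paper's. The paper argues by contradiction: assuming $t_1<t_2$, it picks some $Q$ among $Q_1,\ldots,Q_{t_2}$ with a nonzero component in $\textup{span}(P_{t_1+1},\ldots,P_s)$, forms the set $Y$ of symplectic ``partners'' of those indices, and shows by a parity count that every $Q_i$ has an even number of components from $\{P_j\mid j\in Y\}$ --- a property preserved under products, which contradicts the fact that $\{Q_i\}$ spans $S$ and hence must represent each single $P_j$, $j\in Y$. Your argument instead identifies $t$ outright as $\dim R$, where $R$ is the radical of the symplectic form restricted to $V=v(S)$: condition (1) of Definition \ref{t-normal} puts $v(P_1),\ldots,v(P_t)$ in $R$, and pairing an arbitrary $v=\sum_i c_iv(P_i)\in R$ against $v(P_{t+2k-1})$ isolates $c_{t+2k}$ (and then against $v(P_{t+2k})$ isolates $c_{t+2k-1}$, since the even-indexed coefficients are already known to vanish), so $R=\textup{span}(v(P_1),\ldots,v(P_t))$ and $t=\dim R$ is basis-independent. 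The key prerequisites you flag --- that $V$ is an $\mathbb{F}_2$-subspace because $v(\underline{PQ})=v(P)+v(Q)$ and $S$ is product-closed, and that commutation and independence descend through $v(\cdot)$ --- are exactly the conventions the paper sets up in Section \ref{sec-stabilizer}, so the reduction is legitimate. What your approach buys is a cleaner and more reusable statement (the first $t$ elements of any maximal independent $t$-normal set form a basis of the radical, which is in fact implicitly what the optimality proof of Theorem \ref{opt-local} needs); what the paper's approach buys is self-containedness, avoiding any appeal to the invariance of the radical of a bilinear form, at the cost of a longer combinatorial case analysis.
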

\begin{proof}
Suppose \(t_1<t_2\). Since both \(\{P_i\}\) are maximal independent set, each \(Q_i\) can be represented as a combination of \(P_1,\ldots,P_s\). Note that there must be a \(Q\in\{Q_1,\ldots,Q_{t_2}\}\) that has components from \(P_{t_1+1},\ldots,P_s\) (otherwise \(\{Q_1,\ldots,Q_{t_2}\}\) can be represented by only \(\{P_1,\ldots,P_{t_1}\}\) and thus cannot be independent). Let this \(Q\) be \(Q_1\) and 
\[Q_1=\underline{Q'_1\prod_{i\in X} P_{i}}\]
where \(\emptyset \neq X\subseteq \{t_1+1,\ldots,s\}\) and \(Q'_1\) is composed of elements from \(\{P_1,\ldots,P_{t_1}\}\). Define the set \(Y\) to be
\begin{equation}
\begin{split}
Y=\{i| \exists j\in X, k\geq 1,\, (i,j)&=(t_1+2k-1,t_1+2k) \\
\vee\,\, (j,i)&=(t_1+2k-1,t_1+2k)\}
\end{split}
\end{equation}
Since \(X\) is non-empty, \(Y\) is also non-empty. Then, suppose each \(Q_i,i\geq 2\) has the form 
\[Q_i=\underline{Q'_i\prod_{j\in Y_i} P_j}\]
where \(Y_i\subseteq Y\) and \(Q'_i\) is composed of elements from \(\{P_k|k\notin Y \}\).

First note that \(|Y_i|\) must be even. Note that all \(Q_i,i\geq 2\) commute with \(Q_1\), which means \(\underline{\prod_{j\in Y_i} P_j}\) commutes with \(\underline{\prod_{j\in X}P_j}\). And whether \(\underline{\prod_{j\in Y_i} P_j}\) commutes with \(\underline{\prod_{j\in X}P_j}\) only depends on the parity of how many pairs of \((P_j,P_k), j\in X, k\in Y_i\) that are anti-commute. On the other hand, by the definition of \(Y\), we have 
\begin{equation}
\forall k\in Y, \exists!j\in X,\quad P_j \textup{ anti-commutes with } P_k
\end{equation} 
This implies that \(|Y_i|\) must be even to ensure that there are even pairs of \((P_j,P_k), j\in X, k\in Y_i\) that are anti-commute.

Besides, \(|X\cap Y|\) is always even. This is because, if \(z=t_1+2k\in X\cap Y\) then \(z-1=t_1+2k-1\in X\cap Y\), and if \(z=t_1+2k-1\in X\cap Y\) then \(z+1=t_1+2k\in X\cap Y\). Thus all elements in \(X\cap Y\) can be paired to \((2k-1,2k)\) for some \(k\geq 1\). 

Thus we conclude that for all \(i\), \(Q_i\) has exactly even components from \(P_Y=\{P_j|j\in Y\}\). Note that, \(\{Q_i\}\) is maximal independent, which means for any \(P\in P_Y\), \(P\) is representable by a combination of \(\{Q_i\}\). But any \(Q_i\) has even components from \(P_Y\), so that the combination of \(\{Q_i\}\) cannot represent \(P\). Contradiction!

Thus \(t_1\geq t_2\), and similarly we also have \(t_2\geq t_1\), which means \(t_1=t_2\).
\end{proof}

\end{document}